\def\final{0}
\def\notes{0}
\def\informal{1}
\def\lip{1}
\def\appen{1}
\definecolor{DarkGreen}{rgb}{0.1,0.5,0.1}
\definecolor{DarkRed}{rgb}{0.5,0.1,0.1}
\definecolor{DarkBlue}{rgb}{0.1,0.1,0.5}
\newcommand{\mynote}[1]{\marginpar{\tiny\sf #1}}
\newcommand{\mynote}[1]{}
\newcommand{\jnote}[1]{\mynote{\color{DarkGreen}Jon: {#1}}}
\newcommand{\snote}[1]{\mynote{\color{blue}Steven: {#1}}}
\newcommand{\rynote}[1]{\mynote{\color{red}Ryan: {#1}}}
\newcommand{\anote}[1]{\mynote{\color{DarkRed}Aaron: {#1}}}
\newcommand{\Prob}[2]{\underset{#1}{\mathbb{P}}\left[ #2 \right]}
\newcommand{\Exp}[2]{\underset{#1}{\mathbb{E}}\left[ #2 \right]}
\newcommand{\bs}{\mathbf{s}}
\newcommand{\bby}{\mathbf{y}}
\newcommand{\bbx}{\mathbf{x}}
\newcommand{\bbz}{\mathbf{z}}
\newcommand{\utt}{^{(t)}}
\newcommand{\bl}{^\bullet}
\newcommand{\eps}{\varepsilon}
\newcommand{\Lap}{\text{Lap}}
\newcommand{\pigd}{\texttt{P-GD}}
\newcommand{\psrr}{\texttt{PSRR}}
\newcommand{\sbr}{\texttt{P-BR}}
\newcommand{\med}{\texttt{FlowToll}}
\newcommand{\gd}{\texttt{GD}}
\newcommand{\bt}{\mathcal{M}_{\text{BT}}}
\newcommand{\ML}{\mathcal{M}_L}
\newcommand{\pcon}{\texttt{P-CON}}
\newcommand\N{\mathbb{N}}
\newcommand\R{\mathbb{R}}
\newcommand{\cA}{\mathcal{A}}
\newcommand{\cB}{\mathcal{B}}
\newcommand{\cD}{\mathcal{D}}
\newcommand{\cF}{\mathcal{F}}
\newcommand{\cG}{\mathcal{G}}
\newcommand{\cL}{\mathcal{L}}
\newcommand{\cM}{\mathcal{M}}
\newcommand{\cR}{\mathcal{R}}
\newcommand{\cS}{\mathcal{S}}
\newcommand{\cX}{\mathcal{X}}
\newcommand{\polylog}{\mathrm{polylog}}
\newcommand{\bits}{\{0,1\}}
\newcommand{\from}{:}
\newcommand{\argmin}{\arg\!\min}
\newcommand{\argmax}{\arg\!\max}
\newcommand{\OPT}{\text{OPT}}
\algnewcommand\algorithmicinput{\textbf{Input:}}
 \algnewcommand\INPUT{\item[\algorithmicinput]}
 \algnewcommand\algorithmicoutput{\textbf{Output:}}
 \algnewcommand\OUTPUT{\item[\algorithmicoutput]}
\DeclareMathOperator*{\myargmin}{\mathrm{argmin}}
\newtheorem{theorem}{Theorem}[section]
\newtheorem{lemma}[theorem]{Lemma}
\newtheorem{remark}[theorem]{Remark}
\newtheorem{corollary}[theorem]{Corollary}
\theoremstyle{definition}
\newtheorem{definition}[theorem]{Definition}
\newtheorem{assumpt}[theorem]{Assumption}
\begin{document}
\fi

\title{Inducing Approximately Optimal Flow Using Truthful Mediators}

\ifnum\final=0
\author{Ryan Rogers \and Aaron Roth\thanks{University of Pennsylvania Department of Computer and Information Sciences. Partially supported by an NSF CAREER award, NSF Grants CCF-1101389 and CNS-1065060, and a Google Focused Research Award. \href{mailto:aaroth@cis.upenn.edu}{aaroth@cis.upenn.edu}} \and Jonathan Ullman\thanks{Columbia University Department of Computer Science.  Supported by a Junior Fellowship from the Simons Society of Fellows.  Email: \href{mailto:jullman@cs.columbia.edu}{jullman@cs.columbia.edu}.} \and Zhiwei Steven Wu}
\else
\markboth{Rogers et al.}{Inducing Approximately Optimal Flow Using Truthful Mediators}
\author{RYAN ROGERS
\affil{University of Pennsylvania}
AARON ROTH
\affil{University of Pennsylvania}
JONATHAN ULLMAN
\affil{Columbia University}
ZHIWEI STEVEN WU
\affil{University of Pennsylvania}
}
\terms{Mechanism Design, Differential Privacy, Algorithms}

\keywords{Large games; differential privacy; routing games.} 

\acmformat{Ryan Rogers, Aaron Roth, Jonathan Ullman and Zhiwei Steven Wu, 2015. Inducing Approximately Optimal Flow Using Truthful Mediators.}

\begin{bottomstuff}
This work is supported in part by NSF grants CCF-1101389 and CNS-1253345.

Author's addresses: Ryan Rogers, Applied Math and Computational Science Department,
University of Pennsylvania; Aaron Roth,
Computer and Information Science Department, University of Pennsylvania; Jonathan Ullman, Department of Computer Science, Columbia University; Zhiwei Steven Wu, Computer and Information Science Department, University of Pennsylvania.

\end{bottomstuff}
\fi

\ifnum\final=0
\begin{document}
\maketitle
\fi

\begin{abstract}
We revisit a classic coordination problem from the perspective of mechanism
design: how can we coordinate a social welfare maximizing flow in a network
congestion game with selfish players? The classical approach, which
computes tolls as a function of known demands, fails when the demands are
unknown to the mechanism designer, and naively eliciting them does not
necessarily yield a truthful mechanism. Instead, we introduce a \emph{weak
mediator} that can provide suggested routes to players and set tolls as a
function of reported demands.  However, players can choose to ignore or
misreport their type to this mediator.  Using techniques from differential
privacy, we show how to design a weak mediator such that it is an
asymptotic ex-post Nash equilibrium for all players to truthfully report
their types to the mediator and faithfully follow its suggestion, and that
when they do, they end up playing a nearly optimal flow. Notably, our solution
works in settings of incomplete information even in the absence of a prior
distribution on player types. Along the way, we develop new techniques for
privately solving convex programs which may be of independent interest.
\end{abstract}

\maketitle


\ifnum\final=0
\vfill
\newpage
\tableofcontents

\vfill
\newpage

\fi

\section{Introduction}
Large, atomic traffic routing games model the common scenario in which $n$ agents
(say, residents of a city) must choose paths in some graph (the road network)
to route a unit of flow (drive to work) between their target source/sink
pairs. In aggregate, the decisions of each of these agents cause congestion
on the edges (traffic), and each agent experiences a cost equal to the sum of
the latencies of the edges she traverses, given the decisions of everyone
else. The latencies on each edge are a function of the congestion on that
edge.

This widely studied class of games presents several well known challenges:
\begin{enumerate}
\item First, for the social welfare objective, the price of anarchy is
    unboundedly large when the latencies can be arbitrary convex functions.

\item Second, in atomic routing games, equilibria are not unique, and hence
    equilibrium selection is an important problem.

\item Finally, as in most large games, players will be generally unaware of
    the types of their opponents, and so it is important to understand these games
    in settings of incomplete information.
\end{enumerate}

One way to address the first challenge is to introduce carefully selected \emph{tolls} on the
edges, which modifies the game and decreases the price of
anarchy. Indeed, so called \emph{marginal cost tolls} make the socially
optimal routing a Nash equilibrium.  The marginal cost toll on each edge
charges each agent the cost that she imposes on all other agents.
However, in atomic congestion games with marginal cost tolls, the socially
optimal routing is not necessarily the only Nash equilibrium routing, and so
the price of anarchy can be larger than 1, and the coordination problem is
still not solved.  Moreover, because it is difficult to charge agents tolls
\emph{as a function of what others are doing} (as the marginal cost tolls
do), there is a large literature that considers the problem of finding
\emph{fixed} tolls that induce the optimal routing, under various conditions
\cite{CDR03,FJM04,KK04,Fle05,Swa07,FKK10}

This literature, however, assumes the agents' source/sink pairs are known,
and computes the tolls as a function of this information. In this paper we
instead take a mechanism design approach---the demands of the agents must be
elicited, and agents may misrepresent their demands if it is advantageous to
do so. Compared to standard mechanism design settings, our mechanism is
somewhat restricted: it can only set anonymous tolls, and cannot require
direct payments from the agents, and it also cannot force the agents to take
any particular route. Because of these limitations, standard tools like the
VCG mechanism do not apply. Instead, we approach the problem by introducing a
\emph{weak mediator} which also solves the 2nd and 3rd problems identified
above---it solves the equilibrium selection problem, even in settings of
incomplete information. The solutions we give are all \emph{approximate}
(both in terms of the incentives we guarantee, and our approximation to the
optimal social welfare), but the solution approaches perfect as the game
grows large.

Informally, a weak mediator is an intermediary with whom agents can choose to
interact with.  This leads to a new \emph{mediated game}, related to the original routing game.
In our setting, the weak mediator elicits the
types of each agent. Based on the agents reports, it fixes constant tolls to
charge on each edge, and then suggests a route for each agent to play. However, agents
are free to act independently of the mediator.  They need not report
their type to it honestly, or even report a type at all. They are also not obligated
to follow the route suggested by the mediator, and can deviate from it in
arbitrary ways. Our goal is to design a mediator that incentivizes ``good
behavior'' in the mediated game---that agents should truthfully report their type to the mediator,
and then faithfully follow its suggestion. Moreover, we want that when agents
do this, the resulting routing will be socially optimal.

\jnote{Would it make sense to have a more explicit (but still informal) statement of our main theorem in the intro?  As it is we don't get to the main theorem until page 10.}
Our main result is that this is possible in large routing games with convex
loss functions. By large, we mean both that the number of players $n$ is
large, and that the latency functions are Lipschitz continuous---i.e.~that no
single agent can substantially affect the latency of any edge via a
unilateral deviation. We give a weak mediator that makes ``good behavior'' an
approximate \emph{ex-post} Nash equilibrium---i.e.~a Nash equilibrium in
\emph{every} game that might be induced by realizations of the agents types.
This is an extremely robust solution concept that applies even when agents
have no distributional knowledge of each other's types.  In the limit as $n$
goes to infinity, the approximate equilibrium becomes exact. The mediator
also implements an approximately optimal routing, in that the welfare of the
suggested routing is suboptimal by an additive term that is sublinear in $n$.
Hence, if the cost of the optimal routing grows linearly, or nearly linearly
in $n$, then the approximately optimal flow achieves a fraction of the
optimal social welfare that is arbitrarily close to $1$.

\subsection{Our Techniques and Main Results}
At a high level, the approach we take is to design a mediator which takes as
input the reported source/destination pairs of each agent, and as a function
of those reports:
\begin{enumerate}
\item Computes the optimal routing given the reported demands, and
\item Computes fixed tolls that make this routing a Nash
    equilibrium, and finally
\item Suggests to each player that they play their part of this optimal
    routing.
\end{enumerate}
However, implementing each of these steps straightforwardly does not make good behavior an equilibrium in general. Agents may hope to gain in two
ways by misreporting their type: they may hope to change the tolls charged on
the path that they eventually take, and they may hope to change the
algorithm's suggestions to other players, to change the edge congestions.
Simply because the game is large, and hence each player has little direct
effect on the costs of other players does not necessarily mean that no
player's report can have large effect on an \emph{algorithm} which is
computing an equilibrium (see e.g. \cite{KPRU14} for an example).

To address this problem, we follow the approach taken in \cite{KPRU14,
RR14} and compute the optimal routing and tolls using \emph{joint
differential privacy}.  Informally, joint differential privacy guarantees
that if any agent unilaterally misreports her demand, then it has only a
small effect on the routes taken by \emph{every other agent}, as well as
on the tolls. (It of course has a very large effect on the route suggested to
that agent herself, since she is always given a route between her reported
source/sink pairs!) As we show, this is sufficient to guarantee that an agent
cannot benefit substantially by misreporting her demand. Assuming the other
agents behave honestly---meaning they report their true demand and follow
their suggested route---then the fact that the algorithm \emph{also} is
guaranteed to compute a routing which forms an approximate equilibrium of the
game, given the tolls, guarantees that agents cannot do substantially better
than also playing honestly, and playing their part of the computed
equilibrium.

In order to do this, we need to develop new techniques for convex
optimization under joint differential privacy. In particular, in order to
find the socially optimal flow privately, we need the ability to privately
solve a convex program with an objective that is not linearly separable
among players, and hence one for which existing techniques \cite{HHRW14} do
not apply.

We now informally state the main theorem of this paper.  It asserts that
there is a mediator 
that incentivizes good behavior as an ex-post Nash equilibrium, while
implementing the optimal flow. Here we assume that the latency functions on the edges are bounded by the number of players $n$ and are \ifnum\lip=0$\Theta(1)$-\fi Lipschitz continuous\ifnum\lip=0,\footnote{We believe this is the most natural parameter regime for a large game.  If the latency incurred when all $n$ players use an edge is at most $n$, then the average contribution of each player to the congestion is at most $1$.  Assuming that the latency is $\Theta(1)$-Lipschitz implies that no player's contribution is more than a constant fraction larger than the average.}\fi  although our formal theorem statement gives more general parameter tradeoffs.
\begin{theorem}[Informal] \label{thm:maininformal}
For large\footnote{The formal notion of largeness we require is detailed in Assumption~\ref{assumption}.} routing games with $n$ players and $m$ edges, there exists a mediator $M$ such that good behavior is an $\eta_{eq}$-approximate Nash equilibrium in the mediated game where
\ifnum\final=1
$
\else
$$
\fi
\eta_{eq} = \tilde O\left( m^{3/2} n^{4/5} \right)
\ifnum\final=1
$
\else
$$
\fi
and when players follow good behavior, the resulting flow is an $\eta_{opt}$-approximately optimal average flow for the original routing game where
\ifnum\final=1
$
\else
$$
\fi
\eta_{opt} = \tilde O\left( m n^{4/5} \right).
\ifnum\final=1
$
\else
$$
\fi
 \label{thm:main-veryinformal}
\end{theorem}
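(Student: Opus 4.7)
The plan is to build the mediator $M$ so that, given the vector of reported source/sink pairs, it runs three steps all coupled under joint differential privacy (JDP): (i) approximately solve the convex program whose optimum is the socially optimal flow $f^*$; (ii) from $f^*$, compute fixed, anonymous edge tolls $\tau_e$ (of marginal-cost form) that turn the toll-modified atomic routing game into one for which $f^*$ is an approximate pure-strategy Nash equilibrium; (iii) privately path-decompose $f^*$ and suggest to each agent a single unit-flow path between her reported endpoints. Since all outputs delivered to \emph{other} agents are obtained from a JDP computation, one agent's unilateral misreport can shift the distribution over tolls and other agents' suggestions by only $(\eps,\delta)$.

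For the equilibrium claim I would fix an arbitrary agent $i$ and decompose her gain from any deviation from good behavior into two pieces. First, a ``private'' piece: assuming every other agent plays honestly, the realized congestion on each edge is essentially the one induced by $f^*$, up to a $\tilde O(1)$ perturbation from sampling agent $i$'s own path; combined with Lipschitz continuity of the latencies and the fact that $f^*$ together with $\tau_e$ forms an approximate equilibrium of the toll game, no alternative path is better than her suggested one by more than $\eta_{eq}$. Second, an ``information'' piece: by JDP, changing $i$'s report to the mediator perturbs the marginal distribution over \emph{every other} agent's suggestion and over the tolls by $(\eps,\delta)$, and because the latencies and the graph are bounded, this translates (by a standard $(\eps,\delta)$-to-utility lemma as in \cite{KPRU14,RR14}) into only an $\tilde O(\eps n + \delta n)$ change in $i$'s cost. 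Tuning $(\eps,\delta)$ so that both pieces are of the same order will yield the stated $\tilde O(m^{3/2}n^{4/5})$ bound.

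For the approximate-optimality claim I would argue that the flow $f^*$ produced in step (i) has social cost within $\tilde O(mn^{4/5})$ of $\OPT$, and that the randomized path decomposition in step (iii) distorts this only by an additive term which is sublinear in $n$ thanks to Lipschitz continuity of the latencies and a concentration argument over the independent path samples. The hard part will be step (i): designing a JDP algorithm that minimizes the social cost $\sum_e x_e \ell_e(x_e)$, an objective that is \emph{not} separable across agents, so the techniques of \cite{HHRW14} do not apply directly. My plan is to lift the problem to the space of per-agent fractional routings and run a noisy mirror-descent / projected-gradient scheme: at each iteration the gradient depends only on the aggregate edge flows, so a single agent's change perturbs each coordinate by $O(1/n)$, and Gaussian noise of scale $\poly(m,\log(1/\delta))/(n\eps)$ per iteration suffices. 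JDP across $T$ iterations follows from advanced composition together with a ``billboard''-style release of the aggregate iterates; the exponent $n^{4/5}$ arises from balancing $T$ against the accumulated noise and the optimization error of $T$ steps of noisy convex minimization.
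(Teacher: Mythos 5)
Your high-level architecture — a JDP-computed approximately optimal flow, marginal-cost tolls, a private path decomposition, and the ``JDP + equilibrium $\Rightarrow$ approximate ex-post Nash'' translation from \cite{KPRU14,RR14} — matches the paper's. But two substantive pieces of the paper's argument are missing from your plan, and their absence means you would not reach the claimed $\eta_{eq} = \tilde O(m^{3/2}n^{4/5})$.

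First, you assert as a ``fact'' that the computed flow $f^*$ together with the tolls forms an approximate equilibrium of the tolled game. When $f^*$ is only $\alpha$-approximately optimal, the potential-function argument gives that \emph{no single} player can gain more than $n\alpha$ by deviating, i.e. $f^*$ is only an $n\alpha$-approximate Nash flow. With $\alpha = \tilde O(\sqrt{n}m^{5/4}/\sqrt{\eps})$ and $\eps = \sqrt m/n^{1/5}$, this is $\tilde O(mn^{8/5})$, which is superlinear in $n$ and far worse than the target. The paper gets around this by a separate argument: it bounds the \emph{number} of players who are $\rho$-unsatisfied (roughly $n\alpha/\rho$) and then reroutes exactly those players along best responses (the $\sbr$ subroutine), arguing via the Lipschitz condition that this rerouting does not make the already-satisfied players unsatisfied. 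The balance $\rho \approx \sqrt{mn\gamma\alpha}$ is what produces the $\sqrt{mn\alpha}$ term in $\eta_{eq}$. Without this step you cannot get from $\alpha$-optimality to the stated equilibrium guarantee.

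Second, you do not address how the tolls themselves are made private. The tolls are broadcast identically to all players, so they must satisfy ordinary DP (not just JDP) in the reported demands. Your step (i) makes the per-player suggested routes JDP, but the aggregate congestion $\sum_i x_i$ on which the tolls depend is not a jointly-private quantity that can be released directly; the paper interposes a Laplace-noising step ($\pcon$) and a JDP-composed-with-DP lemma to release a noisy congestion $\hat\bby$ from which the constant tolls $\hat\tau = \tau^*(\hat\bby)$ are computed. This also means the tolls are only approximate, contributing yet another source of unsatisfied players that the rerouting step must absorb.

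Finally, a more minor but real concern about your proposed step (i): you claim that a single agent's change perturbs the gradient by $O(1/n)$ per coordinate, so Gaussian noise of scale $\poly(m,\log(1/\delta))/(n\eps)$ suffices. The gradient of the social cost with respect to $x_{i,e}$ involves $\ell_e'(y_e)$ and $y_e\ell_e''(y_e)$, and Assumption~\ref{assumption} bounds $\ell_e$ but provides no bound on $\ell_e''$, so the gradient's sensitivity is not controlled by the stated assumptions. The paper sidesteps this entirely by introducing auxiliary congestion variables $\bby$ with equality constraints $y_e = \sum_i x_{i,e}$ and running gradient descent on the \emph{Lagrangian}: the gradient in $\bbx$ is then simply $-\lambda$, a bounded, data-independent quantity, and all the privacy cost is paid by a single exponential-mechanism selection of the most-violated constraint per round. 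Your direct noisy-gradient route would need an additional assumption (or a smoothing step) to control the curvature of $\ell_e$ before the claimed noise scale is justified.
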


\ifnum\informal=1
To interpret this theorem, let us write $\mathrm{OPT}$ to denote the \emph{average player latency} in the socially optimal flow. Note that in this parameter regime (latency functions which are bounded by $n$ and \ifnum\lip=0 $\Theta(1)$-\fi Lipschitz), if the value OPT increases at a rate faster than $n^{4/5}$ as the population $n$ grows, then our mediator yields a flow that obtains average latency $(1+o_n(1))\cdot \mathrm{OPT}$.\footnote{Here, $o_n(1)$ denotes a function of $n$ that approaches $0$ as $n \rightarrow \infty$.}  We view this condition on OPT as very mild.  For example, if the network is fixed and all of the latency functions have derivatives bounded strictly away from zero, then the optimal average latency will grow at a rate of $\Omega(n)$. Our results hold \emph{even} when the optimal average latency grows sublinearly. Similarly, in this setting, for a $1 - o_n(1)$ fraction of individuals the latency of their best response route also grows at a rate of $\Omega(n)$, and hence our mediator guarantees that for a $(1-o_n(1))$-fraction of individuals, they are playing an $(1-o_n(1))$-approximate best-response (i.e. they cannot decrease their latency by more than a $1-o_n(1)$ multiplicative factor by deviating from the mediator's suggestion).

\else
We note that with the mediator $M$ from the above theorem, as $n \to \infty$, we get an asymptotically exact Nash equilibrium in $\Gamma_M$.  Further, suppose that when we fix $m$, allow the number of players $n$ to grow large and the value OPT of the socially optimal flow grows faster than $n^{4/5}$, then the flow our algorithm outputs has cost $(1+o_n(1))\cdot \mathrm{OPT}$.\footnote{Here, $o_n(1)$ denotes a function of $n$ that approaches $0$ as $n \rightarrow \infty$.}
\fi

\subsection{Related Work}
There is a long history of using tolls to modify the equilibria in congestion
games (see e.g.~\cite{BMW56} for a classical treatment). More recently, there
has been interest in the problem of computing fixed tolls to induce optimal
flows at equilibrium in various settings, usually in \emph{non-atomic}
congestion games (see e.g. \cite{CDR03,FJM04,KK04,Fle05,Swa07,FKK10} for a
representative but not exhaustive sample). These papers study variations on
the problem in which e.g. tolls represent lost welfare \cite{CDR03}, or in
which agents have heterogenous values for money \cite{FJM04}, or when agents
are atomic but flow is splittable \cite{Swa07}, among others. Tolls in atomic
congestion games have received some attention as well (e.g.~\cite{CKK06}),
though to a lesser degree, since in general atomic congestion games, tolls do
not suffice to implement the optimal flow as the unique equilibrium). These
works all assume that agent demands are known, and do not have to be elicited
from strategic agents, which is where the present paper departs from this
literature. Recently, Bhaskar et al. \cite{BLSS14} consider the problem of
computing tolls in a query model in which the latency functions are unknown
(demands are known), but not in a setting in which agents are assumed to be
behaving strategically to manipulate the tolls.

Modifying games by adding ``mediators'' is also well studied,
although what exactly is meant by a mediator differs from paper to paper (see
e.g. \cite{mediators1,mediators2,RT07,AMT09,PP10} for a representative but
not exhaustive sample). The ``weak mediators'' we study in this paper were
introduced in \cite{KPRU14,RR14}, who also use differentially private equilibrium computation to achieve incentive properties. Our work differs
from this prior work in that \cite{KPRU14,RR14} both seek to implement an
equilibrium of the given game, and hence do not achieve welfare guarantees
beyond the price of anarchy of the game. In contrast, we use tolls to modify
the original game, and hence implement the socially optimal routing as an
equilibrium.

The connection between differential privacy, defined by \cite{DMNS06},
and mechanism design was first made by \cite{MT07}, who used it to give improved welfare guarantees for
digital goods auctions. It has since been used in various contexts,
including to design mechanisms for facility location games and general
mechanism design problems without money \cite{NST12}. The connection
between \emph{joint} differential privacy and mechanism design (which
is more subtle, and requires that the private algorithm also compute
an equilibrium of some sort) was made by \cite{KPRU14} in the context
of mediators, and has since been used in other settings including
computing stable matchings \cite{KMRW15}, aggregative
games~\cite{CKRW14}, and combinatorial auctions \cite{HHRW14}.

\section{Model}
\subsection{The Routing Game Problem}
In this section we introduce the atomic unsplittable routing game
problem that we study.  An instance of a routing game $\Gamma =
(G,\ell,\bs)$ is defined by
\begin{itemize}
\item[$\bullet$] A graph $G = (V,E)$.  We use $m = |E|$ to denote the number of edges.
\item[$\bullet$] A \emph{latency function} $\ell_e \from \R_{\geq 0} \to \R_{\geq 0}$ for each edge $e \in E$.  Each latency function maps the number of players who send flow along that edge to a non-negative loss.
\item[$\bullet$] A set of $n$ source-destination pairs $\bs = (s_1,\dots,s_n)$.  Each pair $s_i = (s_i^1, s_i^2) \in \cS\equiv V \times V$ represents the \emph{demand of player $i$}.  We use $n$ to denote the number of players.
\jnote{We should go through and try to be consistent about player/agent, unless using the two interchangeably is standard in AGT.}
\rynote{Will try to stick with players}
\end{itemize}

The objective is to (approximately) minimize the total latency
experienced by all the players in the network.  Let
$\cF(\bs)=(\cF(s_1),\cdots,\cF(s_n))$ be the set of \emph{feasible
  individual flows for demand $\bs$} and $\cF = \{\cF(\bs): \bs \in
\cS\}$ be the set of all feasible individual flows.\jnote{Try to be
  consistent about how the prose relates to the mathematical objects.
  Previously $\cF(s)$ was the set of ``feasible individual flows''
  (making no reference to $\bs$), and $\cF$ had no name}
  \rynote{Better now?}Notice that
an element of $\cF(\bs)$ is a vector of $n$ separate flows, one for
each player.  That is, an individual flow is specified by $n\times m$
variables representing the amount of flow by each player routed on
each edge.  Specifically, given a graph $G$, $\cF(\bs)$ is the set of unsplittable
flows $\mathbf{x} = (x_{i,e})_{i \in [n], e \in E} \in \bits^{n \times
  m}$ such that
\begin{align}
	& b_{i,u} =\left\{\begin{array}{lr}
      1 & u = s^1_i \\
      -1 & u = s^2_i \\
      0 & \text{ else }
\end{array}\right.  \label{eq:demand}\\
& b_{i,u} + \sum_{v: (u,v) \in E} x_{i,(u,v)} = \sum_{v: (v,u) \in E}
x_{i,(v,u)}, \; \quad \forall u \in V \quad \forall i \in
[n] \label{eq:conservation}
\end{align}
For a given routing game instance $\Gamma = (G,\ell,\bs)$, we seek a flow $\bbx\in \cF(\bs)$ that minimizes the average latency $\phi(\mathbf{x})$

\begin{align}
 \phi(\mathbf{x}):=\frac{1}{n}\sum_{i=1}^n \sum_{e \in E} x_{i,e} \cdot \ell_e\left( \sum_{i = 1}^n x_{i,e} \right) \label{eq:min_cost_flow}
\end{align}


We will sometimes write $\OPT(\bs) = \phi(\bbx^*)$, were $\bbx^*$ is the minimum average cost flow for the routing game $\Gamma = (G,\ell, \bs)$ when the graph $G$ and latencies $\ell$ are known from context.  In this work we  settle for an approximately minimum average cost flow, which we define below.

\begin{definition}[Approximately Optimal Flow] For a routing game $\Gamma$, and parameter $\eta_{opt} > 0$, a flow $\bbx$ is $\eta_{opt}$-\emph{approximately optimal} if $\bbx \in \cF(\bs)$ and
\ifnum\final=1
$
\else
$$
\fi
\phi( \bbx) \leq \OPT(\bs) + \eta_{opt}.
\ifnum\final=1
$
\else
$$
\fi
  \end{definition}

We are interested in strategic players that want to minimize their individual cost
\begin{equation}
 \phi_i(\bbx)=\sum_{e \in E}  x_{i,e} \cdot \ell_e\left(\sum_{j=1}^n  x_{j,e}\right).
 \label{eq:ind_cost}
\end{equation}
We thus define an approximate Nash flow.
\begin{definition}[Approximate Nash Flow]
 For a routing game $\Gamma$ and parameter $\eta_{eq}>0$, a flow $\hat \bbx$ is an \emph{$\eta_{eq}$-approximate Nash flow} if $\hat\bbx \in \cF(\bs)$ and for every $x_i \in \cF(s_i)$
 \ifnum\final=1
 $
 \else
 $$
 \fi
 \phi_i(\hat\bbx) \leq \phi_i(x_i,\hat\bbx_{-i})+ \eta_{eq} \qquad \forall x_i \in \cF(s_i).
 \ifnum\final=1
 $
 \else
 $$
 \fi
 When $\hat \bbx$ is a $0$-approximate Nash flow, we simply say that it is a \emph{Nash flow}.
\end{definition}

Throughout, we will make the following assumptions about the latency functions.

\begin{assumpt}
  For every edge $e\in E$, the latency function $\ell_e$ is (1) non-decreasing, (2) convex, (3) twice differentiable, (4) bounded by $n$ (i.e.~$\ell_e(n) \leq n$), and (5) $\gamma$-lipschitz (i.e.~$|\ell_e(y) - \ell_e(y')| \leq \gamma |y
    - y'|$ for all $e \in E$) for some \ifnum\lip=0 $\gamma = o(n)$ \else constant $\gamma>0$\fi. 
    \label{assumption}
\end{assumpt}

Item 1 and 2 are natural and extremely common in the routing games
literature.  Item 3 is a technical condition used in our proofs that
can likely be removed.  Item 4 and 5 are the ``largeness conditions''
that ensure no player has large influence on any other's payoff.  \ifnum\lip=0 Here $\gamma$ is an upper bound on the Lipschitz constant for $\ell_e$ which should not grow with $n$.  Hence, we can take $\gamma = \Theta(1)$.  \else If the Lipschitz constant is zero, then we can choose an upper bound parameter $\gamma >0$ in our analysis.\fi   
\jnote{Added a quick summary of how these assumptions are used.}

\subsection{Mediators}
Given an instance $\Gamma = (G, \ell, \bs)$, we would like the players to
coordinate on the social-welfare maximizing flow $\bbx^*$ where $\OPT(\bs) =
\phi(\bbx^*)$. There are two problems: the first is that the optimal flow is
generally not a Nash equilibrium, and the second is that even with knowledge
of everyone's demands, Nash equilibria are not unique and coordination is
a problem. The classical solution to the first problem is to have an overseer
impose edge tolls $\tau$, which are a function of the demands $\bs$ of each player.  This makes $\bbx^*$ a Nash flow for the routing
game instance $\Gamma^{\tau}= (G,\ell^\tau,\bs)$ where
\ifnum\final=1 $ \else $$ \fi
\ell_e^\tau(y) = \ell_e(y) + \tau_e.
\ifnum\final=1 $ \else $$ \fi

However the tolls that cause the optimal flow to be an equilibrium depend on the demands, and so this approach
fails if the overseer does not know $\bs$.  \jnote{We have already
defined ``demands'' in the technical sections and used it several times.  We
use ``types'' in the introduction but not in the technical section.  Can we
just pick one?  I prefer demand since it's more descriptive.}\snote{I vote for demands as well}\rynote{Will verify demands are consistent} A simple
solution would be to elicit the demands from the players, but since the
correct tolls depend on the demands, naively eliciting them may not
lead to a truthful mechanism.

We solve this problem, as well as the equilibrium selection problem mentioned
above, by introducing a \emph{mediator} that takes as input the demand of each
player and outputs a set of tolls for each edge, together with a suggested
route for each player to use. Ideally, the players will report their
demands truthfully, the aggregate of the routes suggested by the mediator
will be a social-welfare maximizing flow $\bbx^*$, agents will faithfully
follow their suggestion, and the tolls will be chosen to make $\bbx^*$ an
(approximate) Nash flow. However, players have the option to deviate from
this desired behavior in several ways: they may not report their demand to the
mediator at all, might report a false demand, or might not follow the
mediator's suggestion once it is given. Our goal in designing the mediator is
to guarantee that players never have significant incentive to
deviate from the desired behavior described above. \jnote{I modified this section a bit.
Someone should check it.}\rynote{Took out mention of ex-post Nash here because we do not define until several paragraphs later}

Formally, introducing the mediator gives rise to a modified game $\Gamma_M =
(G,\ell,\bs,M)$.   The mediator is an algorithm $M:\{\bot\cup\cS  \}^n \to
\cF^n \times \R^m $.  The input from each player is either a demand or a $\bot$
symbol indicating that the player opts out.  The output is a set of routes,
one suggested to each player, together with a collection of tolls, one for
each edge. We write the output as
\ifnum\final=1 $ \else $$ \fi
M(\bs) = \left(\left(M^\cF_i(\bs)\right)_{i\in[n]},M^\tau(\bs)\right).
\ifnum\final=1 $ \else $$ \fi
The edge tolls $M^\tau(\bs) = (M_e^\tau(\bs))_{e \in E}$ that $M$ outputs will
enforce the optimal flow induced by the reported demands.  Note that the
tolls that $M$ outputs to each player are the same (i.e. the players are not
charged personalized tolls; rather there is a single toll on each edge that
must be paid by any player using that edge).

In $\Gamma_M$ each player can \emph{opt-out} of using the mediator, denoted
by the report $\bot$, and then select some way to route from
his source to his destination, or a player can \emph{opt-in} to using the
mediator, but not necessarily reveal her true demand, and then the mediator
will suggest a path $\bbx_i$ to route her unit flow from the reported source
to the destination. Players are free to follow the suggested action, but they
can also use the suggestion as part of an arbitrary deviation, i.e. they
can play any action $f(\bbx_i)$ for any $f: \cF \to \cF$. Thus, the action
set $\cA$ for any player for the game instance $\Gamma_M$ is $\cA = A_1 \cup A_2$ where
$
A_1 = \{ (s',f): s' \in \cS,f:\cF \to \cF\} \quad \text{and} \quad A_2 = \{ (\bot,f) : f \text{ constant }\}.
$

We next define the cost function for each player in $\Gamma_M$, but first we must present some notation.  Let
$\mathbf{F}$ be the set of possible functions $f_i:\cF \to \cF$, where
$f_i(\bbx_i) = (f_{i,e}(x_{i,e}))_{e\in E}$.  We further write
$f_e(\bbx) = \sum_{i=1}^n f_{i,e}(x_{i,e})$ as the new congestion on edge $e$ when players have deviated from $\bbx$ according to functions $f_i$ for $i \in [n]$. We will consider only
randomized algorithms, so our cost is an expectation over outcomes of
$M$.  More formally, the cost $\phi^M$ that each player experiences in
$\Gamma_M$ is defined as
\begin{align*}
&\phi^M: \cS \times \left[(\bot \cup \cS) \times \mathbf{F} \right]^n
\to \R \\ 
&\phi^M(s_i,(\bs',\mathbf{f}) ) := \Exp{(\bbx,\tau) \sim
  M(\bs')}{\sum_{e \in E}
  f_{i,e}(x_{i,e})\left(\underbrace{\ell_e\left(f_e(\bbx)\right)+ \tau_e}_{\ell_e^\tau(f_e(\bbx))}\right)}
\end{align*}
where $s_i$ is player $i$'s true source-destination pair.

We are interested in designing mediators such that \emph{good
behavior} in the mediated game is an ex-post Nash equilibrium, which we define below.

\begin{definition}[Ex-Post Nash Equilibrium]\label{def:expost}
A set of strategies $\{\sigma_i\colon \cS \to \cA\}_{i=1}^n$ forms an
$\eta$-approximate ex-post Nash equilibrium if for every profile of demands
$\bs\in \cS^n$, and for every player $i$ and action $a_i\in \cA$:
\[
\phi^M\left(s_i, (\sigma_i(s_i), \sigma_{-i}(s_{-i})) \right) \leq
\phi^M\left(s_i, (a_i, \sigma_{-i}(s_{-i})) \right) + \eta.
\]
That is, it forms an $\eta$-approximate Nash equilibrium for every
realization of demands.
\end{definition}
\snote{define ex-post (maybe elaborate more) and good behavior here}
\rynote{I think it's good} 

Our goal is to incentivize players to follow \emph{good behavior}---truthfully
reporting their demand, and then faithfully following the suggested action of
the mediator. Formally, the good behavior strategy for player $i$ is $\xi_i(s_i) =
(s_i, \text{id})$ where $s_i$ is $i$'s actual demand, and $\text{id}\colon \cF
\to\cF$ is the identity map.  We write $\xi_i = \xi_i(s_i)$ for the good behavior strategy.

To accomplish this goal, we will design a mediator that is
``insensitive'' to the reported demand of each player.  Informally, if a
player's reported demand does not substantially effect the tolls chosen
by the mediator, or the paths suggested to \emph{other} players, then
a player has little incentive to lie about his demand (of course any
mediator with this property must necessarily allow the path suggested
to agent $i$ to depend strongly on agent $i$'s own reported demand!). We capture this notion of insensitivity using \emph{joint differential privacy}~\cite{KPRU14}, which is defined as follows.
\begin{definition}
  (Joint Differential Privacy \cite{KPRU14}) A randomized algorithm
  $\cM: \cS^n \to \mathcal{O}^n$, where $\mathcal{O}$ is an arbitrary output set for each player, satisfies
  $(\eps,\delta)$-joint differential privacy if for
  every player $i$, every pair $s_i,s_i' \in \cS$, any
  tuple $s_{-i} \in \cS^{n-1}$ and any $B_{-i}
  \subseteq \mathcal{O}^{n-1}$, we have
$
\Prob{}{\cM(s_i,s_{-i})_{-i} \in B_{-i}} \leq e^{\eps} \cdot \Prob{}{\cM(s_i', s_{-i})_{-i} \in B_{-i}} + \delta.
$
\end{definition}

Joint differential privacy (JDP) is a relaxation of the notion of \emph{differential privacy} (DP)~\cite{DMNS06}.  We state the definition of DP below, both for comparison, and because it will be important technically in designing our mediator.
\begin{definition}
  (Differential Privacy \cite{DMNS06}) A randomized algorithm $\cM:
  \mathcal{S}^n \to \mathcal{O}$ satisfies
  $(\eps,\delta)$-differential privacy if for any player $i$, any
  two $s_i, s_i' \in \cS$, any tuple
  $s_{-i} \in \mathcal{S}^{n-1}$, and any $B \subseteq
  \mathcal{O}$ we have $\Prob{}{\cM(s_i,s_{-i}) \in B} \leq e^{\eps} \cdot \Prob{}{\cM(s_i',
  s_{-i}) \in B} + \delta$.
\end{definition}

Note that JDP is weaker than DP, because JDP assumes that the output space of the algorithm is partitioned among the $n$ players, and the output to player $i$ can depend arbitrarily on the input of player $i$, and only the output to players $j \neq i$ must be insensitive to the input of player $i$.  This distinction is crucial in mechanism design settings---the output to player $i$ is a suggested route for player $i$ to follow, and thus should satisfy player $i$'s reported demand, which is highly sensitive to the input of player $i$.  Also note that since our mediator will output the same tolls to every player, the tolls computed by the mediator must satisfy standard DP.

A key property we use is that a JDP mediator that also computes an equilibrium of the
underlying game gives rise to an approximately truthful mechanism.  This
result was first shown in \cite{KPRU14, RR14}, although for simpler models that do not include tolls.
We now state and prove a simple extension of this result that is appropriate for our setting.
\begin{theorem}
\label{thm:privacyIncentives}
 Given routing game $\Gamma = (G,\ell,\bs)$ and upper bound $U$ on the tolls, let $M:(\bot \cup \cS)^n \to \cF^n\times[0,U]^m$ where $M(\bs') = \left(M_i^\cF(\bs'),M^\tau(\bs')\right)_{i\in [n]}$ satisfies
 \begin{enumerate}
  \item $M$ is $(\eps,\delta)$-joint differentially private.
  \item For any input demand profile $\bs$, we have with probability $1-\beta$ that $\bbx = \left(M_i^\cF(\bs)\right)_{i=1}^n$ is an $\eta_{eq}$-approximate Nash flow in the modified routing game $\Gamma^\tau = (G,\ell^M,\bs)$ where
  $$
  \ell_e^M(y) := \ell_e(y) + M^\tau_e(\bs) \quad \forall e \in E.
  $$
 \end{enumerate}
 Then the good
 behavior strategy $\xi = (\xi_1, \ldots, \xi_n)$ forms an
 $\eta$-approximate ex-post Nash equilibrium in $\Gamma_M =
 (G,\ell,\bs,M)$, where 
\ifnum\final=1 $ \else $$ \fi
 \eta = \eta_{eq} + m(U+n)(2\eps + \beta + \delta).
\ifnum\final=1 $ \else $$ \fi
 \label{thm:motivate}
\end{theorem}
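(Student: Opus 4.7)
The plan is to fix a demand profile $\bs \in \cS^n$, a player $i$, and an arbitrary deviation $a_i = (s_i', f_i)$, then show that player $i$'s cost under good behavior exceeds its cost under $a_i$ (with the other players still playing $\xi_{-i}$) by at most $\eta$. The key auxiliary quantity is player $i$'s \emph{best-response cost} in the tolled game, $BR_i(\bbx_{-i},\tau) := \min_{x_i' \in \cF(s_i)} \sum_{e \in E} x_{i,e}'\, (\ell_e(x_{i,e}' + \sum_{j\neq i} x_{j,e}) + \tau_e)$, which depends only on $(\bbx_{-i}, \tau)$. This dependence is what lets joint differential privacy enter: the expectation of $BR_i$ is nearly the same whether we sample $(\bbx,\tau)$ from $M(\bs)$ or from $M(s_i', \bs_{-i})$. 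I will first lower bound the deviation cost by $\mathbb{E}_{M(s_i',\bs_{-i})}[BR_i]$, upper bound the honest cost by $\mathbb{E}_{M(\bs)}[BR_i] + \eta_{eq}$ up to a $\beta$ failure probability, and finally bridge the two using JDP.

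\emph{Step 1 (deviation cost).} Condition on any realization $(\bbx,\tau) \sim M(s_i', \bs_{-i})$. Assuming $f_i(\bbx_i) \in \cF(s_i)$ (the model implicitly requires player $i$ to route from $s_i^1$ to $s_i^2$, and any deviation that fails to do so can be replaced by the best response without increasing cost), the per-sample cost $\sum_e f_{i,e}(x_{i,e})(\ell_e(f_e(\bbx)) + \tau_e)$ is at least $BR_i(\bbx_{-i},\tau)$, and so $\phi^M(s_i, (a_i, \xi_{-i})) \geq \mathbb{E}_{M(s_i',\bs_{-i})}[BR_i(\bbx_{-i},\tau)]$. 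Since $BR_i$ depends only on $(\bbx_{-i},\tau)$ and is bounded by $B := m(U+n)$ (any unit flow uses at most $m$ edges, each contributing at most $n+U$ to tolled latency), the $(\eps,\delta)$-JDP guarantee transfers to expectations: integrating $\Pr[BR_i \geq t] \leq e^\eps \Pr'[BR_i \geq t] + \delta$ over $t \in [0,B]$ yields $|\mathbb{E}_{M(\bs)}[BR_i] - \mathbb{E}_{M(s_i',\bs_{-i})}[BR_i]| \leq (e^\eps - 1 + \delta)B \leq (2\eps + \delta) B$. Hence $\phi^M(s_i,(a_i,\xi_{-i})) \geq \mathbb{E}_{M(\bs)}[BR_i] - (2\eps + \delta) B$.

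\emph{Step 2 (honest cost).} Under $\xi$ the mediator is invoked on $\bs$. By property (2), with probability at least $1-\beta$ the returned flow $\bbx$ is an $\eta_{eq}$-approximate Nash flow of $\Gamma^\tau = (G,\ell^M,\bs)$, which gives the pointwise inequality $\sum_e x_{i,e}(\ell_e(\sum_j x_{j,e}) + \tau_e) \leq BR_i(\bbx_{-i},\tau) + \eta_{eq}$; on the complementary event bound the honest per-sample cost crudely by $B$. Taking expectations: $\phi^M(s_i,\xi) \leq \mathbb{E}_{M(\bs)}[BR_i] + \eta_{eq} + \beta B$. Subtracting the Step 1 bound gives $\phi^M(s_i,\xi) - \phi^M(s_i,(a_i,\xi_{-i})) \leq \eta_{eq} + (2\eps + \beta + \delta) m(U+n) = \eta$, as required.

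The main conceptual move is the decision to route the comparison through the $(\bbx_{-i},\tau)$-measurable quantity $BR_i$: it is the only way to simultaneously (a) apply JDP, which controls only the marginal on what other players see, and (b) invoke the Nash guarantee, which compares $i$'s suggested cost to the best feasible alternative. The main technical obstacle is the JDP-to-expectation transfer in Step 1, which is standard but should be recorded as a preliminary lemma. A secondary subtlety is that an adversarial $f_i$ could in principle produce a flow outside $\cF(s_i)$; this is harmless because either the cost model penalizes incomplete trips, or we simply restrict attention to feasible deviations since the lower bound on deviation cost uses only the feasible best response.
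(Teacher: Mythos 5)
Your proposal is correct and follows essentially the same argument as the paper: both route the comparison through the best-response cost $BR_i$, which is measurable with respect to $(\bbx_{-i},\tau)$ alone, apply JDP to this quantity (with the $e^\eps \le 1+2\eps$ and $\delta$ bookkeeping contributing $m(U+n)(2\eps+\delta)$), use the $\eta_{eq}$-approximate Nash guarantee to relate the honest cost to $BR_i$, and absorb the failure event into a $\beta\, m(U+n)$ term. The only difference is presentational --- the paper chains all the inequalities in one direction from the honest cost to the deviation cost, whereas you bound the two separately and subtract --- and your explicit remark about deviations $f_i$ leaving $\cF(s_i)$ makes the same implicit feasibility assumption the paper does.
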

\begin{proof}
We fix $\bs \in \cS^n$ to be the true source destination of the players. We consider a unilateral deviation $\xi_i'(s_i) = (s_i',f_i')$ for player $i$ to report $s_i'$ and use $f_i'$, which we write as $\xi_i'$.  We write the modified cost function for player $i$ in $\Gamma^{\tau}$ with tolls $\tau_e = M^\tau_e(\bs)$ to be
$$
\phi_i^\tau(x_i,\bbx_{-i}) = \sum_{e\in E} x_{i,e} \left(\ell_e\left(\sum_{j = 1}^n x_{j,e} \right) +\tau_e\right)
$$
We define the best response flow that player $i$ of demand $s_i$ can route given the flows of the other players to be
\ifnum\final=1 $ \else $$ \fi
BR_i^\tau(\bbx_{-i}) = \myargmin_{x_i \in \cF(s_i)}\left\{ \phi^\tau_i(x_i,\bbx_{-i}) \right\}.
\ifnum\final=1 $ \else $$ \fi

We first condition on the event that $M$ gives an $\eta_{eq}$-approximate Nash flow in $\Gamma^{\tau}$.
\begin{align*}
 \phi^M(s_i,(\xi_i,\xi_{-i}) ) = \Exp{(\bbx, \tau) \sim M(\bs) }{\phi_i^\tau(x_i,\bbx_{-i})} \leq \Exp{(\bbx, \tau) \sim M(\bs) }{\phi_i^\tau(BR_i^\tau(\bbx_{-i}),\bbx_{-i})}+ \eta_{eq}
\end{align*}
We then use the fact that $M$ is JDP.  We write $\bs' = (s_i',\bs_{-i})$.
\begin{align*}
 \phi^M(s_i,(\xi_i,\xi_{-i}) )& \leq e^\eps \left( \Exp{(\bbx,\tau)\sim M(\bs')}{\phi_i^\tau(BR_i^\tau(\bbx_{-i}),\bbx_{-i})}\right) + m(U+n)\delta +\eta_{eq}\\
 & \leq \Exp{(\bbx,\tau)\sim M(\bs')}{\phi_i^\tau(BR_i^\tau(\bbx_{-i}),\bbx_{-i})} + m(U+n)\left(2\eps+\delta \right)+\eta_{eq} \\
 & \leq \Exp{(\bbx,\tau)\sim M(\bs')}{\phi_i^\tau(f_i'(x_i),\bbx_{-i})} + m(U+n)\left(2\eps+\delta \right)+\eta_{eq}
\end{align*}
The first inequality comes from using the fact that $M$ is $(\eps,\delta)$-JDP and the fact that $\phi_i^\tau(\bbx) \leq m(U+n)$.  The second inequality uses the fact that $e^\eps \leq 1+ 2\eps$ for $\eps<1$.  The last inequality follows from the  fact that player $i$ can only do worse by not best responding to the other players' flows.  Lastly, we know that $M$ does not produce an $\eta_{eq}$-approximate Nash flow in $\Gamma^{\tau}$ with probability less than $\beta$, which gives the additional $\beta$ term in the theorem statement.
\end{proof}

The rest of the paper will be dedicated to constructing such a mediator that satisfies the hypotheses in Theorem \ref{thm:motivate}.  We now state the main result of our paper.
\begin{theorem}
 For routing games $\Gamma$ that satisfy Assumption \ref{assumption} and parameter $\beta>0$, there exists a mediator $M: \{\bot \cup \cS\}^n \to \cF^n \times [0,n\gamma]^m$ such that with probability $1-\beta$ good behavior forms an $\eta$-approximate ex-post Nash equilibrium in $\Gamma_M$ where
 $$
 \eta = \tilde O\left(m^{3/2} n^{4/5} \right)
 $$
 and the resulting flow from the good behavior strategy is $\eta_{opt}$-approximately optimal for
 $$
 \eta_{opt}=\tilde O\left(m n^{4/5} \right).
 $$
 \label{thm:main_one}
\end{theorem}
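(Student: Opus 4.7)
The plan is to construct a mediator that follows the high-level recipe outlined in Section~1.1: privately compute a near-optimal fractional flow, derive marginal-cost tolls from it, round the fractional flow to integral player routes, and finally invoke Theorem~\ref{thm:motivate} to convert the privacy and approximate-equilibrium guarantees into an ex-post Nash statement. The first observation I would use is that $\phi(\mathbf{x})$ depends only on the aggregate edge loads $y_e=\sum_i x_{i,e}$, so the minimum-cost flow problem can be recast as a multicommodity flow problem in which commodity $s\in\cS$ must carry $d_s=|\{i:s_i=s\}|$ units from $s^1$ to $s^2$. The only quantities derived from player reports are therefore the demand counts $(d_s)_{s\in\cS}$, which have $\ell_1$-sensitivity $2$ in any single player's report.

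I would next construct the mediator in four privacy-respecting stages. (i) \emph{Privatize demand counts:} release $\tilde d_s = d_s + \mathrm{Lap}(O(1/\eps))$ (or use an equivalent $(\eps,\delta)$-DP histogram mechanism), producing counts with per-coordinate error $\tilde O(1/\eps)$ with probability $1-\beta$. (ii) \emph{Solve the convex program:} apply a standard convex programming solver to the relaxed fractional problem with demands $\tilde d_s$, obtaining a fractional flow $\bar{\mathbf{x}}$ with aggregate edge loads $\bar y_e$. Because this step is a post-processing of the noisy counts, the resulting $\bar y_e$ remains $(\eps,\delta)$-DP. (iii) \emph{Set tolls:} define $\tau_e = \bar y_e\,\ell_e'(\bar y_e)$, the standard marginal-cost tolls evaluated at $\bar y$; these too are post-processing of $\bar y$ and satisfy the bound $\tau_e\le U=n\gamma$ under Assumption~\ref{assumption}. (iv) \emph{Suggest integer paths by randomized rounding:} decompose $\bar{\mathbf{x}}$ commodity-by-commodity into a convex combination of source-sink paths and, for each player $i$, independently sample a path from the distribution corresponding to commodity $s_i$. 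Because the sampling uses only (already-public) $\bar{\mathbf x}$ and independent per-player randomness conditioned on $s_i$, the resulting suggestions are jointly differentially private.

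The main obstacle, and the technical heart of the argument, is bounding $\eta_{eq}$ for the integral flow $\hat{\mathbf x}$ returned by the rounding. Marginal-cost tolls make the fractional optimum a Nash flow in $\Gamma^{\tau}$ when the tolls are evaluated at the true demands, so I must control three sources of error: the perturbation of $\bar y_e$ relative to the true-demand optimum (driven by the $\tilde O(1/\eps)$ demand noise propagated through the sensitivity of the convex program's solution map, using convexity and strong duality together with the $\gamma$-Lipschitzness of $\ell_e$); the sampling error $|y_e(\hat{\mathbf x})-\bar y_e|$, which by Hoeffding is $\tilde O(\sqrt{n})$ per edge with high probability; and the gap between $\tau_e$ evaluated at $\bar y_e$ and at $y_e(\hat{\mathbf x})$, which by the $\gamma$-Lipschitzness of $\ell_e'$ contributes $\tilde O(\gamma\sqrt{n})$. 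Summing over $m$ edges and taking a union bound gives $\eta_{eq} = \tilde O(m(\sqrt{n}+1/\eps))\cdot\poly(m,\gamma)$, which after optimizing yields $\eta_{eq}=\tilde O(m^{3/2}n^{4/5})$.

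Finally I would plug everything into Theorem~\ref{thm:motivate}. With $U=n\gamma$ and $\delta\le 1/n$, the ex-post approximation parameter is $\eta=\eta_{eq}+m(U+n)(2\eps+\beta+\delta)$. Choosing $\eps\sim\sqrt{m}\,n^{-1/5}$ balances the two terms and yields $\eta=\tilde O(m^{3/2}n^{4/5})$, establishing the stated equilibrium bound. For the social cost, note that $\phi(\hat{\mathbf x})-\OPT(\mathbf s)$ is bounded by the same ingredients minus the toll-Lipschitz contribution: the rounding error $\tilde O(\gamma m\sqrt{n})$ plus the convex-program-perturbation error $\tilde O(m/\eps)=\tilde O(m\cdot n^{1/5}/\sqrt{m})$, giving $\eta_{opt}=\tilde O(mn^{4/5})$. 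The most delicate verification is showing the convex program's solution is sufficiently stable to demand perturbation—this is where the twice-differentiability assumption on $\ell_e$ and the development of private convex-optimization machinery promised in Section~1.1 enter.
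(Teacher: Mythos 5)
Your proposal follows the same high-level template as the paper (private flow computation, marginal-cost tolls, rounding, then invoke Theorem~\ref{thm:motivate}), but the private optimization step and the way you close the per-player-regret bound are genuinely different, and the latter has a gap.

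The paper does \emph{not} release a noisy demand histogram and then post-process. Instead, $\pigd$ runs a jointly-differentially-private no-regret dynamic on the Lagrangian of the convex program: the dual player plays an approximate most-violated constraint via the exponential mechanism, the primal player takes gradient steps, and the per-player output is recovered from the public dual sequence via the billboard lemma. This avoids ever having to bound how the optimal flow changes when the demand vector is perturbed. Your step (ii) hinges on exactly that: you claim the convex-program-perturbation contributes $\tilde O(m/\eps)$ to the per-edge congestion error. But the $\ell_1$ noise on the histogram accumulates over $|\cS| = \Theta(|V|^2)$ bins (you must add noise to every bin, including empty ones, to preserve DP), and the solution map of~\eqref{eq:convex_program} is not Lipschitz in the demand vector in general -- the objective $c(\bby)=\tfrac1n\sum_e y_e\ell_e(y_e)$ need not be strongly convex, so the optimal flow can jump discontinuously under small demand changes. ``Convexity and strong duality together with $\gamma$-Lipschitzness'' do not give you a sensitivity bound on the solution map; that is precisely the difficulty the paper's private-Lagrangian approach (and the accuracy bound $\alpha=\tilde O(\sqrt n\,m^{5/4}/\sqrt\eps)$ of Theorem~\ref{thm:approx_opt_flow}) is designed to get around.

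The second and more important gap is that you have no analogue of the $\sbr$ step and Lemmas~\ref{lem:unsatisfied-final}/\ref{lem:sbr_eq_opt}. The paper's flow $\bbx\bl$ only \emph{approximately} minimizes the potential $\Psi$; this controls the \emph{aggregate} deviation benefit but not the \emph{per-player} deviation benefit, so a small number of players can still gain a lot by deviating. The paper bounds the number of such $\zeta$-unsatisfied players and reroutes them (and then shows the rerouting does not un-satisfy the others). Your ``summing over $m$ edges and taking a union bound'' reasoning bounds per-edge congestion error, not per-player regret, and does not by itself yield the claimed $\eta_{eq}$. You could try to avoid the rerouting step by exploiting the Wardrop optimality of the \emph{exact} fractional optimum $\bar\bbx$ at demands $\tilde d$ (every path with positive mass is a shortest path under $\ell+\tau$ at congestion $\bar\bby$), so that each sampled path is a best response at $\bar\bby$ and the per-player regret is at most the change in any path's cost between $\bar\bby$ and $\hat\bby$. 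That argument is promising and would be a legitimately different route, but you do not make it; and even then you would need to account for the commodity-count mismatch (only $d_s$, not $\tilde d_s$, players of type $s$ sample paths), and replace the continuous toll $y\,\ell'_e(y)$ by the discrete marginal-cost toll $\tau^*_e(y)=(y-1)(\ell_e(y)-\ell_e(y-1))$ of~\eqref{eq:marginal-cost}, which is the one that makes the atomic potential $\Psi$ equal $n\phi$ (Lemma~\ref{lem:opt_eq}); $y\ell'(y)$ only does so in the nonatomic limit.
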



\jnote{I feel like we should restate the main theorem here now that all the notation is in place to state it formally.}
\rynote{Added.}

\section{Flow Mediator with Tolls}
We start by presenting a high level
overview of the design of our algorithm. Our goal is to design a mediator
that takes as input the demands, or source-destination pairs, $\bs$ of the players and
outputs a nearly optimal flow $\bbx\bl$ for $\Gamma =
(G,\ell,\bs)$ together with edge tolls $\tau$, such that the tolls are not
heavily influenced by any single player's report and no one's report has major influence on the flow induced by
the other players. Further, we need the tolls $\tau$ to be carefully computed
so that $\bbx\bl$ is also an approximate Nash flow in the 
instance $\Gamma^{\tau} = (G,\ell^\tau,\bs)$. We construct such a mediator
in the following way:
\begin{enumerate}
\item We compute an approximately optimal flow $\bbx\bl$ subject to
  JDP, using a privacy preserving variant of
  projected gradient descent. This ends up being the most technical
  part of the paper and so we leave the details to Section
  \ref{opt_jdp} and give the formal algorithm $\pigd$ \ifnum\appen=0 in the full version.\footnote{See \url{http://arxiv.org/abs/1502.04019} for the full version} \else in Algorithm
  \ref{PIGD}.\fi  For the rest of this section we assume we have
  $\bbx\bl$.
\item Given $\bbx\bl$, we need to compute the necessary tolls
  $\hat\tau$ such that players are approximately best responding in
  $\Gamma^{\hat\tau} = (G,\ell^{\hat\tau},\bs)$ when playing
  $\bbx\bl$.  We compute $\hat\tau$ as a function of a noisy version
  of the edge congestion $\hat\bby$ induced by the flow $\bbx\bl$ so
  that $\hat\tau$ is DP.  We give the procedure
  $\pcon$ that computes $\hat\bby$ in Algorithm \ref{PCON}. We must be cautious at this step because $\bbx\bl$ is only approximately optimal (and the
  tolls are computed with respect to a perturbed version of the
  induced congestion), so there may be a few players that are not playing
  approximate best responses in $\Gamma^{\hat\tau}$. We call these
  players \emph{unsatisfied}.
\item We show that the number of \emph{unsatisfied} players in
  $\Gamma^{\hat\tau}$ with flow $\bbx\bl$ is small, so we can modify
  $\bbx\bl$ by having the unsatisfied players play best responses to
  the induced flow. Because the number of unsatisfied players was
  small, we can show that this modification does not substantially reduce the payoff of the
  other players.  Therefore, if those players were playing approximate best responses before the modification, they will continue to do so after.  The procedure $\sbr$, given in Algorithm \ref{sbr}, ensures every player is approximately best
  responding. The result is a
  slightly modified flow $\hat\bbx$ which is nearly optimal in
  $\Gamma$ and an approximate Nash flow in $\Gamma^{\hat\tau}$.
\item The final output is then $\hat\bbx$ and $\hat\tau$.
\end{enumerate}

Our mediator $\med$ is formally given in Algorithm \ref{complete} and
is composed of the subroutines described above.  In $\med$ we
are using $\pigd$ as a black box that computes an $\alpha$-approximate
optimal flow\ifnum\appen=0 , which we show in the full version that we can set\else .  Theorem \ref{thm:approx_opt_flow} shows that we
can set\fi
\begin{equation}
\alpha = \tilde O \left( \frac{ \sqrt{n}m^{5/4}}{\sqrt{\eps}} \right). 
\label{eq:alpha}\end{equation}
  The rest of this paper is dedicated to analyzing the subroutines of $\med$.

\ifnum\final=1
 \begin{algorithm}[h]
 \SetAlgoNoLine
 $\med(\Gamma,\eps,\delta,\beta)$\;
 \KwIn{A routing game instance $\Gamma = (G, \ell, \bs)$; privacy parameter $(\eps,
     \delta)$; failure probability $\beta$}
 \KwOut{$\hat x_i$, a $(s_i^1, s_i^2)$-flow for each player $i\in [n]$, and a toll $\hat\tau_e$ for each edge $e\in E$}
 {
 \begin{enumerate}
     \item Compute an $\alpha$-approximately optimal flow $$\bbx\bl\gets\pigd\left(\Gamma, \frac{\eps}{4}, \frac{\delta}{2}, \frac{\beta}{2}\right).$$
     \item Compute congestion $\hat\bby \gets \pcon(\bbx\bl,\eps/4)$ and tolls $\hat \tau \gets \tau^*(\hat y_e)$ where $\tau^*(\cdot)$ is given in \eqref{eq:marginal-cost}.
     \item Improve some players' paths $$\hat \bbx \gets \sbr\left(\Gamma^{\hat\tau}, \hat\bby,\bbx\bl, 4\sqrt{mn\gamma \alpha} + \frac{8 \gamma m^2\log(m/\beta)}{\eps}\right).$$
 \end{enumerate}
 \Return $\hat \bbx$ and $\hat \tau$
 }
\caption{Flow Mediator with Tolls}\label{complete}
\end{algorithm}
\fi

\ifnum\final=0
\begin{algorithm}[h]
  \caption{Flow Mediator with Tolls}\label{complete}
  \begin{algorithmic}
    \INPUT A routing game instance $\Gamma = (G, \ell, \bs)$; privacy parameter $(\eps,
    \delta)$; failure probability $\beta$
    \OUTPUT $\hat x_i$, a
    $(s_i^1, s_i^2)$-flow for each player $i\in [n]$, and a toll $\hat
    \tau_e$ for each edge $e\in E$

    \Procedure{\med}{$\Gamma, \eps,\delta, \beta$}
    \begin{enumerate}
    \item Compute an $\alpha$-approximately optimal flow $$\bbx\bl\gets\pigd\left(\Gamma, \frac{\eps}{4}, \frac{\delta}{2}, \frac{\beta}{2}\right).$$
    \item Compute congestion $\hat\bby \gets \pcon(\bbx\bl,\eps/4)$ and tolls $\hat \tau \gets \tau^*(\hat y_e)$ where $\tau^*(\cdot)$ is given in \eqref{eq:marginal-cost}.
    \item Improve some players' paths $$\hat \bbx \gets \sbr\left(\Gamma^{\hat\tau}, \hat\bby,\bbx\bl, 4\sqrt{m\gamma \alpha} + \frac{8 \gamma m^2\log(2m/\beta)}{\eps}\right).$$
    \end{enumerate}
    \Return $\hat \bbx$ and $\hat \tau$
    \EndProcedure
  \end{algorithmic}
\end{algorithm}
\fi

\begin{remark}
  Throughout our discussion of the subroutines, we will sometimes say
  ``player $i$ plays...'' or ``player $i$ best responds to...'' to describe player $i$'s action in some flow computed by these subroutines. While
  these descriptions are natural, they could be slightly
  misleading. We want to clarify that our mediator mechanism is not
  interactive or online, and all the computation is done by the
  algorithm. The players will simply submit their private
  source-destination pairs and will only receive a suggested feasible
  path along with the tolls over the edges.
\end{remark}

\subsection{Private Tolls Mechanism}\label{phase2}


We show in this section that given an approximately optimal flow
$\bbx\bl$ we can compute the necessary tolls $\hat\tau$ in a
DP way.  Ultimately, we want to compute
\emph{constant} tolls, but a useful intermediate step is to consider
the following \emph{functional} tolls, which are edge tolls that can
depend on the congestion on that edge.  Specifically, we define the
\emph{marginal-cost toll} $\tau^*_e \colon \R\to\R$ for each edge $e \in E$
to be
\begin{equation}\label{eq:marginal-cost}
  \tau_e^*(y) = (y-1)(\ell_e(y) - \ell_e(y-1)),
\end{equation}
\jnote{Looking at this, you'd think there is a problem with $y \in
  (0,1)$.  Do we want to make it clear that the congestion will be a
  natural number?}\snote{although we will be applying this function to
  the noisy congestion later. }
 which gives rise to a different
routing game $\Gamma^{\tau^*} = (G, \ell^{\tau^*}, \bs)$ with latency
function defined as $\ell^{\tau^*}_e(y) = \ell_e(y) + \tau^*_e(y)$ for
 $e\in E$.

We first show that a marginal-cost toll enforces the optimal flow in
an atomic, unsplittable routing game, and then show how to use this
fact to privately compute \emph{constant} tolls that approximately
enforce the optimal flow at equilibrium. Recall the classical
potential function method \cite{MS96} for congestion games that
defines a potential function $\Psi: \R^{n \times m} \to \R$ such that
a flow $\bbx$ that minimizes $\Psi$ is also a (exact) Nash flow in
$\Gamma^{\tau^*} = (G,\ell^{\tau^*},\bs)$, where
\begin{equation}
\Psi(\bbx) :=\sum_{e\in E}\sum_{i=1}^{y_e} \ell^{\tau^*}_e(i) =\sum_{e\in
  E}\sum_{i=1}^{y_e} \left[\ell_e(i) + \tau^*_e(i)\right], \quad
\text{and } y_e = \sum_{i \in [n]} x_{i,e}.
\label{eq:potential}
\end{equation}

\begin{lemma}
  Let $\bbx^*$ be the (exact) optimal flow in routing game $\Gamma
  = (G,\ell,\bs)$, then $\bbx^*$ is a Nash flow in $\Gamma^{\tau^*} =
  (G,\ell^{\tau^*},\bs)$
  \label{lem:opt_eq}
\end{lemma}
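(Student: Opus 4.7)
The plan is to use the classical potential function method from \cite{MS96}, which the excerpt explicitly sets up. Concretely, I will prove two things in sequence: first, that the potential $\Psi$ defined in \eqref{eq:potential} equals $n$ times the social cost $\phi$; second, that any minimizer of $\Psi$ is a Nash flow in $\Gamma^{\tau^*}$. Chaining these gives the lemma, since $\bbx^*$ is by definition a minimizer of $\phi$.

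For the first step, I would substitute the definition of $\tau^*_e$ into $\Psi$ and compute
\[
\Psi(\bbx) = \sum_{e\in E} \sum_{i=1}^{y_e} \bigl[\ell_e(i) + (i-1)(\ell_e(i) - \ell_e(i-1))\bigr] = \sum_{e\in E} \sum_{i=1}^{y_e} \bigl[i\,\ell_e(i) - (i-1)\,\ell_e(i-1)\bigr],
\]
which telescopes to $\sum_e y_e \ell_e(y_e)$. Since $\sum_e y_e \ell_e(y_e) = \sum_e \sum_i x_{i,e}\,\ell_e(y_e) = n\,\phi(\bbx)$, we obtain $\Psi(\bbx) = n\,\phi(\bbx)$. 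Therefore the optimal flow $\bbx^*$ (a minimizer of $\phi$ over $\cF(\bs)$) is also a minimizer of $\Psi$ over $\cF(\bs)$.

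For the second step, I would invoke the standard Rosenthal argument specialized to $\Gamma^{\tau^*}$: note that $\Psi$ as written is precisely the Rosenthal potential for the congestion game with per-edge latency $\ell^{\tau^*}_e$. A direct computation shows that when a single player $i$ changes her flow from $x_i$ to $x_i'$ (with the remaining flows $\bbx_{-i}$ held fixed), the change $\Psi(x_i',\bbx_{-i}) - \Psi(x_i,\bbx_{-i})$ equals the change in her individual cost $\phi^{\tau^*}_i(x_i',\bbx_{-i}) - \phi^{\tau^*}_i(x_i,\bbx_{-i})$ in $\Gamma^{\tau^*}$. Hence, if $\bbx^*$ minimizes $\Psi$ over all feasible unilateral deviations, then no player can strictly decrease her cost by switching routes, i.e., $\bbx^*$ is a Nash flow in $\Gamma^{\tau^*}$.

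The only mildly non-routine piece is the telescoping identity $\Psi = n\phi$, which is precisely why the marginal-cost tolls $\tau^*_e$ are designed the way they are; everything else reduces to Rosenthal's theorem applied to $\Gamma^{\tau^*}$. I expect no real obstacle here, since the assumptions that $\ell_e$ is nondecreasing/convex are not actually needed for this structural fact (they will matter later when we replace $\tau^*$ by the constant, noisy approximation $\hat\tau$).
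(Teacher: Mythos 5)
Your proposal is correct and follows the same route as the paper's proof: show $\Psi = n\phi$ by the telescoping identity, observe that $\bbx^*$ therefore minimizes $\Psi$, and invoke the potential-function (Rosenthal/Monderer--Shapley) fact that any minimizer of $\Psi$ is a Nash flow in $\Gamma^{\tau^*}$. The only difference is cosmetic---you spell out the Rosenthal unilateral-deviation argument where the paper simply cites \cite{MS96}.
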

\begin{proof}
  First, we show that $n\cdot \phi(\bbx) = \Psi(\bbx)$ where $\phi$ is given
  in \eqref{eq:min_cost_flow}:
\begin{align*}
  \Psi&(\bbx) = \sum_e \sum_{i = 1}^{y_e} \left[ \ell_e(i) +
    \tau^*_e(i)\right] = \sum_e \sum_{i=1}^{y_e} \left[\ell_e(i) + (i - 1)(\ell_e(i)
    -
    \ell_e(i - 1))\right] \\
  &= \sum_e \sum_{i = 1}^{y_e} \left[i\,\ell_e(i) -
    (i - 1)\,\ell_e(i - 1) \right] = \sum_e \left[ y_e\,\ell_e(y_e) - 0\,\ell_e(0)\right] =  \sum_e
  y_e\, \ell_e(y_e) = n\cdot\phi(\bbx).
\end{align*}

 Note that $\bbx^*$ minimizes the potential function $\Psi$.  We know
 from \cite{MS96} that the flow that minimizes the potential function
 $\Psi$ is a Nash flow of the routing game $\Gamma^{\tau^*}$.  Hence
 $\bbx^*$ is a Nash flow.
\end{proof}

Since we only have access to an approximately optimal flow $\bbx\bl$,
we will compute the marginal-cost tolls based on $\bbx\bl$ instead.
In order to release DP tolls, we compute them
using a private version $\hat y_e$ of the total edge congestion
$y_e=\sum_i x\bl_{i,e}$ that is output by $\pcon$ (presented in
Algorithm~\ref{PCON}). Using a standard technique in differential
privacy, we can release a private version of the edge congestion by
perturbing the congestion on each edge with noise from an
appropriately scaled Laplace distribution.  Since the analysis is
standard, we defer the details \ifnum\appen=0 to the full version\else to
Section~\ref{laplace}\fi. Lastly, to get the constant tolls for the
mediator $\med$, we will evaluate the marginal-cost toll function
on the perturbed edge congestion $\hat \bby$: set $\hat\tau_e =
\tau^*_e(\hat y_e)$ for $e\in E$.

\ifnum\final=1
 \begin{algorithm}[h]
 \SetAlgoNoLine
$\pcon(\bbx,\epsilon)$\;
    \KwIn{Flow $\bbx$, privacy parameter $\eps$}
    \KwOut{Aggregate flow $\hat \bby = (\hat y_e)_{e \in E}$}
    {\bf for } each edge $e\in E$ {\bf do} 
    $\quad $  write $\hat y_e = \sum_i x_{i,e}+ Z_e, \mbox{ where }Z_e {\sim}\Lap(m/\eps).$ \\
     $\qquad $ {\bf if } $\hat y_e > n$ {\bf then } $\hat y_e \gets n$.\\
    \Return $\hat \bby$
\caption{Private Congestion}\label{PCON}
\end{algorithm}
\fi

\ifnum\final=0
 \begin{algorithm}[h]
  \caption{Private Congestion}\label{PCON}
  \begin{algorithmic}[0]
    \INPUT Flow $\bbx$, privacy parameter $\eps$
    \OUTPUT Aggregate flow $\hat \bby = (\hat y_e)_{e \in E}$
    \Procedure{\pcon}{$\bbx, \eps$}
    \For{each edge $e\in E$}
    \State Let $\hat y_e = \sum_i x_{i,e}+ Z_e, \mbox{ where }Z_e 
           {\sim}\Lap(m/\eps).$
           \If{$\hat y_e > n$}
           \State $\hat y_e \gets n$.
           \EndIf
    \EndFor
    \Return $\hat \bby$
    \EndProcedure
  \end{algorithmic}
\end{algorithm}
\fi

To show that the constant tolls $\hat \tau$ are private, we
need to first show that the noisy congestion $\hat \bby$ output by
$\pcon$ is DP in the demands $\bs$. We will show later that $\pigd$ which computes $\bbx\bl$ is JDP in $\bs$.  We then use $\bbx\bl$ as input to $\pcon$, which we know is
DP with respect to any flow input
$\bbx$. To bridge the two privacy guarantees, we rely on the
following composition lemma (with proof in \ifnum\appen=0 the full version\else Appendix~\ref{sec:comp}\fi) to show that $\hat \bby$ is DP in $\bs$.

\begin{lemma}
\label{lem:djcomp}
Let $M_J:\cS^n \to \cX^n$ be $(\eps_J,\delta)$-jointly differentially
private.  Further, let $M_D:\cX^n \to O$ be
$\eps_D$-differentially private. If $M:\cS^n \to O$ is
defined as
\ifnum \final=1 $ \else $$ \fi
M(\bs) = M_D(M_J(\bs))
\ifnum \final=1 $ \else $$ \fi
then $M$ is $(2\eps_D+\eps_J,\delta)$-differentially private.
\end{lemma}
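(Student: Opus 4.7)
The plan is to decompose the effect of changing one input coordinate $s_i \mapsto s_i'$ on $M(\bs) = M_D(M_J(\bs))$ into two pieces: the change in the $i$-th coordinate of $M_J$'s output, which can be arbitrary but is absorbed by the $\eps_D$-DP of $M_D$ in its own $i$-th coordinate, and the change in the remaining $n-1$ coordinates, which is controlled by joint differential privacy of $M_J$. The key subtlety is the order of application: I apply JDP \emph{first}, so that its additive $\delta$ term is not multiplied by any $e^{2\eps_D}$ factor on the way out.

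Concretely, I fix neighboring $\bs, \bs' \in \cS^n$ differing only in coordinate $i$ and any event $B \subseteq O$. Let $(Y_i, Y_{-i}) = M_J(\bs)$ and $(Y_i', Y_{-i}') = M_J(\bs')$, and define the bounded function $f \colon \cX^{n-1} \to [0,1]$ by
\begin{equation*}
 f(y_{-i}) \;=\; \Exp{y_i \sim Y_i \mid Y_{-i} = y_{-i}}{\Prob{}{M_D(y_i, y_{-i}) \in B}},
\end{equation*}
and define $f'$ analogously using the conditional law of $Y_i'$ given $Y_{-i}' = y_{-i}$. By the tower property, $\Prob{}{M(\bs) \in B} = \Exp{Y_{-i}}{f(Y_{-i})}$ and $\Prob{}{M(\bs') \in B} = \Exp{Y_{-i}'}{f'(Y_{-i}')}$.

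Next, JDP of $M_J$ says that the marginal of $Y_{-i}$ is $(\eps_J, \delta)$-indistinguishable from that of $Y_{-i}'$. Writing $\Exp{\mu}{f} = \int_0^1 \Prob{\mu}{f \ge t}\,dt$ for the bounded function $f$ and applying the JDP bound inside the integral yields $\Exp{Y_{-i}}{f(Y_{-i})} \le e^{\eps_J}\Exp{Y_{-i}'}{f(Y_{-i}')} + \delta$. Then I use the $\eps_D$-DP of $M_D$ in its $i$-th coordinate to relate $f$ and $f'$ pointwise: for any $y_{-i}$ and any reference $y_i^\star \in \cX$, we have $f(y_{-i}) \le e^{\eps_D}\Prob{}{M_D(y_i^\star, y_{-i}) \in B}$ and $f'(y_{-i}) \ge e^{-\eps_D}\Prob{}{M_D(y_i^\star, y_{-i}) \in B}$, which together give $f(y_{-i}) \le e^{2\eps_D} f'(y_{-i})$. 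Substituting into the previous display and then collapsing the expectation back into a probability yields $\Prob{}{M(\bs) \in B} \le e^{2\eps_D + \eps_J}\Prob{}{M(\bs') \in B} + \delta$, which is exactly $(2\eps_D + \eps_J, \delta)$-DP.

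The main obstacle is really just the bookkeeping around the ordering and conditional distributions: if one first bounds $\Prob{}{M_D(Y) \in B}$ by $e^{2\eps_D}\Prob{}{M_D(\tilde Y_i, Y_{-i}) \in B}$ for a surrogate $\tilde Y_i$ and \emph{then} invokes JDP, the additive term degrades to $e^{2\eps_D}\delta$ instead of $\delta$, so one must insert the JDP step before peeling off the $i$-th coordinate. Beyond this, the remaining ingredients are standard: the tail-integration identity $\Exp{}{g} = \int_0^1 \Prob{}{g \ge t}\,dt$ and the single-coordinate $\eps_D$-DP comparison.
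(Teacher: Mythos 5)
Your proof is correct and follows the same decomposition as the paper's: peel off the $i$-th coordinate of $M_J$'s output via $\eps_D$-DP of $M_D$ against a fixed surrogate $y_i^\star$, then apply the $(\eps_J,\delta)$-indistinguishability of the marginals $M_J(\bs)_{-i}$ and $M_J(\bs')_{-i}$. Your use of the tail-integration identity $\Exp{}{g} = \int_0^1 \Prob{}{g\ge t}\,dt$ to transfer the $(\eps_J,\delta)$ guarantee from events to expectations of a $[0,1]$-valued function is a cleaner and fully rigorous implementation of the step that the paper carries out informally at the density level. One small caveat on your remark about ordering: the paper avoids the $e^{2\eps_D}\delta$ inflation not by applying JDP first, but by truncating its DP bound at $1$ (the $\min\{e^{\eps_D}\cdot(\cdot),1\}$) before invoking JDP, so your reordering is a valid alternative fix rather than the unique one.
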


Now we are ready to establish the privacy guarantee of both $\hat \bby$
and $\hat \tau$.

\begin{corollary}
Given the approximately optimal flow $\bbx\bl$ computed from
$\pigd(\Gamma, \eps/4, \delta/2, \beta/2)$, the perturbed congestion
$\hat \bby$ output by $\pcon(\bbx\bl, \eps/4)$ and the constant tolls
$\hat \tau = (\tau^*_e(\hat y_e))_{e \in E}$ are $(3\eps/4,
\delta/2)$-differentially private in the demands $\bs$.
\label{cor:pcon_dp}
\end{corollary}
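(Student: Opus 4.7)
The plan is to derive the corollary by chaining three standard building blocks: the joint DP guarantee of \pigd, the standard DP guarantee of the Laplace mechanism underlying \pcon, and the post-processing invariance of differential privacy, all glued together by Lemma~\ref{lem:djcomp}.

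First, I would invoke the privacy analysis of \pigd\ (proved elsewhere, either in the paper body or its appendix/full version) which certifies that $\bbx\bl = \pigd(\Gamma, \eps/4, \delta/2, \beta/2)$ is $(\eps/4, \delta/2)$-jointly differentially private in the demand profile $\bs$. Treating this as $M_J$ in Lemma~\ref{lem:djcomp}, it outputs a flow vector (one row per player) whose output space is naturally partitioned across players, matching the hypothesis of that lemma.

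Second, I would verify that $\pcon(\,\cdot\,, \eps/4)$ is $(\eps/4)$-differentially private with respect to its flow input. The only statistic released (before truncation, which is post-processing) is $\bigl(\sum_i x_{i,e}\bigr)_{e \in E}$. Since any single unsplittable unit flow $x_i \in \{0,1\}^m$ has $\|x_i\|_1 \le m$, swapping one player's row changes this vector by at most $m$ in $L_1$ norm. Hence adding independent $\mathrm{Lap}(m/(\eps/4))$ noise per coordinate yields $(\eps/4)$-DP by the standard Laplace mechanism analysis; truncating each coordinate at $n$ is post-processing and preserves the bound. This is $M_D$ in Lemma~\ref{lem:djcomp}.

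Third, I apply Lemma~\ref{lem:djcomp} directly to the composition $\pcon(\pigd(\bs, \eps/4, \delta/2, \beta/2), \eps/4) = \hat \bby$, with $\eps_J = \eps/4$, $\eps_D = \eps/4$, and $\delta = \delta/2$. The lemma yields a privacy loss of $2\eps_D + \eps_J = 3\eps/4$ with failure probability $\delta/2$, establishing the claim for $\hat \bby$. Finally, since $\hat \tau_e = \tau^*_e(\hat y_e)$ is computed deterministically from $\hat \bby$ (with no further access to $\bs$), the post-processing property of differential privacy transfers the $(3\eps/4, \delta/2)$ bound verbatim to $\hat \tau$. No step here is expected to be an obstacle: the only nontrivial input is the JDP guarantee of \pigd, which is imported as a black box; the remainder is sensitivity bookkeeping and invoking Lemma~\ref{lem:djcomp}.
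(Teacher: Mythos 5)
Your proposal is correct and follows the paper's proof exactly: invoke the JDP guarantee of $\pigd$, the Laplace-mechanism DP guarantee of $\pcon$ (with sensitivity $m$), compose via Lemma~\ref{lem:djcomp} to get $(3\eps/4, \delta/2)$-DP for $\hat\bby$, and apply post-processing for $\hat\tau$. The only difference is that you spell out the $L_1$-sensitivity calculation for $\pcon$, which the paper defers to its Laplace-mechanism section.
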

\begin{proof}
Note that $\bbx\bl$ is output by
$\pigd(\Gamma,\eps/4,\delta/2,\beta/2)$, so it is $(\eps/4,
\delta/2)$-JDP in $\bs$.  Using analysis of the Laplace mechanism\ifnum\appen=1 (Section~\ref{laplace})\fi, we know that
$\pcon(\bbx\bl, \eps/4)$ is $(\eps/4)$-DP in
$\bbx\bl$.  Therefore, the noisy congestion $\hat \bby$ output by the
composition of these two functions is $(3\eps/4,
\delta/2)$-DP by Lemma~\ref{lem:djcomp}. Since $\hat \tau$ is simply a post-processing of the noisy congestion
$\hat \bby$, we know that $\hat \tau$ is
$(3\eps/4,\delta/2)$-DP\ifnum\appen=1 by Lemma~\ref{lem:post}\fi.
\end{proof}

\subsection{Simultaneous Best Responses of Unsatisfied Players}

At this point of the mechanism, we have computed the approximately
optimal flow $\bbx\bl$ and constant tolls $\hat \tau$ that define the
tolled routing game $\Gamma^{\hat \tau}$. In this section, we show
how to modify $\bbx\bl$ to obtain a new approximately optimal flow $\hat \bbx$ that is also an approximate Nash equilibrium in the presence of the same
constant tolls $\hat \tau$.

Recall from Lemma~\ref{lem:opt_eq} that there is an exactly optimal flow $\bbx^*$ and functional tolls $\tau^*$ such that $\bbx^*$ is an exact Nash flow of the routing game under tolls $\tau^*$.  Our flow-toll pair $(\bbx\bl, \hat\tau)$ differs from $(\bbx^*, \tau^*)$ in three ways.
\begin{enumerate}
\item The flow $\bbx\bl$ is only \emph{approximately} optimal.
\item The tolls $\hat \tau$ we impose on the edges are \emph{constants} while
  the functional tolls $\tau^*$ may be functions of the congestion.
\item Tolls $\hat \tau$ are derived from noisy congestion $\hat \bby$, not the exact congestion $\bby\bl = \sum_{i} \bbx_i \bl$.
\end{enumerate}

As a result, there may be some \emph{unsatisfied players} who could significantly benefit from deviating from $\bbx\bl$.  We obtain the new approximate Nash flow $\hat \bbx$ by rerouting the unsatisfied players in $\bbx\bl$ along their best response route in the flow $\bbx\bl$ with constant edge tolls $\hat\tau$.  To analyze the new flow $\hat \bbx$, we show that there are not too many unsatisfied players.  Thus, even if we modify the routes of all of the unsatisfied players, the overall congestion does not change too much, and thus the players who were previously satisfied remain satisfied.

To determine if a player is unsatisfied and what their best response is, we need to know the costs they face for different paths, which depends on the flow
$\bby\bl = \sum_i \bbx_i\bl$.  However, to ensure privacy, we
only have access to a perturbed flow $\hat \bby$.  Thus, we will define unsatisfied players relative to this noisy flow $\hat \bby$ computed by $\pcon$.  More generally we can define the best response function of a player relative to any flow $\bby$.

Given any congestion $\bby$ (not necessarily even a sum of feasible individual flows) and routing game $\Gamma = (G, \ell, \bs)$, we define $c_{\bbx_i}(\bby)$
to be player $i$'s cost for routing on path $\bbx_i$ under the
congestion of $\bby$, that is
\begin{equation}
c_{\bbx_i}(\bby) = \sum_{e \in E} x_{i,e}\cdot \ell_e(y_e).
\label{eq:aggregate_cost}
\end{equation}
Note that $\sum_{i=1}^n c_{\bbx_i}(\bby) = n\phi(\bbx)$ and
$c_{\bbx_i}(\bby) = \phi_i(\bbx) $ when $y_e = \sum_{i = 1}^n x_{i,e}$
for $e \in E$. We then define the condition for being
unsatisfied with respect to congestion $\bby$ as follows.
\begin{definition}
  Given congestion $\bby$ and routing game $\Gamma = (G, \ell, \bs)$, we say that a
  player $i$ with $s_i$-flow $\bbx_i$ is
  \emph{$\rho$-unsatisfied with respect to $\bby$} if he could
  decrease his cost by at least $\rho$ via a unilateral deviation.  That is, there exists a path $\bbx_i' \in \cF(s_i)$ such that
\ifnum \final=1 $ \else $$ \fi
c_{\bbx_i '}(\bby') \leq c_{\bbx_i }(\bby) -\rho
\ifnum \final=1 $ \else $$ \fi
where $\bby' = \bby - \bbx_i + \bbx'_i$ is the flow that would result from player $i$ making this deviation.
If player $i$ is not $\rho$-unsatisfied, then we say $i$ is \emph{$\rho$-satisfied}.  We will sometimes omit $\bby$ if it is clear from context.
\end{definition}

The next lemma bounds the number of unsatisfied players in $\bbx\bl$ in the routing game $\Gamma^{\hat \tau} = (G,\ell + \hat\tau,\bs)$ with respect to the noisy congestion $\hat \bby$.

\begin{lemma}\label{lem:unsatisfied-final}
  Let $\bbx\bl$ be an $\alpha$-approximately optimal flow, $\hat\bby =
  \pcon(\bbx\bl, \eps)$ be the noisy aggregate flow, and $\hat \tau =
  \tau^*(\hat\bby)$ be a vector of constant tolls.  Then with
  probability at least $1-\beta$ for $\beta>0$, there are at most $\sqrt{n\alpha / 4
    m\gamma}$ players who are $\hat\zeta_\eps$-unsatisfied players in
  $\Gamma^{\hat\tau}$ with respect to the congestion $\hat\bby$, for
    \begin{equation}
  \hat \zeta_\eps = 4\sqrt{mn\gamma \alpha} +  8 \frac{\gamma m^2\log(m/\beta)}{\eps}.
  \label{eq:zeta}
  \end{equation}

\end{lemma}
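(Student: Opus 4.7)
The plan is to combine a standard Laplace tail bound, a translation step between two closely related routing games, and a potential-function counting argument.

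First, I would invoke the Laplace tail bound. Since each $Z_e \sim \Lap(m/\eps)$ is added independently on $m$ edges, a union bound gives that with probability at least $1 - \beta$,
\[
|\hat y_e - y\bl_e| \le \Delta := \frac{m}{\eps}\log(m/\beta) \quad \text{for every } e \in E.
\]
Condition on this event throughout the rest of the argument.

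Second, I would translate the unsatisfied condition from the constant-toll game with noisy congestion $(\Gamma^{\hat\tau}, \hat\bby)$ to the marginal-cost-toll game with true congestion $(\Gamma^{\tau^*(\bby\bl)}, \bby\bl)$. Since $\hat\tau_e = \tau^*_e(\hat y_e)$, one can bound the per-edge discrepancy $|\ell_e(\hat y_e) + \hat\tau_e - \ell_e(y\bl_e) - \tau^*_e(y\bl_e)|$ using Lipschitz properties of $\ell_e$ and $\tau^*_e$ together with $|\hat y_e - y\bl_e| \le \Delta$. Summing along any path of length at most $m$, and applying the bound at both a player's current route and at any candidate deviation, shows that any player who is $\hat\zeta_\eps$-unsatisfied in $(\Gamma^{\hat\tau}, \hat\bby)$ must be at least $\zeta$-unsatisfied in $(\Gamma^{\tau^*(\bby\bl)}, \bby\bl)$, where $\zeta := \hat\zeta_\eps - 8\gamma m^2 \log(m/\beta)/\eps = 4\sqrt{mn\gamma\alpha}$.

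Third, I would exploit the potential-game structure from Lemma~\ref{lem:opt_eq}. Let $\Psi = n\phi$ be the Rosenthal potential of $\Gamma^{\tau^*}$; since $\bbx\bl$ is $\alpha$-approximately optimal for $\phi$ and $\Psi$ is minimized at $\bbx^*$, we have $\Psi(\bbx\bl) - \Psi(\bbx^*) \le n\alpha$. Let $U$ denote the set of $\zeta$-unsatisfied players in $(\Gamma^{\tau^*(\bby\bl)}, \bby\bl)$, set $k := |U|$, and let $\bbx_i'$ be $i$'s best-response route against $\bby\bl$ for each $i \in U$. Form $\tilde\bbx$ by simultaneously rerouting every $i \in U$ to $\bbx_i'$. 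Expanding $\Psi(\bbx\bl) - \Psi(\tilde\bbx)$ as a telescoping sum over a sequential ordering of the moves in $U$, the potential-game identity gives that the $j$-th term equals the moved player's cost change in the intermediate flow, which is at least $\zeta$ minus an interference correction attributable to the $j-1$ players who have already moved. Bounding the aggregate interference by $O(k^2 m\gamma)$ via Lipschitz estimates on $\ell_e^{\tau^*}$ along the one-unit congestion perturbations yields $\Psi(\bbx\bl) - \Psi(\tilde\bbx) \ge k\zeta - O(k^2 m\gamma)$. Combining this with $\Psi(\bbx\bl) - \Psi(\tilde\bbx) \le n\alpha$ and substituting $\zeta = 4\sqrt{mn\gamma\alpha}$ gives the claimed bound $k \le \sqrt{n\alpha/(4m\gamma)}$.

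The crux is the interference estimate in the third step: one must argue carefully that when $k$ players simultaneously best-respond, the total distortion each mover inflicts on the subsequent movers' path costs only accumulates like $O(km\gamma)$ per subsequent deviator, rather than the pessimistic $O(kmn\gamma)$ suggested by a naive application of the worst-case Lipschitz constant of the marginal-cost toll $\tau^*_e$. This needs convexity of $\ell_e$ and the telescoping structure of the Rosenthal potential. The Laplace and Lipschitz manipulations in the first two steps are routine.
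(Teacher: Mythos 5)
Your potential-function counting argument takes a genuinely different route from the paper. The paper's proof (Appendix~\ref{append_sbr}, Lemma~\ref{lem:satisfied_c_prime}) runs a \emph{sequential} $\rho$-best-response dynamics starting from $\bbx\bl$: since each step is a best response against the \emph{current} congestion, the potential drops by at least $\rho$ per step, so the dynamics terminates after at most $n\alpha/\rho$ steps, the congestion drift on each edge is at most $n\alpha/\rho$, and every player who never moved was therefore $(\rho+4m\gamma n\alpha/\rho)$-satisfied at the start. Setting $\rho=2\sqrt{m\gamma n\alpha}$ then balances the two terms. You instead fix all the best responses against $\bby\bl$ up front, reroute the unsatisfied players simultaneously, and telescope. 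Both are legitimate potential-function arguments, and your interference accounting (the $j$-th mover sees at most $O(jm\gamma)$ distortion, so aggregate interference is $O(k^2m\gamma)$) is correct given the paper's standing assumption that $\tau^*_e$ is $\gamma$-Lipschitz. The paper's sequential formulation is cleaner precisely because it sidesteps the quadratic-in-$k$ trade-off that your simultaneous telescope creates; the translation steps (Laplace tail bound, constant-vs-functional toll, noisy-vs-true congestion) you collapse into one Lipschitz estimate are spread across three lemmas in the paper, but the content is the same up to the same minor constant-factor slack (both you and the final lemma statement drop an $O(m\gamma)$ term coming from the constant-toll discrepancy; the paper absorbs it).

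There is a genuine gap in your closing step. From $\Psi(\bbx\bl)-\Psi(\tilde\bbx)\le n\alpha$ and $\Psi(\bbx\bl)-\Psi(\tilde\bbx)\ge k\zeta-2m\gamma k^2$ you obtain $k\zeta-2m\gamma k^2\le n\alpha$. With $\zeta=4\sqrt{mn\gamma\alpha}$ this is a concave quadratic in $k$ whose maximum value $\zeta^2/(8m\gamma)=2n\alpha$ exceeds $n\alpha$, so the inequality is violated precisely on an interval $(k_-,k_+)$ of values of $k$; it is satisfied both for $k\le k_-\approx(1-\tfrac{1}{\sqrt 2})\sqrt{n\alpha/(m\gamma)}$ \emph{and} for $k\ge k_+\approx(1+\tfrac{1}{\sqrt 2})\sqrt{n\alpha/(m\gamma)}$. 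Your claim ``gives the claimed bound $k\le\sqrt{n\alpha/(4m\gamma)}$'' therefore does not follow as stated: you must rule out the large-$k$ branch. The fix is to apply the same telescope to an arbitrary subset $U'\subseteq U$ of size $k_0$; if $|U|\ge k_+$ you can choose an integer $k_0\in(k_-,k_+)$ (the interval has width $\sqrt{2n\alpha/(m\gamma)}>1$ whenever the final bound is non-vacuous), and the resulting $k_0\zeta-2m\gamma k_0^2>n\alpha$ is a contradiction. With that patch your argument closes. One further caveat worth flagging: your calculation, like the paper's, leans on $\tau^*_e$ being $\gamma$-Lipschitz; in general for convex $\gamma$-Lipschitz $\ell_e$ one only gets that the discrete second difference of $\ell_e$ is nonnegative and at most $\gamma$, which bounds $|\tau^*_e(y)-\tau^*_e(y-1)|$ by $O((y-1)\gamma)$ rather than $O(\gamma)$. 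You correctly notice that a naive Lipschitz bound of $O(n\gamma)$ would ruin the estimate and gesture at convexity and telescoping as the remedy, but you do not actually supply the argument; the paper's own derivation of the $\gamma$-Lipschitz claim is terse and does not resolve this either, so this is a shared issue rather than a defect unique to your proof.
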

We will now give a rough sketch of the proof, \ifnum\appen=0 formally given in the full version\else.  The full proof appears in Appendix~\ref{append_sbr}.\fi
\ifnum\final=0
\begin{proof}[Proof Sketch]
\else
\begin{proof}[Sketch]
\fi

First, we will consider the routing game $\Gamma^{\tau^*}$ under the (functional) marginal-cost toll.  We will also assume for now that we have the exact congestion $\bby\bl = \sum_i \bbx_i\bl$. Recall from Lemma~\ref{lem:opt_eq} that the potential function $\Psi$ for this game is equal to the total congestion cost $n \cdot \phi$. \jnote{``Coincides'' is a pretty vague word.  Can we be more precise?}\snote{changed to ``equals to''} Since $\bbx\bl$ is an $\alpha$-approximate optimal flow, it also approximately minimizes $\Psi$ up to error $n\cdot \alpha$.  The construction of $\Psi$ is such that if a player who is $\rho$-unsatisfied with respect to $\bby\bl$ plays her best response, then $\Psi$ decreases by at least $\rho$.  Therefore the number of $\rho$-unsatisfied players with respect to $\bby\bl$ is at most $n \alpha/\rho$.  Here we are intentionally being slightly imprecise to ease exposition.  See the full proof for details.

Now, consider the routing game $\Gamma^{\tau} = (G, \ell + \tau, \bs)$ that arises from using the constant tolls $\tau = \tau^*(\bby\bl)$.  Note that under functional tolls $\tau^*$, when a player best responds, the tolls may change, however under constant tolls $\tau$ the tolls do not change.  This might increase the number of players who can gain by deviating.  However, notice that when one player changes their route, the tools $\tau^*_e$ and $\tau_e$ can only change by $\gamma$, since $\tau^*_e$ is $\gamma$-Lipschitz.  Thus changing from tolls $\tau^*$ to $\tau$ can only change the cost any player faces on any route by $m\gamma$.  Therefore, we can argue that the number of $(\rho + 2m\gamma)$-unsatisfied players with respect to $\bby\bl$ in the game $\Gamma^{\tau}$ is also at most $n\alpha / \rho$.

The last issue to address is that we compute the tolls from the noisy congestion $\hat\bby$ instead of the exact congestion $\bby\bl$.  This has two effects: 1) the constant tolls $\hat \tau = \tau^*(\hat\bby)$ are different from the constant tolls $\tau = \tau^*(\bby\bl)$ analyzed above and 2) we want to measure the number of unsatisfied players with respect to $\hat\bby$ instead of $\bby\bl$.  We can address both of these issues using the fact that the noise is small on every edge.  Therefore $|y_e - \hat y_e|$ is small, and since $\tau^*_e$ is Lipschitz, $|\tau_e - \hat\tau_e|$ is small as well.  In the full proof we carefully account for the magnitude of the noise and its effect on the cost faced by each player to obtain the guarantees stated in the lemma.
\end{proof}

We have so far shown that there might be a few players that are
unsatisfied with their current route in $\Gamma^{\hat\tau} =
(G,\ell+\hat \tau ,\bs)$ when they only know a perturbed version of
the congestion $\hat \bby$. We then let these unsatisfied players
simultaneously change routes to the routes with the lowest cost
(according to the cost $c_{\bbx_i}(\bby)$).  This procedure, $\sbr$,
is detailed in Algorithm~\ref{sbr}.

\ifnum\final=1
\begin{algorithm}[h]
 \SetAlgoNoLine
  $\sbr(\Gamma, \bby, \bbx, \zeta)$\;
   \KwIn{Routing game instance $\Gamma$, congestion $\bby$, flow assignment $\bbx$, satisfaction parameter $\zeta$}
   \KwOut{New flow assignment $\hat \bbx$}
  {
      Let $\hat \bbx \gets \bbx$ \\
    {\bf for } each player $i \in [n]$ {\bf do }\\
     $\qquad $ {\bf if } $i$ with flow $\bbx_i$ is $\zeta$-unsatisfied with respect to congestion $\bby$ in game $\Gamma$ \\
     $\qquad\qquad $ Replace $\hat \bbx_i$ by the route with the lowest cost given congestion $\bby$.\\
     $$
    \hat \bbx_i \gets \argmin_{\bbx_i'}\left\{c_{\bbx_i'}\left(\bby' \right) \right\} \qquad \text{(breaking ties arbitrarily)}
     $$
    $\qquad\qquad $ Where $y_e' = y_e - 1$ if $x_{i,e} = 1, x_{i,e}' = 0; y_e' = y_e+1 $ if $ x_{i,e} = 0, x_{i,e}' = 1; $ else $ y_e = y_e'.$
  \Return $\hat \bbx$
  }
\caption{Private Best Responses}\label{sbr}
\end{algorithm}
\fi

\ifnum\final=0
\begin{algorithm}[h]
  \caption{Private Best Responses}\label{sbr}
  \begin{algorithmic}[0]
    \INPUT Routing game instance $\Gamma$, congestion $\bby$, flow assignment $\bbx$, satisfaction parameter $\zeta$
    \OUTPUT New flow assignment $\hat \bbx$
    \Procedure{\sbr}{$\Gamma, \bby, \bbx, \zeta$}
    \State Let $\hat \bbx \gets \bbx$
    \For{each player $i \in [n]$}
    \If{$i$ with flow $\bbx_i$ is $\zeta$-unsatisfied with respect to congestion $\bby$ in game $\Gamma$}
    \State Replace $\hat \bbx_i$ by the route with the lowest cost given congestion $\bby$.
    $$
    \hat \bbx_i \gets \argmax_{\bbx_i'}\left\{c_{\bbx_i'}\left(\bby' \right) \right\} \qquad \text{(breaking ties arbitrarily)}
    $$
\State Where $y_e' = y_e - 1$ if $x_{i,e} = 1, x_{i,e}' = 0; y_e' = y_e+1 $ if $ x_{i,e} = 0, x_{i,e}' = 1; $ else $ y_e = y_e'.$
    \EndIf
    \EndFor
    \Return $\hat \bbx$
    \EndProcedure
  \end{algorithmic}
\end{algorithm}
\fi

We are now ready to show that the final flow assignments $\hat \bbx$
resulting from the procedure $\sbr(\Gamma^{\hat\tau}, \bbx\bl,
\hat\zeta_\eps)$, where $\bbx\bl$ is an $\alpha$-approximate optimal flow in
$\Gamma$ and $\hat\zeta_\eps$ is given in \eqref{eq:zeta}, forms an
approximate Nash equilibrium in the game $\Gamma^{\hat\tau}$ and remains an
approximately optimal flow for the original routing game instance $\Gamma$.

\begin{lemma}
Fix any $\alpha > 0$ and $\beta,\eps\in (0,1)$. Let $\Gamma =
(G,\ell,\bs)$ be a routing game and $\bbx\bl$ be an
$\alpha$-approximately optimal flow in $\Gamma$.  Let $\hat\bbx =
\sbr(\Gamma^{\hat\tau},\hat\bby, \bbx\bl,\hat\zeta_\eps)$ for
$\hat\zeta_\eps$ given in \eqref{eq:zeta}, $\hat\bby =
\pcon(\bbx\bl,\eps)$, and $\hat\tau = \tau^*(\hat\bby)$. Then with
probability at least $1 - \beta$, $\hat\bbx$ is an
$\eta_{eq}(\alpha)$-Nash flow in $\Gamma^{\hat\tau} =
(G,\ell+\hat\tau,\bs)$ where
\begin{equation}
\eta_{eq}(\alpha) = O\left(\sqrt{mn\alpha } + \frac{m^2\log(m/\beta)}{\eps} \right).
\label{eq:eta_eq}
\end{equation}
and $\hat\bbx$ is an $\eta_{opt}(\alpha)$-approximate Nash flow in $\Gamma$ where
\begin{equation}
\eta_{opt}(\alpha) = O\left(\alpha + \sqrt{mn\alpha}  \right).
\label{eq:eta_opt}
\end{equation}
\label{lem:sbr_eq_opt}
\end{lemma}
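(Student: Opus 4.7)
The plan is to condition on two good events: (i) by Lemma~\ref{lem:unsatisfied-final}, the set $U$ of $\hat\zeta_\eps$-unsatisfied players in $\Gamma^{\hat\tau}$ with respect to $\hat\bby$ has $|U| \leq k := \sqrt{n\alpha/(4m\gamma)}$; and (ii) the Laplace noise $Z$ added by $\pcon$ satisfies $\|Z\|_\infty \leq T := O(m\log(m/\beta)/\eps)$ via standard Laplace tails plus a union bound over the $m$ edges. A union bound (with appropriately partitioned failure probabilities) makes both events hold with probability at least $1-\beta$. Let $\tilde\bby := \sum_i \hat\bbx_i$ denote the realized congestion of the output flow. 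Since $\sbr$ alters $\bbx_i\bl$ only for $i \in U$, I will note that $\|\tilde\bby - \bby\bl\|_\infty \leq |U| \leq k$, and hence $\|\tilde\bby - \hat\bby\|_\infty \leq k + T =: \Delta$. Because each $\ell_e$ is $\gamma$-Lipschitz and the tolls $\hat\tau_e$ are constants, the tolled cost of any route (at most $m$ edges) shifts by at most $m\gamma\Delta$ when $\hat\bby$ is swapped for $\tilde\bby$. This ``Lipschitz transfer'' observation is the common workhorse for both parts.

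For the approximate-Nash claim in $\Gamma^{\hat\tau}$, I fix any player $i$ and alternative route $x'_i \in \cF(s_i)$ and split into two cases. If $i \notin U$, then $\hat\bbx_i = \bbx_i\bl$, and the $\hat\zeta_\eps$-satisfaction inequality $c_{\bbx_i\bl}(\hat\bby) \leq c_{x'_i}(\hat\bby - \bbx_i\bl + x'_i) + \hat\zeta_\eps$ (with $c$ the tolled cost in $\Gamma^{\hat\tau}$) holds by the definition of $U$; Lipschitz transfer on both sides converts this to an approximate-best-response inequality at $\tilde\bby$ at the cost of an extra $2m\gamma\Delta$. If $i \in U$, then by construction $\hat\bbx_i$ minimizes $c_{x}(\hat\bby - \bbx_i\bl + x)$ over $x \in \cF(s_i)$, so the analogous inequality holds with $\hat\zeta_\eps$ replaced by $0$; the coordinate-wise discrepancy between $\hat\bby - \bbx_i\bl + \hat\bbx_i$ and $\tilde\bby$ is at most $(|U|-1) + T \leq \Delta$ because each other player in $U$ moves flow by at most $1$ per coordinate, so the same $2m\gamma\Delta$ slack applies. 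In either case, $\eta_{eq}(\alpha) \leq \hat\zeta_\eps + 2m\gamma\Delta$; plugging in the definitions of $\hat\zeta_\eps$, $k$, and $T$ collapses this to $O(\sqrt{mn\gamma\alpha}) + O(\gamma m^2 \log(m/\beta)/\eps)$, matching the claim (absorbing $\gamma$ into constants).

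For the approximate-optimality claim in $\Gamma$, I write $n\,\phi(\bbx) = \sum_e f_e(y_e)$ with $f_e(y) := y\,\ell_e(y)$ and observe that $f'_e(y) = \ell_e(y) + y\,\ell'_e(y) \leq n(1+\gamma)$ on $[0,n]$, so each $f_e$ is $n(1+\gamma)$-Lipschitz. Since $\hat\bbx$ and $\bbx\bl$ differ only on the $|U| \leq k$ players of $U$, each of whom flips at most $2m$ edge-indicator entries, $\sum_e |\tilde y_e - y_e\bl| \leq 2m|U|$, so $n\,|\phi(\hat\bbx) - \phi(\bbx\bl)| \leq 2mn(1+\gamma)\,|U|$, i.e.\ $|\phi(\hat\bbx) - \phi(\bbx\bl)| = O(mk) = O(\sqrt{mn\alpha})$. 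Combined with $\phi(\bbx\bl) \leq \OPT(\bs) + \alpha$, this yields $\phi(\hat\bbx) \leq \OPT(\bs) + \alpha + O(\sqrt{mn\alpha}) = \OPT(\bs) + \eta_{opt}(\alpha)$, as required. The main obstacle throughout is the careful bookkeeping among the three distinct congestion vectors $\bby\bl$, $\hat\bby$, and $\tilde\bby$ and the conversion of the $\hat\bby$-based satisfaction and best-response guarantees from Lemma~\ref{lem:unsatisfied-final} and $\sbr$ into $\tilde\bby$-based inequalities; once $\Delta$ is controlled, the rest reduces to a short chain of triangle-style inequalities.
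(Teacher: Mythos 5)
Your proof is correct and follows essentially the same strategy as the paper's: bound the number $|U|$ of unsatisfied players via Lemma~\ref{lem:unsatisfied-final}, control the Laplace noise, and then use Lipschitz continuity of the tolled latencies to transfer the satisfaction guarantees from the reference congestion $\hat\bby$ to the realized congestion $\tilde\bby$, and similarly bound the change in $\phi$. The only notable cosmetic difference is that you carry out a single consolidated Lipschitz transfer with the combined discrepancy $\Delta = k + T$, whereas the paper does the transfer in two stages (first accounting for the joint deviation, then for the noise $\|\hat\bby - \bby\|_\infty$ via Lemma~\ref{lem:transfer}); both yield the same asymptotics, and your version is arguably cleaner bookkeeping.
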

\begin{proof}
First, to show that $\hat \bbx$ forms an approximate Nash flow, we
need to argue that all players are approximately satisfied with respect to the
actual congestion $\bby = \sum_i\hat\bbx_i$. As an intermediate step,
we will first show that all players in $\hat \bbx$ are approximately
satisfied with the input perturbed congestion $\hat\bby$.

By Lemma~\ref{lem:unsatisfied-final}, we know that the number of
$\hat\zeta_\eps$-unsatisfied players that deviate in our instantiation
of $\sbr$ is bounded by
\ifnum \final=1 $ \else $$ \fi
\sqrt{n\alpha}/(2\sqrt{m\gamma})\equiv K.
\ifnum \final=1 $ \else $$ \fi
After these players' joint deviation, the congestion on any path is
changed by at most $m\,K$, so the total cost on any path is changed by at
most $m\gamma K = \sqrt{m n\alpha\gamma}/2$. Therefore, the players that
deviate are $\sqrt{mn\alpha\gamma}$-satisfied in $\Gamma^{\hat\tau}$
with respect to congestion $\hat\bby$ after the simultaneous moves.  Similarly,
the players that were originally $\hat\zeta_\eps$-satisfied in
$\Gamma^{\hat\tau}$ with congestion $\hat\bby$ remain $(\hat\zeta_\eps
+ \sqrt{m n \alpha\gamma})$-satisfied with $\hat\bby$ even after the
joint deviations.

From standard bounds on the tails of Laplace distribution\ifnum\appen=1
(Lemma~\ref{lem:lap_utility})\fi, we can bound the difference between
$\hat\bby$ and $\sum_i \bbx\bl_i$: with probability at least
$1-\beta$,
\[
\|\hat \bby - \sum_i \bbx\bl_i \|_\infty \leq 2  m\log(m/\beta)/\eps
\]
Since the number of players that deviate in $\sbr$ is bounded by $K$,
we could bound $\|\bby - \sum_i\bbx\bl_i\|_\infty\leq K$. By triangle
inequality, we get
\[
\|\hat\bby - \bby\|_\infty \leq 2 m\log(m/\beta)/\eps + K.
\]
Since all players in $\hat\bbx$ are $(\hat\zeta_\eps +
\sqrt{mn \alpha\gamma})$-satisfied with congestion $\hat \bby$, \ifnum\appen=0 we show in the full version \else by
Lemma~\ref{lem:transfer}, we know\fi that they are also
$\eta_{eq}$-satisfied with the actual congestion $\bby$, where
\[
\eta_{eq} = \hat\zeta_\eps + \sqrt{mn\alpha\gamma} + \frac{4\gamma
  m^2\log(m/\beta)}{\eps} + 2K\gamma m = 6\sqrt{mn\alpha\gamma} +
\frac{12\gamma m^2 \log(m/\beta)}{\eps}.
\]

 Hence, the flow $\hat\bbx$ forms an $\eta_{eq}$-approximate Nash flow
 in game $\Gamma^{\hat\tau}$.  To bound the cost of $\hat \bbx$, note that for each edge $e$, the
number of players can increase by at most $K$. Let $\bby\bl = \sum_i
\bbx_i\bl$, then for each edge,  $y_e \ell_e( y_e) - y\bl_e
 \ell_e\left(y\bl_e\right) \leq n\gamma K + n\gamma = nK(\gamma+1).$

Therefore, the average cost for $\hat \bbx$ is
\begin{align*}
\phi(\hat\bbx) & = \frac{1}{n}\sum_{e\in E} y_e\ell_e( y_e) \leq \frac{1}{n}\sum_{e\in
  E} y_e\bl \ell_e(y_e\bl) + m K(\gamma+1) \leq \OPT(\bs) + \alpha +
\frac{\sqrt{mn\gamma\alpha}}{2}+ \frac{\sqrt{mn\alpha}}{2\sqrt{\gamma}}
\end{align*}
This completes the proof.
\end{proof}

\subsection{Analysis of $\med$}
Now that we have analyzed the subroutines $\pcon$ and $\sbr$ along
with computing the private tolls $\hat\tau$, we are ready to analyze
the complete mediator $\med$.  Note that in this analysis we will
assume that the subroutine $\pigd$ is a blackbox that is JDP and computes an approximately optimal flow in
$\Gamma$.

We first prove that the mediator $\med$ is JDP.
This will give the first condition we require of our mediator in Theorem
\ref{thm:motivate}.  A useful tool in proving mechanisms are JDP is the billboard lemma, which states at a high level
that if amechanism can be viewed as posting some public signal
(i.e. as if on ``a billboard'') that is DP in the
players' demands, from which (together with knowledge of their own demand)
players can derive their part of the output of the mechanism, then the
resulting mechanism is JDP.
\begin{lemma}[Billboard Lemma \cite{RR14, HHRRW14}] Let $\cM:
  \cS^n \to \mathcal{O}$ be an $(\eps,
  \delta)$-differentially private mechanism and consider any function
  $\theta: \cS \times \mathcal{O} \to \mathcal{A}$. Define
  the mechanism $M': \cS^n \to \mathcal{A}^n$ as follows: on
  input $\mathbf{s}$, $\cM'$ computes $o = \cM(\mathbf{s})$, and then
  $\cM'(\bs)$ outputs to each $i$:
$$\cM'(\bs)_i = \theta(s_i,o).$$
$\cM'$ is then $(\eps, \delta)$-jointly differentially private.
\label{lem:billboard}
\end{lemma}

We show that $\med$ is jointly differentially private via the billboard
lemma.
\begin{theorem}
For $\eps,\delta,\beta>0$, the procedure
$\med(\Gamma,\eps,\delta,\beta)$ in Algorithm \ref{complete} is
$(\eps,\delta)$-joint differentially private in the player's input
demands $\bs$.
\end{theorem}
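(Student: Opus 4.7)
The plan is to express $\med$ as the composition of $\pigd$ (which is JDP by hypothesis) with a computation that is DP as a function of $\pigd$'s output, and then conclude via Lemma \ref{lem:djcomp} together with closure of DP under post-processing.

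First I would observe that $\sbr$ acts per-player: inspecting Algorithm \ref{sbr}, each player $j$'s final route $\hat\bbx_j$ is determined by $(s_j, \bbx\bl_j, \hat\bby, \hat\tau)$ alone, because both the $\hat\zeta_\eps$-unsatisfaction test for $j$ and the subsequent lowest-cost reroute reference only $\bbx\bl_j$, the congestion $\hat\bby$, and the tolled latencies $\ell + \hat\tau$. Consequently, once $s_{-i}$ is held fixed, the joint output $(\hat\bbx_{-i}, \hat\tau)$ is a deterministic function of the triple $(\bbx\bl_{-i}, \hat\bby, \hat\tau)$. By post-processing closure of differential privacy, it then suffices to show that this triple is $(\eps,\delta)$-DP in $s_i$.

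To prove that, I would apply Lemma \ref{lem:djcomp} with $M_J = \pigd(\Gamma, \eps/4, \delta/2, \beta/2)$ and with $M_D$ the map $\bbx\bl \mapsto (\bbx\bl_{-i},\ \pcon(\bbx\bl, \eps/4),\ \tau^*(\pcon(\bbx\bl,\eps/4)))$. By hypothesis $M_J$ is $(\eps/4, \delta/2)$-JDP. The first coordinate of $M_D$ is trivially $0$-DP in $\bbx\bl$, since modifying $\bbx\bl_i$ leaves $\bbx\bl_{-i}$ untouched; the remaining coordinates are (post-processings of) $\pcon(\bbx\bl, \eps/4)$, which is $(\eps/4)$-DP in $\bbx\bl$ by the standard Laplace analysis (a single player's unsplittable unit flow shifts each edge's congestion by at most $1$, matched by $\Lap(m/\eps)$ noise across the $m$ edges). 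Hence $M_D$ is $(\eps/4)$-DP, so Lemma \ref{lem:djcomp} gives $(2\cdot\eps/4 + \eps/4, \delta/2) = (3\eps/4, \delta/2)$-DP for $M_D \circ M_J$, which is stronger than $(\eps,\delta)$. The only substantive verification — which I expect to be the main obstacle — is the per-player structural claim about $\sbr$ stated above; once that is confirmed from the pseudocode, the remainder is a routine application of the composition and post-processing lemmas already in hand.
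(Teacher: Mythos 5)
Your approach departs from the paper's in a genuine way: the paper proves this via the Billboard Lemma (Lemma~\ref{lem:billboard}), relying on the fact that $\pigd$ exposes an $(\eps/4,\delta/2)$-DP public signal $\Lambda$ from which each $\bbx\bl_j$ can be reconstructed together with $s_j$, and then treats $(\Lambda,\hat\bby)$ as the billboard. You instead use only the JDP property of $\pigd$ as a black box, invoking Lemma~\ref{lem:djcomp} on a per-player map. This is conceptually appealing (it does not peek inside $\pigd$), and your accounting would actually come out tighter ($(3\eps/4,\delta/2)$ versus the paper's $(\eps,\delta)$). However, there is a real gap in the invocation.

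You define $M_D \colon \bbx\bl \mapsto \bigl(\bbx\bl_{-i},\ \pcon(\bbx\bl,\eps/4),\ \tau^*(\pcon(\bbx\bl,\eps/4))\bigr)$ and then assert ``$M_D$ is $(\eps/4)$-DP.'' That assertion is false as stated: the first coordinate is insensitive to changes in $\bbx\bl_i$, but it changes arbitrarily when any $\bbx\bl_j$ with $j\neq i$ changes, and differential privacy quantifies over \emph{every} player's coordinate. So $M_D$ does not satisfy the hypothesis of Lemma~\ref{lem:djcomp}, and you cannot cite it directly. What you do have is that $M_D$ is $(\eps/4)$-DP \emph{in coordinate $i$ only}. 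Inspecting the proof of Lemma~\ref{lem:djcomp} shows that this weaker property is in fact all it uses: the proof fixes the player $i$, replaces $x_i$ by a fixed $x_i'$ using $\Prob{}{M_D(x_i,\bbx_{-i})\in S}\le e^{\eps_D}\Prob{}{M_D(x_i',\bbx_{-i})\in S}$, applies JDP of $M_J$ to $\bbx_{-i}$, and swaps back, never touching any $j\neq i$. So the argument is repairable: state and prove the variant of Lemma~\ref{lem:djcomp} with hypothesis ``$M_D$ is $\eps_D$-DP with respect to coordinate $i$'' and conclusion ``$M$ is $(2\eps_D+\eps_J,\delta)$-DP with respect to player $i$,'' and invoke it for each $i$. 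The paper sidesteps the issue entirely by using the billboard structure of $\pigd$, at the cost of a looser constant. Your structural observation that $\sbr$ acts per player, which you correctly flag as the load-bearing verification, is right and is needed in both proofs.
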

\begin{proof}
In order to show JDP using the Billboard Lemma,
we need to show that for each player $i$, the output flow $\hat
\bbx_i$ and toll vector $\hat \tau$ can be computed only based on
$i$'s demands $s_i$ and some $(\eps, \delta)$-DP
signal.

In \ifnum\appen=0 the full version\else Theorem~\ref{thm:pigd_jdp}\fi, we show that the subroutine
$\pigd(\Gamma, \eps/4, \delta/2, \beta/2)$ operates in the Billboard
model, and can be computed from some $(\eps/4,
\delta/2)$-DP billboard signal $\Lambda$.


Note that the output flow $\hat \bbx_i$ for each player $i$ produced
by $\sbr(\Gamma^{\hat\tau},\hat\bby,\bbx\bl,\hat\zeta_{\eps/4})$ is
just a function of the perturbed congestion $\hat\bby$, $\bbx_i\bl$
and player $i$'s demand. Recall that we know that $\hat\bby =
\pcon(\bbx\bl,\eps/4)$ is $(3\eps/4, \delta/2)$-DP
in $\bs$ by Corollary \ref{cor:pcon_dp}. Therefore, the output flow
$\hat\bbx_i$ for each $i$ is just a function of the $(\eps,
\delta)$-DP signal $(\Lambda, \hat\bby)$, and $i$'s demand
$s_i$. Also, the tolls $\hat\tau$ are computed as a function only of
$\hat\bby$. Therefore, by the Billboard Lemma~\ref{lem:billboard}, the
mediator $\med(\Gamma, \eps,\delta, \beta)$ satisfies $(\eps,
\delta)$-JDP.
\end{proof}

Now we give the appropriate choices of the parameters $(\eps, \delta,
\beta)$ for $\med(\Gamma,\eps,\delta,\beta)$ that leads to our main
result in \ifnum\final=1 Theorem \ref{thm:main_one} \else the following theorem.  This result follows from
instantiating Theorem~\ref{thm:motivate} with a JDP algorithm that computes an approximately optimal flow
$\hat\bbx$ and tolls $\hat \tau$ such that $\hat \bbx$ forms an
approximate equilibrium in the routing game with tolls $\hat\tau$.\fi
 \begin{proof}[\ifnum\final=0 Proof \fi of Theorem \ref{thm:main_one}]
 Given any routing game instance $\Gamma=(G, \ell, \bs)$,  we first show that $\med$ is a
mediator that makes good behavior an $\eta$-approximate Nash
equilibrium of the mediated game $\Gamma_{\med}$ for $\eta = \tilde O\left(m^{3/2} n^{4/5} \right).$

We assume that $\pigd(\Gamma,\eps/2,\delta/2,\beta/2)$ produces
  an $\alpha$-approximate optimal flow $\bbx\bl$ with probability
  $1-\beta/2$ (leaving the formal proofs to \ifnum\appen=0 the full version\else Theorem \ref{thm:pigd_jdp}
  and Theorem \ref{thm:approx_opt_flow}\fi) where $\alpha$ is given  in
  \eqref{eq:alpha}. Consider the instantiation of
  $\med(\Gamma,\eps,\delta,\beta)$ with \ifnum\final=0\[\else $\fi\eps =  \frac{\sqrt{m}}{n^{1/5}}, \ifnum\final=0 \quad \fi \delta =  n^{-2}, \ifnum\final=0 \quad \fi \beta =  n^{-2}.\ifnum\final=0\]\else $\fi
  
 Given the functional tolls $\tau^*$ defined in~\eqref{eq:marginal-cost} and the fact that if we ever get an edge congestion $\hat y_e > n$ from the output of $\pcon$ then we round it down to $n$, so the edge tolls $\hat \tau_e$ are never bigger than $n\gamma$.  Using our bound for $\eta_{eq}(\alpha)$ in \eqref{eq:eta_eq} and setting $\eta_{eq} = \eta_{eq}(\alpha)$ where $\alpha$ is \ifnum\final=0 as above\else given in \eqref{eq:alpha}\fi, from Theorem \ref{thm:motivate} we have with probability $1-\beta$ the bound $\eta \leq \eta_{eq} + m(U + n)(2\eps + \beta + \delta)
  = \tilde O\left( m^{3/2} n^{4/5}  \right)$.

We then show that good behavior results in an $\eta_{opt}$-approximately optimal flow for the original routing game instance $\Gamma$, where \ifnum\final=0\[\else $\fi \eta_{opt} = \tilde O\left( mn^{4/5} \right).\ifnum\final=0\]\else $\fi  It then follows that $\eta_{opt} = \eta_{opt}(\alpha)$ from \eqref{eq:eta_opt} and for $\alpha$ given in \eqref{eq:alpha}.
\end{proof}

\section{Computing an Approximately Optimal Flow under JDP} \label{opt_jdp}

In this section we show how to compute an approximately optimal flow
$\bbx\bl$
under joint differential privacy. 
We first consider a convex relaxation of the problem of minimizing
social cost in the routing game instance $(\Gamma,\ell,\bs)$.  Let
$\cF^R(\bs)\subseteq [0,1]^{n \times m}$ be the set of feasible
\emph{fractional flows} (i.e.~the convex relaxation of the set
$\cF(\bs)$).  Then the optimal fractional flow is given by the convex
program:
\begin{align}
  \min & \qquad c(\bby) = \frac{1}{n}\sum_{e \in E} y_e \ell_e(y_e) \label{eq:convex_program}\\
  \text{such that} & \qquad \bbx \in \cF^R(\bs) \subseteq [0,1]^{n \times m} \nonumber\\
  & \qquad y_e =\sum_{i=1}^n x_{i,e} \qquad \forall e \in E, \quad \forall i \in [n] \label{eq:violation}
 \end{align}
Note that the second derivative of $y_e \ell_e(y)$ is $2\ell'_e(y_e) +
y_e \ell''_e(y_e)$.  Since $\ell_e$ is assumed to be convex and
nondecreasing, the second derivative is non-negative as long as $y_e
\geq 0$. Hence the objective function $c$ of this program is
indeed convex on the feasible region.  \jnote{Justifying that this
  program is convex is important!  Don't leave it in a footnote.}
\jnote{We hadn't been assuming $\ell$ is differentiable, so I added
  that.  We probably don't need it.}

We write $\cG^R(\bs) := \cF^R(\bs) \times [0,n]^m$ to denote the space
where the decision variables reside, i.e. $(\bbx,\bby) \in
\cG^R(\bs)$. Given any demands $\bs$, we write
$\OPT^{R}(\bs)$ to denote the optimal objective value of the convex
program and $\OPT(\bs)$ to be the optimal objective value when $\bbx \in \cF(\bs)$.  Note that we always have
$\OPT^{R}(\bs) \leq \OPT(\bs)$

Our goal is to first compute an approximately optimal solution to the
relaxed convex program, and then round the resulting fractional
solution to be integral. We then show that the final solution is an
approximately optimal flow to the original instance
$\Gamma$.


\subsection{The JDP Gradient Descent Algorithm} \label{sec:pgdalg}
We will work extensively with the Lagrangian of our problem.  For each
constraint of~\eqref{eq:violation}, we introduce a dual variable
$\lambda_e$. The Lagrangian is then
\ifnum \final=1 $ \else $$ \fi
\cL(\bbx, \bby, \lambda) = c(\bby) - \sum_{e\in E} \lambda_e \left( \sum_i
x_{i,e} - y_e \right).
\ifnum \final=1 $ \else $$ \fi
Since our convex program satisfies Slater's condition \cite{Slater59}, we know
that strong duality holds:

\begin{equation}
\max_{\lambda\in \mathbb{R}^m} \min_{(\bbx, \bby) \in \cG^R(\bs)} \cL(\bbx, \bby, \lambda) =
\min_{(\bbx, \bby)\in \cG^R(\bs)} \max_{\lambda\in \mathbb{R}^m} \cL(\bbx, \bby, \lambda) = \OPT^R(\bs).
 \label{eq:zero_sum}
\end{equation}

We will interpret the Lagrangian objective as the payoff function of a
zero-sum game between the minimization player, who plays flows $\bbz =
(\bbx, \bby)$, and the maximization player, who plays dual variables
$\lambda$. We will abuse notation and write $\cL(\bbz, \lambda) =
\cL(\bbx, \bby, \lambda)$.  We refer to the game defined by this
payoff matrix the \emph{Lagrangian game}.  We will privately compute
an approximate equilibrium of the Lagrangian game by simulating
repeated plays between the two players.  In each step, the dual player
will play an approximate best response to the flow player's
strategy. The flow player will update his flow using a no-regret
algorithm.

In particular, the flow player uses an online gradient descent algorithm to
produce a sequence of $T$ actions $\{\bbz^{(1)}, \ldots, \bbz^{(T)}\}$ based
on the loss functions given by the dual player's actions $\{\lambda^{(1)},
\ldots, \lambda^{(T)}\}$. At each round $t = 1, \ldots, T$, the flow player
will update both $\bbx^{(t)}$ and $\bby^{(t)}$ using the projected gradient
update step $\gd$ in Algorithm~\ref{GD}.
\ifnum\final=1
\begin{algorithm}[h]
 \SetAlgoNoLine
 $\gd(\cD, r, \omega,\eta)$\;
  \KwIn{Convex feasible domain $\cD$, a convex function $r$,
  some $\omega \in \cD$, and learning parameter $\eta$.} 
  \KwOut{Some new $\omega' \in\cD$. }
  We define the projection map $\Pi_{\cD}$ as $\Pi_{{\cD}}(v') = \argmin_{v \in \cD} || v - v'||_2$.\\
We then set $\omega' \gets \Pi_{\cD}(\omega - \eta \nabla r(\omega))$.\\
\Return $\omega'$
\caption{Gradient Descent with Projection} \label{GD}
\end{algorithm}
\fi

\ifnum\final=0
\jnote{We wrote ``define the projection map $\Pi_{\hat\cD}$.  The hat seemed like an error.}
\begin{algorithm}[h]
\caption{Gradient Descent with Projection} \label{GD}
\begin{algorithmic}[0]
  \INPUT Convex feasible domain $\cD$, a convex function $r$,
  some $\omega \in \cD$, and learning parameter $\eta$. \OUTPUT
  Some new $\omega' \in\cD$. \Procedure{\gd}{$\cD, r,
    \omega,\eta$} \State We define the projection map $\Pi_{\cD}$ as
$$
\Pi_{{\cD}}(v') = \argmin_{v \in \cD} || v - v'||_2
$$
\State We then set
$$
\omega' \gets \Pi_{\cD}(\omega - \eta \nabla r(\omega))
$$
\Return $\omega'$
\EndProcedure
\end{algorithmic}
\end{algorithm}
\fi

In order to reason about how quickly the projected gradient procedure
converges to an approximately optimal flow, we need to bound the
diameter of the space of dual solutions.  We will also need to argue
that bounding the space of feasible dual solutions does not affect the
value of the game.  Specifically, we will bound the dual players'
action to the set
\ifnum\final=1 $ \else \begin{equation} \fi
 \cB = \{\lambda\in \mathbb{R}^{m} \mid \|\lambda\|_1 \leq 2m\},
\ifnum\final=1 $ \else \label{eq:dual_domain} \end{equation} \fi
Then fixing a flow played by the primal player, the dual player's best
response is simply to select an edge $e$ where the
constraint~\eqref{eq:violation} is most violated and set
$\lambda^{(t)}_e = \pm 2m$.  Notice that, since the constraints depend
on the source/sink pairs, and we need to ensure joint differential
privacy with respect to this data, we must select the most violated
constraint in a way that maintains privacy.  Using a straightforward
application of the DP exponential mechanism
\cite{MT07}, we can obtain a constraint that is approximately the most
violated.  Since this step is standard, we defer the details to the
\ifnum\appen=0 full version\else appendix\fi.

From the repeated plays of the Lagrangian Game, we will obtain a
fractional solution $\overline {\bf z} = (\overline \bbx, \overline
\bby)$ to the convex program. Finally, we will round the fractional
flow $\overline\bbx$ to an integral solution $\bbx\bl$ for the
original minimum-cost flow instance $\Gamma = (G,\ell,\bs)$ using the
rounding procedure $\psrr$ proposed by~\cite{RT87}, given in \ifnum\appen=0 the full version\else Algorithm
\ref{PS}\fi.  The full procedure $\pigd$ is given in \ifnum\appen=0 the full version\else Algorithm
\ref{PIGD}\fi.

\ifnum\final=0
\begin{algorithm}[h]
\caption{Path Stripping and Randomized Rounding} \label{PS}
\begin{algorithmic}[0]
  \INPUT A fractional flow solution $\overline\bbx_i \in \cF^R(\bs_i)$ for player $i$
  \OUTPUT An integral flow solution $\bbx_i \in \cF(\bs_i)$ for player $i$
  \Procedure{\psrr}{$\overline\bbx_i$}
\State Let $\Lambda_i = \{P_j\}$ be the set of $(s_i^1, s_i^2)$-paths in $G$
\For{each path $P_j$}
\State Let $w_j = \min\{\overline x_{i, e}\mid e\in P_j\}$
\For{each edge $e\in P_j$}
\State Let $\overline x_{i, e} \gets x_{i,e} - w_j$
\EndFor
\EndFor
\State Sample a path $P$ from $\Lambda_i$ such that $\Prob{}{P = P_j} = w_j$
\For{each edge $e\in E$}
\State Let $x_{i, e} = \mathbb{I}[e\in P]$
\EndFor
\Return $\bbx_i$
\EndProcedure
\end{algorithmic}
\end{algorithm}
\fi

\ifnum\final=0

\ifnum\final=1
\section{Details for JDP Gradient Descent}
\fi

\ifnum\final=1
\begin{algorithm}[h]
 \SetAlgoNoLine
 $\psrr(\overline\bbx_i)$\;
  \KwIn{A fractional flow solution $\overline\bbx_i \in \cF^R(\bs_i)$ for player $i$.} 
  \KwOut{An integral flow solution $\bbx_i \in \cF(\bs_i)$ for player $i$. }
Let $\Lambda_i = \{P_j\}$ be the set of $(s_i^1, s_i^2)$-paths in $G$ \\
{\bf for } each path $P_j$ {\bf do } let $w_j = \min\{\overline x_{i, e}\mid e\in P_j\}$ \\
$\qquad${\bf for } each edge $e\in P_j$ {\bf do } let $\overline x_{i, e} \gets x_{i,e} - w_j$\\
Sample a path $P$ from $\Lambda_i$ such that $\Prob{}{P = P_j} = w_j$ \\
{\bf for } each edge $e\in E$ {\bf do} $ \quad $ $x_{i, e} = \mathbb{I}[e\in P]$\\
\Return $\bbx_i$
\caption{Path Stripping and Randomized Rounding} \label{PS}
\end{algorithm}

\fi

\ifnum\final=1
\begin{algorithm}[]
 \SetAlgoNoLine
 $\pigd(\Gamma,\epsilon,\delta, \beta)$\;
\KwIn{Routing Game $\Gamma = (G,\ell,\bs)$; privacy parameters $(\eps, \delta)$; failure probability $\beta$}
\KwOut{$\bbx_i\bl$, a $s_i = (s_i^1, s^2_i)$ flow for each player $i \in [n]$}
Define the following quantities:
$$
T \gets  \Theta\left( \frac{\epsilon n\sqrt{m}}{\log(mn/\beta) \sqrt{\log(1/\delta)}} \right)
\qquad \eps' \gets \eps/\sqrt{8T\ln(1/\delta)}
\qquad \eta_y \gets \frac{D_y}{G_y \sqrt{T}} \qquad \eta_x \gets \frac{D_x}{G_x \sqrt{T}}
$$
$$
\qquad G_y \gets \sqrt{ (m-1)(\gamma+1)^2 + (\gamma+1 + 2m)^2}\qquad D_y \gets n\sqrt{m} 
$$
$$
\qquad G_x \gets 2m\sqrt{n} \qquad D_x \gets\sqrt{mn}  
$$
Initialize: $\bby^{(1)} \in [0,n]^m$ and $\bbx^{(1)} \in
\cF^R(\bs)$. Let $\bbz^{(1)} \gets (\bbx^{(1)}, \bby^{(1)} )$ \\
Define the quality score $q : \cG(\bs) \times  \left((+, -) \times E \right) \to
\R$:
\[
f_e(\bbz) \gets \sum_i x^i_e - y_e \qquad q(\bbz,  (+, e)) \gets + f_e(\bbz) \qquad q(\bbz,  (-, e)) \gets -f_e(\bbz).
\]
{\bf for } $t = 1, \cdots, T$ {\bf do} \\
$\qquad$ Let $(\bullet^{(t)}, e^{(t)}) \gets \cM_{E}(\bs, q, \epsilon')$ (The Exponential Mechanism)\\
$\qquad$ Approximate best-response for the dual player $\lambda\utt$:\\
$\qquad$ {\bf if} $\bullet\utt = +\quad$ {\bf then } $\quad\lambda_{e^{(t)}}^{(t)} \gets -2m$ \\
$\qquad$ {\bf else} $\quad\lambda_{e^{(t)}}^{(t)} \gets + 2m$ \\
$\qquad$ {\bf for } $e' \in E\setminus\{e^{(t)}\}$ {\bf do } $\lambda_{e'}^{(t)} = 0$ and Gradient Descent update on the primal:\\
$\qquad \qquad$ Take a step to improve the individual flow variables $\bbx^{(t)}$:
\[\bbx^{(t+1)} \gets \gd(\cF(\bs),\cL (\cdot,\bby^{(t)},\lambda^{(t)}),\bbx^{(t)},\eta_x)\]
$\qquad \qquad$ Take a step to improve the congestion variables $\bby^{(t)}$:
\[ \bby^{(t+1)}\gets\gd([0,n]^m,\cL (\bbx^{(t)},\cdot,\lambda^{(t)}),\bby^{(t)},\eta_y)\]
$\qquad \qquad$ $\bbz^{(t+1)} = (\bbx^{(t+1)}, \bby^{(t+1)})$ be the new action for the primal player.\\
Set $\overline \bbx = \frac{1}{T}\sum_{t=1}^T \bbx^{t}$  and $\overline\lambda = \frac{1}{T}\sum_{t=1}^T \lambda^{(t)}$ \\
{\bf for } each player $i$ {\bf do} $\quad$ round the fractional flow: $\bbx_i\bl \gets \psrr(\overline\bbx_i)$ \\
\Return $\bbx\bl = (\bbx\bl_{i})_{i \in [n]}$
\caption{Computing Approximately Optimal Flow via JDP Gradient Descent}\label{PIGD}
\end{algorithm}

\fi

\ifnum\final=0
\begin{algorithm}[]
\caption{Computing Approximately Optimal Flow via JDP Gradient Descent}\label{PIGD}
\begin{algorithmic}[0]
\INPUT  Routing Game $\Gamma = (G,\ell,\bs)$; privacy parameters $(\eps, \delta)$; failure probability $\beta$
\OUTPUT $\bbx_i\bl$, a $s_i = (s_i^1, s^2_i)$ flow for each player $i \in [n]$
\Procedure {\pigd}{$\Gamma,\epsilon,\delta, \beta$}
\State Define the following quantities:
$$
T \gets \Theta\left( \frac{\epsilon n\sqrt{m}}{\log(mn/\beta) \sqrt{\log(1/\delta)}} \right)
\qquad \eps' \gets \eps/\sqrt{8T\ln(1/\delta)}
\qquad \eta_y \gets \frac{D_y}{G_y \sqrt{T}} \qquad \eta_x \gets \frac{D_x}{G_x \sqrt{T}}
$$
$$
\qquad G_y \gets \sqrt{ (m-1)(\gamma+1)^2 + (\gamma+1 + 2m)^2}\qquad D_y \gets n\sqrt{m} 
$$
$$
\qquad G_x \gets 2m\sqrt{n} \qquad D_x \gets\sqrt{mn}  
$$
\State Initialize: $\bby^{(1)} \in [0,n]^m$ and $\bbx^{(1)} \in
\cF^R(\bs)$. Let $\bbz^{(1)} \gets (\bbx^{(1)}, \bby^{(1)} )$ 

\State Define the quality score $q : \cG(\bs) \times  \left((+, -) \times E \right) \to
\R$:
\[
f_e(\bbz) \gets \sum_i x^i_e - y_e \qquad q(\bbz,  (+, e)) \gets + f_e(\bbz) \qquad q(\bbz,  (-, e)) \gets -f_e(\bbz).
\]
\For{$t = 1, \cdots, T$}

\State Let $(\bullet^{(t)}, e^{(t)}) \gets \cM_{E}(\bs, q, \epsilon')$ (The Exponential Mechanism)
\State Approximate best-response for the dual player $\lambda\utt$:
\If{$\bullet\utt = +\quad$} $\quad \lambda_{e^{(t)}}^{(t)} \gets -2m$
\Else $\quad \lambda_{e^{(t)}}^{(t)} \gets + 2m$
\EndIf
\For{$e' \in E\setminus\{e^{(t)}\} \quad$}
$\quad\lambda_{e'}^{(t)} = 0$
\EndFor
\State Gradient descent update on the primal:

\State Take a step to improve the individual flow variables $\bbx^{(t)}$:
\[\bbx^{(t+1)} \gets \gd(\cF(\bs),\cL (\cdot,\bby^{(t)},\lambda^{(t)}),\bbx^{(t)},\eta_x)\]
\State Take a step to improve the congestion variables $\bby^{(t)}$:
\[ \bby^{(t+1)}\gets\gd([0,n]^m,\cL (\bbx^{(t)},\cdot,\lambda^{(t)}),\bby^{(t)},\eta_y)\]
\State Let $\bbz^{(t+1)} = (\bbx^{(t+1)}, \bby^{(t+1)})$ be the new action for the primal player.
\EndFor
\State $\overline \bbx = \frac{1}{T}\sum_{t=1}^T \bbx^{t}$  and $\overline\lambda = \frac{1}{T}\sum_{t=1}^T \lambda^{(t)}$
\For{each player $i$}
 round the fractional flow: $\bbx_i\bl \gets \psrr(\overline\bbx_i)$
\EndFor
\Return $\bbx\bl = (\bbx\bl_{i})_{i \in [n]}$
\EndProcedure
\end{algorithmic}
\end{algorithm}

\fi

\subsection{Privacy of the JDP Gradient Descent Algorithm} \label{sec:pgdprivacy}

We will use Lemma \ref{lem:billboard} (the \emph{billboard lemma}) to
prove that $\pigd$ satisfies joint differential privacy.  We first
show that the sequence of plays by the dual player satisfies standard
differential privacy.  \rynote{I moved the Billboard Lemma earlier
  because we used that is an earlier privacy proof}


\begin{lemma}
  \label{lem:dualDP}
  The sequence $\{\lambda\utt\}_{t=1}^T$ in
  $\pigd(\Gamma,\eps,\delta,\beta)$ satisfies $(\eps,
  \delta)$-differential privacy in the reported types $\bs$ of the
  players.
\end{lemma}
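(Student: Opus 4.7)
The plan is to recognize $\{\lambda^{(t)}\}_{t=1}^T$ as the output of a $T$-fold adaptive composition of exponential mechanisms, each with bounded sensitivity, and then invoke the advanced composition theorem.

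First, I would establish a structural invariant: for any fixed sequence of past dual plays $(\lambda^{(1)}, \ldots, \lambda^{(t-1)})$, switching one player's demand from $s_i$ to $s_i'$ only changes the $i$-th block $\bbx_i^{(t)}$ of the primal iterate, while leaving the blocks $\bbx_j^{(t)}$ for $j\neq i$ and the congestion vector $\bby^{(t)}$ unchanged. This holds because (a) the feasibility polytope factors as $\cF^R(\bs) = \prod_{i=1}^n \cF^R(s_i)$, so the Euclidean projection $\Pi_{\cF^R(\bs)}$ decomposes coordinate-wise across players; (b) the gradient of $\cL$ in the $\bbx$-direction equals $-\lambda^{(t)}$, which has no dependence on $\bs$ or on $\bbx$; and (c) the update for $\bby^{(t+1)}$ only uses $\bby^{(t)}$, $\lambda^{(t)}$, and $c$, none of which depend on $\bs$. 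A straightforward induction on $t$ then formalizes the claim, provided the initialization $\bbx^{(1)}$ respects the per-player product structure (which can be enforced by choosing each $\bbx_i^{(1)}$ as a canonical element of $\cF^R(s_i)$).

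Using this invariant, I would bound the sensitivity of the quality function used by the exponential mechanism at each round. Since $q(\bbz^{(t)}, (\pm, e)) = \pm\bigl(\sum_i x_{i,e}^{(t)} - y_e^{(t)}\bigr)$ and only $x_{i,e}^{(t)}$ changes under a swap from $s_i$ to $s_i'$, with $x_{i,e}^{(t)} \in [0,1]$, the quality function has sensitivity at most $1$ with respect to a single player's demand. By the standard privacy analysis of the exponential mechanism $\cM_{E}$, the selection $(\bullet^{(t)}, e^{(t)}) \gets \cM_{E}(\bs, q, \eps')$ is $\eps'$-differentially private in $\bs$ conditional on the previous outputs, and hence so is $\lambda^{(t)}$, which is a deterministic function of this selection.

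Finally, I would apply the advanced composition theorem for adaptively chosen $\eps'$-DP mechanisms. The choice $\eps' = \eps/\sqrt{8T\ln(1/\delta)}$ made in $\pigd$ is precisely calibrated so that the $T$-fold adaptive composition yields $(\eps, \delta)$-differential privacy for the full sequence $\{\lambda^{(t)}\}_{t=1}^T$ in $\bs$. The main technical content is the sensitivity bound, which rests entirely on the product structure of $\cF^R(\bs)$ and on the fact that the primal gradient of $\cL$ does not couple players; everything else is a direct application of standard privacy tools.
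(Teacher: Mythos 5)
Your proposal is correct and follows essentially the same route as the paper: per-round privacy of $(\bullet^{(t)}, e^{(t)})$ from the exponential mechanism, post-processing to get $\lambda^{(t)}$, and advanced composition over the $T$ rounds with $\eps' = \eps/\sqrt{8T\ln(1/\delta)}$. The one thing you do that the paper does not do \emph{in this lemma} is explicitly establish the sensitivity bound $\Delta q \le 1$ by tracking the structural invariant that a unilateral change in $s_i$ only perturbs the $i$-th block of $\bbx^{(t)}$ (and leaves $\bby^{(t)}$ untouched); the paper relies on this implicitly here, mentions ``each agent could only affect the quality score $q$ of each edge by $1$'' only later in the utility analysis, and proves the per-player decomposition of the update only in the subsequent JDP theorem, so your proof is the more self-contained of the two.
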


\begin{proof}
\jnote{It's great to defer privacy preliminaries to the end.  But people should still be able to read the paper straight through.  Can we just write exactly what distribution we sample from, then state the key lemmas (dp+accuracy) and say that they follow from standard analysis of the exp mech?  Then in the appendix we can include a proof for completeness.}
At each iteration of the main for-loop, we use the exponential
mechanism with quality score $q$ to find which edge $e$ has the most
violated constraint in \eqref{eq:violation}. By Lemma
\ref{lem:exp_private}, each tuple\footnote{Recall that $\bullet\in \{+,-\}$ indicating whether $\sum_i x_{i, e} > y_e$ or $\sum_i x_{i, e} < y_e$} $(\bullet\utt, e^{(t)})$is $\eps'$-differentially private.
Note that the dual strategy $\lambda\utt$ is simply a post-processing
function of the tuple $(\bullet\utt, e\utt)$, and by Lemma~\ref{lem:post},
we know that $\lambda\utt$ is $\eps'$-differentially private. By the
composition theorem for differential privacy (Lemma~\ref{lem:advanced_comp}), we know that the
sequence of the dual plays $\lambda\utt$ satisfies $(\eps,
\delta)$-differential privacy, with the assignment of $\eps'$ in $\pigd$.

\end{proof}

We are now ready to show that our algorithm satisfies joint
differential privacy.

\begin{theorem}
$\pigd(\Gamma,\eps,\delta,\beta)$ given in Algorithm~\ref{PIGD} is $(\eps,\delta)$-jointly
  differentially private.
\label{thm:pigd_jdp}
\end{theorem}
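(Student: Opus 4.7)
The plan is to apply the Billboard Lemma (Lemma~\ref{lem:billboard}) using the sequence of dual plays as the public signal. The starting point is Lemma~\ref{lem:dualDP}, which has already established that the dual sequence $\Lambda = (\lambda^{(1)}, \ldots, \lambda^{(T)})$ is $(\eps,\delta)$-differentially private in $\bs$. It therefore suffices to show that each player $i$'s final output $\bbx_i\bl = \psrr(\overline{\bbx}_i)$ can be recovered from $\Lambda$ together with player $i$'s own demand $s_i$; the result will then follow from the Billboard Lemma.

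The key structural observation I would establish is that the primal update for each player's flow decomposes across players and is independent of $\bby\utt$. Concretely, since
\[
\cL(\bbx,\bby,\lambda) = c(\bby) - \sum_{e \in E}\lambda_e\Bigl(\sum_i x_{i,e} - y_e\Bigr),
\]
the partial gradient with respect to $\bbx_i$ is just $\nabla_{\bbx_i}\cL = -\lambda$, which depends on neither the other players' flows nor on $\bby\utt$. Moreover, the relaxed feasible region factors as $\cF^R(\bs) = \prod_i \cF^R(s_i)$, because each player's flow conservation constraints \eqref{eq:demand}--\eqref{eq:conservation} only involve $s_i$. Hence the projection $\Pi_{\cF^R(\bs)}$ in $\gd$ acts coordinatewise on players, and the update rule simplifies to
\[
\bbx_i^{(t+1)} \;=\; \Pi_{\cF^R(s_i)}\!\bigl(\bbx_i^{(t)} + \eta_x \lambda^{(t)}\bigr),
\]
which depends only on $\bbx_i^{(t)}$, $\lambda^{(t)}$, and $s_i$. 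By induction, $\bbx_i\utt$ for every $t$ is a deterministic function of $s_i$, the initial point $\bbx_i^{(1)}$ (which may be chosen deterministically, e.g.~any fixed feasible route for $s_i$), and the prefix $(\lambda^{(1)}, \ldots, \lambda^{(t-1)})$.

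I would then note that the $\bby$-update does not involve $\bs$ at all: the gradient $\nabla_{\bby}\cL = \nabla c(\bby) + \lambda$ depends only on $\bby\utt$ and $\lambda\utt$, so $\bby\utt$ is a deterministic function of the $\lambda$-sequence and the (public) initialization. This confirms that $\bby\utt$ does not need to be on the billboard, and also validates the sensitivity claim used in Lemma~\ref{lem:dualDP}: conditional on $(\lambda^{(1)}, \ldots, \lambda^{(t-1)})$, changing a single player $i$'s demand $s_i$ affects only the coordinate $\bbx_i\utt$ (not $\bby\utt$ and not $\bbx_j\utt$ for $j \neq i$), so each coordinate of the quality score $q(\bbz\utt, \cdot)$ has sensitivity at most $1$.

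Combining these observations, the map $(s_i, \Lambda) \mapsto \bbx_i\bl$ is well-defined: given $\Lambda$ and $s_i$, one reconstructs $\bbx_i^{(1)}, \ldots, \bbx_i^{(T)}$, then averages to get $\overline{\bbx}_i$, and finally applies $\psrr$ (which acts on each player's individual fractional flow and hence only uses $\overline{\bbx}_i$ and $s_i$). Taking $\Lambda$ as the billboard, applying Lemma~\ref{lem:dualDP} for its privacy, and invoking Lemma~\ref{lem:billboard} with $\theta(s_i, \Lambda) = \psrr(\overline{\bbx}_i)$ then yields the claimed $(\eps, \delta)$-joint differential privacy of $\pigd$. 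The main conceptual step, and the one I expect to require the most care, is verifying the separability of the primal update and the independence of $\bby\utt$ from $\bs$; the rest is a direct application of the billboard framework.
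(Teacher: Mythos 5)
Your proposal is correct and follows essentially the same route as the paper's proof: both take the dual sequence $\{\lambda\utt\}$ as the billboard signal (private by Lemma~\ref{lem:dualDP}), observe that the projected-gradient update on $\bbx$ decouples coordinatewise over players so that $\bbx_i\utt$ is determined by $s_i$ and the prefix of $\lambda$'s alone, and then invoke Lemma~\ref{lem:billboard}. The one thing you spell out that the paper leaves implicit is worth keeping: that $\bby\utt$ is a function of the $\lambda$-history only and that changing $s_i$ leaves $\bbx_j\utt$ ($j\neq i$) and $\bby\utt$ unchanged, which is precisely what makes the quality score $q(\bbz\utt,\cdot)$ have sensitivity $1$ and so underwrites the exponential-mechanism step inside Lemma~\ref{lem:dualDP}.
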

\begin{proof}
In order to establish joint differential privacy using the Billboard
Lemma (Lemma~\ref{lem:billboard}), we just need to show that the
output solution $\{\bbx_i\}$ for each player $i$ is just a
function of the dual plays $\{\lambda\utt\}$ and $i$'s private data.

Note that initially, each player $i$ simply sets $\bbx_i^{(1)}$ to be
a feasible flow in the set $\cF^R(\bs_i)$, which only depends on $i$'s
private data.

Then at each round $t$, the algorithm updates the vector $\bbx\utt$
using the gradient:
$$\nabla_{\bbx} \cL (\bbx, \bby\utt ,\lambda^{(t)}) = \left(\left( -
\lambda_e^{(t)} \right)_{e \in E} \right)_{i\in [n]}.$$

The gradient descent update for $\bbx\utt$ is
\begin{align*}
\bbx^{(t+1)} &= \Pi_{\cF^R(\bs)} \left[\bbx\utt - \eta_x
  (\lambda\utt)_{i\in [n]}\right]\\
&= \arg\min_{\bbx\in \cF^R(\bs)} \left\|\bbx -
\left(\bbx\utt - \eta_x(\lambda\utt)_{i\in[n]} \right) \right\|_2^2\\
&= \arg\min_{\bbx\in\cF^R(\bs)} \sum_{i\in [n]}\left( \sum_{e\in E}\left\|x_{i, e} - (x\utt_{i,e} - \eta_x\lambda\utt_e)\right\|_2^2\right)
\end{align*}

Note that this update step can be decomposed into $n$ individual
updates over the players:
\[
\bbx^{(t+1)}_i = \arg\min_{\bbx\in \cF^R(\bs_i)} \sum_{e\in E} \left\|
x_{i,e} - \left(x\utt_{i,e} -\eta_x\lambda_e\utt \right) \right\|_2^2
\]

Since such an update only depends on the private data of $i$ and also
the sequence of dual plays $\{\lambda\utt\}$, we know that $\{\bbx\utt\}$
satisfies $(\eps, \delta)$-joint differential privacy by the Billboard Lemma.

Finally, the output integral solution $\bbx_i$ to each player $i$ is
simply a sample from the distribution induced by the average play of
$i$: $\overline\bbx$. Therefore, we can conclude that releasing
$\bbx_i$ to each player $i$ satisfies $(\eps, \delta)$-joint
differential privacy.
\rynote{Does the Round step in the algorithm mean sample $x$ uniformly at random from the $x^{(t)}$'s? Or are we really rounding in some other way?  Are the $\bbx^{(t)}$'s integral?}
\snote{defined now}

\end{proof}

\subsection{Utility of the JDP Gradient Descent Algorithm} \label{sec:pgdutility}
We now establish the accuracy guarantee of the integral flow $\bbx\bl$ computed by the procedure $\pigd$.  First, consider the average of the actions taken by both players over the $T$ rounds of the algorithm $\pigd$: $\overline\bbz = \frac{1}{T} \sum_{t=1}^T \bbz^{(t)}$ and $\overline \lambda = \frac{1}{T} \sum_{t=1}^T \lambda^{(t)}$.  Recall that the \emph{minimax value} of the Lagrangian game is defined as
\begin{equation}
\max_{\lambda\in \mathbb{R}^m} \min_{(\bbx, \bby) \in \cG^R(\bs)} \cL(\bbx, \bby, \lambda) =
\min_{(\bbx, \bby)\in \cG^R(\bs)} \max_{\lambda\in \mathbb{R}^m} \cL(\bbx, \bby, \lambda) = \OPT^R(\bs).
\label{eq:minimax1}
\end{equation}

Thus, in order to show that $\overline \bbz$ is a flow with nearly optimal cost (i.e.~cost not much larger than $\OPT^R(\bs)$), it suffices to show that $(\overline \bbz, \overline \lambda)$ are a pair of ``approximate minimax strategies''.  That is, each player is guaranteeing itself a payoff that is close to the value of the game.  Formally, $(\overline \bbz, \overline \lambda)$ is a pair of \emph{$\cR$-approximate minimax strategies} if
$$
\forall \bbz',\; \cL(\overline \bbz, \overline \lambda) \leq \cL(\bbz', \overline \lambda) + \cR \qquad \textrm{and} \qquad \forall \lambda',\; \cL(\overline \bbz, \overline \lambda) \geq \cL(\overline \bbz, \lambda') - \cR.
$$
\jnote{It's not immediately clear from this why approximate minimax guarantees something close to the value of the game.  We should add an explanation of that for people unfamiliar with these aguments..}

Looking ahead, using the properties of gradient descent, we can show that $(\overline \bbz, \overline \lambda)$ are a pair of $\cR$-approximate minimax strategies for a bound $\cR$ that will grow with the norm of the dual player's actions, i.e.~$\|\lambda^{(t)}\|_2$.  Thus, in $\pigd$, the dual player's action is chosen to have bounded norm (at most $2m$), in order to ensure $\cR$ is relatively small.  However, from~\eqref{eq:minimax1} it's not clear that the optimal dual strategy has small norm, so restricting the norm of the dual player's actions might change the value of the game.  However, we show that restricting the norm of the dual player's action does not change the value of the game.

\jnote{In this paragraph we take $\lambda^*$ to be an unrestricted maxmin strategy.  So the maxmin strategy promised in Lem 4.5 might not be equal to $\lambda^*$.  So I called it $\lambda^*_{\cB}$.}
Let $(\bbz^*, \lambda^*)$ be a pair of (exact) minimax strategies in the Lagrangian game.  By strong duality, we know that $$\cL(\bbz^*, \lambda^*) = \OPT^R(\bs)$$ and $\bbz^*$ is an optimal and feasible solution. We now reason about the \emph{restricted Lagrangian game}, in which the dual player's action is restricted to the space $\cB = \{\lambda\in \mathbb{R}^{m} \mid \|\lambda\|_1 \leq 2m\} \subseteq \R^m$.  The next lemma states that even when the dual player's actions are restricted, then $\bbz^*$ is still a minimax strategy for the primal player.  That is, the primal player cannot take advantage of the restriction on the dual player to obtain a higher payoff. 
\jnote{Do we want an $L1$ bound here?  The regret bound for \gd~is naturally stated in terms of $L2$.  Maybe we should just use $L2$ here to match?}
\begin{lemma}
There exists a dual strategy $\lambda^*_{\cB} \in \cB$ such that $(\bbz^*, \lambda^*_{\cB})$ is a pair of (exact) minimax strategies for the restricted Lagrangian game.
\end{lemma}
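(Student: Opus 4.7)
The plan is to exhibit $\lambda^*_{\cB}$ directly by reading off the KKT conditions of the relaxed convex program and then bounding $\|\lambda^*_{\cB}\|_1$ using Assumption~\ref{assumption}. First I would observe that, because $\bbz^*=(\bbx^*,\bby^*)$ is primal feasible, the coupling constraints $\sum_i x^*_{i,e}=y^*_e$ all hold with equality, so the penalty term in $\cL$ vanishes identically and $\cL(\bbz^*,\lambda)=c(\bby^*)=\OPT^R(\bs)$ for \emph{every} $\lambda\in\R^m$. In particular, the dual-side saddle inequality $\cL(\bbz^*,\lambda^*_{\cB})\ge\cL(\bbz^*,\lambda)$ holds (with equality) for every $\lambda\in\cB$, no matter which $\lambda^*_{\cB}\in\cB$ I choose. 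So it suffices to exhibit some $\lambda^*_{\cB}\in\cB$ for which $\bbz^*$ minimizes $\cL(\cdot,\lambda^*_{\cB})$ on $\cG^R(\bs)$.

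By strong duality~\eqref{eq:zero_sum} (justified by Slater's condition, already invoked in the excerpt), the unrestricted maxmin is attained by some $\lambda^*\in\R^m$, and the corresponding KKT stationarity at $(\bbz^*,\lambda^*)$ is exactly the primal-minimizer property we want. To show such a $\lambda^*$ can be chosen inside $\cB$, I would use stationarity in the $\bby$-block: the Lagrangian is separable over edges in $\bby$, so for each $e$ the scalar $y^*_e$ must minimize $\tfrac{1}{n}y_e\ell_e(y_e)+\lambda^*_e y_e$ over $[0,n]$. In the interior case this pins down $\lambda^*_e=-\tfrac{1}{n}\bigl(\ell_e(y^*_e)+y^*_e\ell'_e(y^*_e)\bigr)$, and Assumption~\ref{assumption} gives $\ell_e(y^*_e)\le n$ and $y^*_e\ell'_e(y^*_e)\le n\gamma$, so $|\lambda^*_e|\le 1+\gamma$. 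In the boundary cases $y^*_e\in\{0,n\}$ the first-order condition relaxes to a one-sided inequality, leaving enough freedom to pick $\lambda^*_e$ inside $[-(1+\gamma),\,1+\gamma]$ (e.g., $\lambda^*_e=0$ when $y^*_e=0$, and the boundary value when $y^*_e=n$). Summing over edges then gives $\|\lambda^*\|_1\le m(1+\gamma)\le 2m$, so I can set $\lambda^*_{\cB}:=\lambda^*\in\cB$.

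The main (and really only) obstacle is making the boundary analysis clean: when $y^*_e\in\{0,n\}$ the multiplier is not uniquely pinned down by the $\bby$-block alone, and the bounded choice also has to be consistent with the $\bbx$-block KKT condition, namely that each $\bbx^*_i$ is a shortest $(s_i^1,s_i^2)$-path under edge weights $-\lambda^*_e$. This is automatic for edges with $y^*_e=0$ (no $\bbx^*_i$ uses them, so any small $\lambda^*_e$ is consistent), and for edges with $y^*_e=n$ the inequality $\ell'_e\le\gamma$ confines the admissible $\lambda^*_e$ to $[0,1+\gamma]$, which is compatible with any shortest-path certification already implied by strong duality. Combining the primal-side minimization property with the trivial dual-side equality from the first paragraph, $(\bbz^*,\lambda^*_{\cB})$ is an exact saddle point of $\cL$ on $\cG^R(\bs)\times\cB$, which is the claim.
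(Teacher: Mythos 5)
Your argument takes a genuinely different route from the paper's. The paper does not attempt to exhibit $\lambda^*_{\cB}$ via KKT conditions; it instead argues directly that restricting the dual player to $\cB$ does not change the value of the Lagrangian game. Concretely, it shows that any infeasible $(\bbx,\bby)$ suffers loss strictly greater than $\OPT^R(\bs)$ against some $\lambda\in\cB$ that places weight $\pm 2m$ on the most-violated constraint, so $\bbz^*$ remains a minimizer of $\max_{\lambda\in\cB}\cL(\bbz,\lambda)$; the existence of $\lambda^*_{\cB}$ then follows from compactness of $\cG^R(\bs)\times\cB$ and the convex-concave structure of $\cL$. This indirect argument avoids ever having to compute or bound a particular multiplier, which is exactly where your construction runs into trouble.

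Your KKT approach is constructive but has a concrete gap: the step $\|\lambda^*\|_1\le m(1+\gamma)\le 2m$ requires $\gamma\le 1$, which Assumption~\ref{assumption} does not grant (it allows an arbitrary positive constant $\gamma$). If $\gamma>1$, the multiplier you read off from $\bby$-stationarity may lie outside $\cB$, and you would then have to produce a \emph{different} multiplier inside $\cB$ that still certifies optimality of $\bbz^*$---which is precisely the nonobvious content of the lemma. The boundary analysis also has a sign slip: at $y^*_e=n$ the one-sided first-order condition forces $\lambda^*_e\le -\tfrac{1}{n}\bigl(\ell_e(n)+n\ell'_e(n)\bigr)<0$, so the admissible multipliers lie on the negative side, not in $[0,1+\gamma]$ as you wrote. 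Finally, you assert but do not actually verify that the boundary choices can be made simultaneously consistent with the $\bbx$-block stationarity (the shortest-path conditions under edge weights $-\lambda^*_e$); the paper's penalty argument sidesteps all of these per-edge consistency issues.
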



\begin{proof}
Since $\bbz^*$ is an (exact) minimax strategy for the (unrestricted) Lagrangian game, we know that for any $\lambda\in \cB$
  \[
  \cL(\bbz^*, \lambda) = c(\bby^*) = \OPT^R(\bs).
  \]

  Let $\bbx' \in \cF^R(\bs)$ and $\bby\in [0, n]^m$ be different flows
  such that $\bbx' \neq \bbx^*$ and $\bby' \neq \bby^*$. We want to show that
\[\max_{\lambda\in \cB}
\cL(\bbx, \bby, \lambda) \geq \max_{\lambda\in \cB}\cL(\bbz^*,\lambda):= \cL(\bbz^*,\lambda^*_{\cB}).\]

If we have $y'_e = \sum_{i = 1}^n x'_{i,e}$ for all $e \in E$, then
$$ \max_{\lambda\in \cB} \cL(\bbx, \bby, \lambda) = c(\bby) \geq c(\bby^*).$$

Suppose there is some edge such that $y'_e \neq \sum_{i = 1}^n
x'_{i,e}$, then we define $\Delta := \|\bby' -
\sum_{i=1}^n\bbx'_i\|_\infty$.  
With the cost function in terms of the individual flow variables in
\eqref{eq:ind_cost} we know that
\[c(\bby) \geq  \phi(\bbx) - \frac{1}{n}\sum_{e\in E}\Delta\cdot \ell_e(n)\geq \phi(\bbx) - m\cdot\Delta \geq c(\bby^*) - m\cdot\Delta.\]

Note that the dual player can set $\lambda_e = 2m$ for $\sum_{i=1}^n
x'_{ie} - y'_e>0 $ or $\lambda_e = -2m$ for $\sum_{i=1}^n x'_{ie} - y'_e <
0$ for the maximally violated edge $e$:

\[
\max_{\lambda\in \cB} \cL(\bbx, \bby, \lambda) = c(\bby) + 2m\cdot \Delta \geq  c(\bby^*) + \cdot \Delta(2m - m) > \OPT^R(\bs).
\]
Therefore, any infeasible $(\bbx, \bby)$ would suffer loss at least
$\OPT^R(\bs)$ in the worst case over the dual strategy space. It follows that
$(\bbz^*,\lambda^*)$ is a minimax strategy.

Since both players' action spaces $\cG^R(\bs)$ and $\cB$ are compact, then
there exists a minimax strategy $(\bbz^*,\lambda^*_{\cB})$ of the
restricted Lagrangian game.
\end{proof}

Using the previous lemma, we know that the value of the restricted game is the same 
\begin{lemma}
Let $(\bbz, \lambda)$ be a pair of $\cR$-approximate
minimax strategy of the restricted Lagrangian game, and $\bbz = (\bbx, \bby)$.  Then the fractional solution $\bbx$
satisfies
\[
\phi(\bbx) \leq \OPT^R(\bs) + 4\cR.
\]
\end{lemma}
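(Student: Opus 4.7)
The plan is to use the approximate minimax property to bound the maximum value that the dual player can force against $\bbz$, then to explicitly evaluate that maximum to extract a bound on both the cost $c(\bby)$ and the ``infeasibility'' $\Delta := \|\sum_i \bbx_i - \bby\|_\infty$, and finally to convert the bound on $c(\bby)$ into a bound on $\phi(\bbx)$ using the boundedness and Lipschitz properties of $\ell_e$.

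First I would translate the $\cR$-approximate minimax hypothesis into a bound on the best response of the dual player. Combining the two defining inequalities of $\cR$-approximate minimax gives
\[
\max_{\lambda' \in \cB} \cL(\bbz, \lambda') \ \leq\ \cL(\bbz, \lambda) + \cR \ \leq\ \min_{\bbz' \in \cG^R(\bs)} \cL(\bbz', \lambda) + 2\cR.
\]
By the previous lemma, the value of the restricted Lagrangian game equals $\OPT^R(\bs)$, so the minimax inequality gives $\min_{\bbz'} \cL(\bbz', \lambda) \leq \OPT^R(\bs)$. Hence $\max_{\lambda' \in \cB} \cL(\bbz, \lambda') \leq \OPT^R(\bs) + 2\cR$.

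Next I would explicitly compute the left-hand side. Because $\cB$ is the $\ell_1$-ball of radius $2m$, the dual player's best response concentrates all mass on the single coordinate where $|\sum_i x_{i,e} - y_e|$ is largest, with the appropriate sign. This yields
\[
\max_{\lambda' \in \cB} \cL(\bbx, \bby, \lambda') \ =\ c(\bby) + 2m \,\Delta,
\]
where $\Delta = \|\sum_i \bbx_i - \bby\|_\infty$. Together with the previous step this gives $c(\bby) + 2m\Delta \leq \OPT^R(\bs) + 2\cR$.

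The main work is then to relate $\phi(\bbx)$ to $c(\bby)$. Letting $\hat y_e = \sum_i x_{i,e}$, we have $\phi(\bbx) = \frac{1}{n}\sum_e \hat y_e \ell_e(\hat y_e)$, so it suffices to bound $|c(\hat\bby) - c(\bby)|$. Using $\ell_e(y) \leq n$ together with the $\gamma$-Lipschitz property of $\ell_e$, the map $y \mapsto y\,\ell_e(y)$ on $[0,n]$ is Lipschitz with constant at most $n(1+\gamma) = O(n)$ under Assumption~\ref{assumption}. Since $|\hat y_e - y_e| \leq \Delta$ for every edge, this yields $|\phi(\bbx) - c(\bby)| \leq O(m\Delta)$. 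The main care here is not to double-count the two sources of change (the change in the argument $y_e$ and the change in the multiplier $y_e$ in front of $\ell_e$); a case split on the sign of $\hat y_e - y_e$ together with the monotonicity of $\ell_e$ handles this cleanly, just as in the analogous bound used in the preceding lemma.

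Finally, combining the two displayed bounds,
\[
\phi(\bbx) \ \leq\ c(\bby) + O(m\Delta) \ \leq\ \OPT^R(\bs) + 2\cR - 2m\Delta + O(m\Delta) \ \leq\ \OPT^R(\bs) + 4\cR,
\]
where in the last step we absorb the $-2m\Delta$ against any residual $O(m\Delta)$ term (and inflate the constant from $2$ to $4$ to accommodate the Lipschitz correction from the previous step). The expected obstacle is keeping these constants honest in Step 3; everything else is bookkeeping with the minimax inequality and the explicit form of $\cL$.
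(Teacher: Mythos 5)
Your proposal is essentially correct and takes a mildly but genuinely different route from the paper, so let me compare.

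The paper's proof uses \emph{two separate} dual best-response comparisons: first it plays a dual strategy $\lambda'$ concentrated on the most-violated edge and combines the approximate-minimax upper bound $\cL(\bbz,\lambda') \leq \OPT^R(\bs)+2\cR$ with a \emph{lower} bound $c(\bby) \geq \phi(\bbx) - m\Delta \geq \OPT^R(\bs) - m\Delta$ (using feasibility of $\bbx$) to conclude $\Delta \leq 2\cR/m$; then separately it plays the all-zero dual strategy to get $c(\bby) \leq \OPT^R(\bs)+2\cR$; finally it adds the two error terms. You instead package both pieces into the single inequality $c(\bby) + 2m\Delta = \max_{\lambda'\in\cB}\cL(\bbz,\lambda') \leq \OPT^R(\bs) + 2\cR$ and then let the $2m\Delta$ slack in the Lagrangian directly absorb the Lipschitz error in passing from $c(\bby)$ to $\phi(\bbx)$. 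This is a nice compression: you never need the lower bound on $c(\bby)$, and in the regime $\gamma\le 1$ you actually get the stronger conclusion $\phi(\bbx)\le \OPT^R(\bs)+2\cR$.

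You are also right, and more careful than the paper, that the map $y\mapsto y\,\ell_e(y)$ has Lipschitz constant $n(1+\gamma)$, not $n$: the paper's displayed inequality $\phi(\bbx)\le c(\bby) + \tfrac{1}{n}\sum_e \Delta\,\ell_e(n)$ silently drops the $\gamma$-Lipschitz correction term $y_e\bigl(\ell_e(\hat y_e)-\ell_e(y_e)\bigr)$.

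The one genuine soft spot is the last line, where you ``absorb the $-2m\Delta$ against any residual $O(m\Delta)$ term and inflate the constant.'' After substitution your bound reads $\phi(\bbx) \leq \OPT^R(\bs) + 2\cR + (\gamma-1)m\Delta$. This closes cleanly when $\gamma\le 1$, but for $\gamma>1$ the term $(\gamma-1)m\Delta$ is not obviously $O(\cR)$ from your combined inequality alone (note $c(\bby)\ge 0$ only gives $\Delta \leq (\OPT^R(\bs)+2\cR)/(2m)$, which is useless). To handle that regime you would need to first extract $\Delta = O(\cR/m)$ exactly as the paper does, via the lower bound $c(\bby)\ge \OPT^R(\bs)-O(m\Delta)$ combined with the penalty $2m\Delta$. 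That said, the paper's own derivation of $\Delta\le 2\cR/m$ has the identical hidden $\gamma$-dependence, so under Assumption~\ref{assumption} with $\gamma=\Theta(1)$ both proofs give $\phi(\bbx)\le \OPT^R(\bs)+O_\gamma(\cR)$, and your argument is not wrong so much as one step short of nailing the stated constant $4$.
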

\jnote{I feel like Lems 4.3/4.4 just say that restricting the game doesn't change the value.  Can we just replace them with one lemma that says that?}

\begin{proof}
We will first bound the constraint violation in $(\bbx, \bby)$.
Let $e'\in \arg\max_{e\in E} |\sum_i x_{i, e} - y_e|$ be an edge where
the constraint is violated the most, and let $\Delta = |\sum_i x_{i,
  e'} - y_{e'}|$. Consider the dual strategy $\lambda' \in \cB$ such that
\[
\lambda'_{e'} = \begin{cases}
  -2m &\mbox{ if } \sum_i x_{i, e'} - y_{e'} \geq 0\\
  2m &\mbox{ otherwise}
\end{cases}
\]
and $\lambda'_e = 0$ for all $e\neq e'$. Now compare the payoff values
$\cL(\bbx, \bby, \lambda)$ and $\cL(\bbx, \bby, \lambda')$. By the property
of $\cR$-approximate equilibrium and letting $((\bbx^*,\bby^*),\lambda^*)$ be the exact equilibrium, we have
\begin{align*}
\OPT^R(\bs) & - \cR  = \cL(\bbx^*,\bby^*,\lambda^*) - \cR \leq \cL(\bbx,\bby,\lambda^*) - \cR \\
& \leq \cL(\bbx, \bby, \lambda) \leq \cL(\bbx^*,\bby^*,\lambda) + \cR \leq \OPT^R(\bs) + \cR \\
\implies \qquad & \OPT^R(\bs)  - \cR \leq \cL(\bbx, \bby, \lambda) \leq \OPT^R(\bs) + \cR
\end{align*}
and
\[
\cL(\bbx, \bby, \lambda') \leq \OPT^R(\bs) + 2\cR.
\]
Since $(\bbx, \bby)$ violates equality constraint on each edge by at
most $\Delta$, we know that
\[
c(\bby) \geq \phi(\bbx) - \frac{1}{n}\cdot \sum_{e\in E} \left|\sum_i x_{i, e} - y_e\right|\cdot \ell_e(n) \geq \OPT^R(\bs) - m\Delta.
\]
Also, the penalty incurred by $\lambda'$ is at least
\[
\sum_{e\in E} \lambda'_e \left(y_e - \sum_i x_{i, e} \right) = 2m\cdot \Delta.
\]
Therefore, we could bound
\[
\cL(\bbx, \bby, \lambda') \geq \OPT^R(\bs) + m\cdot \Delta.
\]
It follows that $\Delta \leq 2\cR/m$.

Next we will show the accuracy guarantee of $\bbx$. Consider an
all-zero strategy for the dual player $\lambda''$, that is $\lambda_e'' = 0$
for each $e\in E$. We know such a deviation will not increase the
payoff by more than $\cR$:
\[
\cL(\bbx, \bby, \lambda'') \leq \cL(\bbx, \bby, \lambda) + \cR \leq \OPT^R(\bs) +2\cR,
\]
and also $\cL(\bbx, \bby, \lambda'') = c(\bby)$, so we must have
\[
c(\bby) \leq \OPT^R(\bs) + 2\cR.
\]
Now we could give the accuracy guarantee for the cost of the individual flows $\bbx$:
\[
\phi(\bbx) \leq c(\bby) + \frac{1}{n}\sum_{e\in E} \Delta \cdot \ell_e(n)\leq
\OPT^R(\bs) + 2\cR + 2(\cR/m)\cdot m = \OPT^R(\bs) + 4\cR.
\]
This completes the proof of the lemma.
\end{proof}

The previous discussion shows that if $(\overline \bbz, \overline \lambda)$ is a pair of approximate minimax strategies for the restricted Lagrangian game, then $\overline \bbx$ represents an approximately optimal flow.  In the remainder of this section, we show that $(\overline \bbz, \overline \lambda)$ will be such a pair of strategies.    To do so, we use a well known result of Freund and Schapire~\cite{FS96}, which states that if $\overline \bbz$ and $\overline \lambda$ have ``low regret,'' then they are a pair of approximate minimax strategies.  ``Regret'' is defined as follows.
\begin{definition} Given a sequence of of actions $\{\bbz\utt\}$ and $\{\lambda\utt\}$ in the Lagrangian game, we define \emph{regret} for each player as:
\begin{align*}
  \cR_\bbz &\equiv \frac{1}{T}\sum_{t=1}^T \cL(\bbz^{(t)}, \lambda^{(t)}) - \min_{\bbz\in \cG(\bs)}\frac{1}{T}\sum_{t=1}^T \cL(\bbz, \lambda^{(t)})\\
  \cR_\lambda &\equiv  \max_{\lambda\in \cB} \frac{1}{T} \sum_{t=1}^T \cL(\bbz^{(t)}, \lambda) - \frac{1}{T}\sum_{t=1}^T \cL(\bbz^{(t)}, \lambda^{(t)})
\end{align*}
\end{definition}

Given this definition, the result of~\cite{FS96} can be stated as follows.
\begin{theorem}[\cite{FS96}]\label{thm:avgeq}
If $(\overline \bbz, \overline \lambda)$ is the average of the primal and dual players' actions in $\pigd$, then $(\overline \bbz, \overline \lambda)$ is a pair of $(\cR_\bbz + \cR_\lambda)$-approximate minimax strategies of the restricted Lagrangian game.
\end{theorem}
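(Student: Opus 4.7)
The plan is to reduce the approximate minimax claim to a bound on the \emph{duality gap} at $(\overline{\bbz}, \overline{\lambda})$ in the restricted Lagrangian game. Define
\[
\Delta := \max_{\lambda \in \cB} \cL(\overline{\bbz}, \lambda) \;-\; \min_{\bbz \in \cG^R(\bs)} \cL(\bbz, \overline{\lambda}).
\]
If I can show $\Delta \le \cR_{\bbz} + \cR_{\lambda}$, then both approximate minimax inequalities follow immediately, because for any $\bbz'$,
\[
\cL(\overline{\bbz}, \overline{\lambda}) \;\le\; \max_\lambda \cL(\overline{\bbz}, \lambda) \;\le\; \min_\bbz \cL(\bbz, \overline{\lambda}) + \Delta \;\le\; \cL(\bbz', \overline{\lambda}) + \Delta,
\]
and symmetrically $\cL(\overline{\bbz}, \overline{\lambda}) \ge \cL(\overline{\bbz}, \lambda') - \Delta$ for any $\lambda'\in \cB$.

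The key structural observation is that the Lagrangian $\cL(\bbx,\bby,\lambda) = c(\bby) - \sum_e \lambda_e\bigl(\sum_i x_{i,e} - y_e\bigr)$ is linear in $\lambda$ for fixed $\bbz=(\bbx,\bby)$ and convex in $\bbz$ for fixed $\lambda$ (since $c$ is convex and the coupling term is linear in $\bbz$). Linearity in $\lambda$ yields the identity
\[
\frac{1}{T}\sum_{t=1}^T \cL(\bbz, \lambda\utt) \;=\; \cL(\bbz, \overline{\lambda}) \qquad \text{for every } \bbz \in \cG^R(\bs),
\]
while Jensen's inequality applied to the convex map $\bbz \mapsto \cL(\bbz, \lambda)$ yields
\[
\cL(\overline{\bbz}, \lambda) \;\le\; \frac{1}{T}\sum_{t=1}^T \cL(\bbz\utt, \lambda) \qquad \text{for every } \lambda \in \cB.
\]

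Write $\overline{\cL} := \tfrac{1}{T}\sum_t \cL(\bbz\utt, \lambda\utt)$. The identity above lets me rewrite the primal regret as
\[
\cR_\bbz \;=\; \overline{\cL} \;-\; \min_{\bbz \in \cG^R(\bs)} \cL(\bbz, \overline{\lambda}),
\]
and the Jensen bound, combined with the definition of dual regret, gives
\[
\max_{\lambda \in \cB} \cL(\overline{\bbz}, \lambda) \;\le\; \max_{\lambda \in \cB} \frac{1}{T}\sum_t \cL(\bbz\utt, \lambda) \;=\; \overline{\cL} + \cR_\lambda.
\]
Subtracting the first equality from the second inequality cancels $\overline{\cL}$ and yields $\Delta \le \cR_\bbz + \cR_\lambda$, completing the proof.

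This is the standard Freund--Schapire argument, specialized to our convex--linear Lagrangian; there is no substantive obstacle. The only thing to be careful about is matching the direction of each step to the right side of the saddle: equality on the dual side uses that $\cL$ is linear in $\lambda$, whereas the one-sided Jensen inequality on the primal side uses that $\cL$ is convex (not linear) in $\bbz$, so the duality-gap bound is obtained from the sum of the two regrets rather than (as one might first guess) an average.
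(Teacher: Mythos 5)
Your proof is correct, and it is the standard Freund--Schapire argument that the paper invokes by citation (to [FS96]) without reproducing it. The reduction to a duality-gap bound, the use of linearity of $\cL$ in $\lambda$ to turn the primal regret into a statement about $\overline{\lambda}$, and the one-sided Jensen step exploiting convexity of $\cL$ in $\bbz$ are exactly the pieces of that classical argument, and each step is applied in the right direction. One minor point worth flagging: the paper's regret definition writes $\min_{\bbz \in \cG(\bs)}$, but for your argument (and for consistency with the rest of Section~4) this should be read as $\min_{\bbz \in \cG^R(\bs)}$, the convex feasible region over which the projected gradient iterates live; otherwise the regret bound from Lemma~\ref{lem:gd_regret} would not apply. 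With that understanding, your proof fills the gap the paper leaves by citation, and there is nothing to correct.
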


Given the previous theorem, our goal is now roughly to show that $\overline \bbz$ and $\overline \lambda$ have low regret.  To do so, we need to analyze the regret properties of the gradient descent procedure, as well as the additional regret incurred by the noise added to ensure joint differential privacy.  

Specifically, the gradient descent procedure $\gd$ satisfies the following regret bound.
\jnote{Is this really from Zinkovich?}
\jnote{I rewrote lem 4.6 so assumptions come first and conclusions come second.  Yoda may be wise and powerful, but his papers don't get into EC.}
 \begin{lemma}[\cite{Z03}]\label{lem:gd_regret}
  Fix the number of steps $T \in \N$.  Let $\hat\cD$ be a convex and closed set with bounded diameter, i.e.~for every $\omega, \omega' \in \cD,$
$$
|| \omega - \omega ' ||_2 \leq D.
$$
Let $r^1,\dots,r^T$ be a sequence of differentiable, convex functions with bounded gradients, i.e.~for every step $t \in [T]$,$$
||\nabla r^t ||_2 \leq G.
$$
Let $\eta = \frac{D}{G\sqrt{T}}$ and $\omega^{0} \in \cD$ be
arbitrary.  Then if we compute $\omega^{1},\dots,\omega^{T} \in \cD$
according to the rule $\omega^{t+1} \gets \gd(\cD,
r^t,\omega^{t}, \eta^t)$, the sequence $\{\omega^1, \ldots
,\omega^{T}\}$  satisfies
\begin{equation}
R^T(\gd) : = \sum_{t=1}^T r^t(\omega^{t}) - \min_{\omega \in \cD} \left\{ \sum_{t=1}^T r^t(\omega) \right\} \leq G D \sqrt{T}
\label{eq:gd_regret}
\end{equation}
\end{lemma}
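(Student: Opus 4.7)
The plan is to prove this classical regret bound via the standard potential argument based on $\Phi^t := \|\omega^t - \omega^*\|_2^2$, where $\omega^* \in \arg\min_{\omega \in \cD} \sum_{t=1}^T r^t(\omega)$ is the best fixed action in hindsight. The existence of $\omega^*$ follows from continuity of each $r^t$ and the fact that $\cD$ is closed and has bounded diameter (hence compact, assuming nonempty).

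\textbf{Step 1 (one step of descent).} First I would analyze a single iterate. Since $\omega^{t+1} = \Pi_\cD\bigl(\omega^t - \eta \nabla r^t(\omega^t)\bigr)$ and $\omega^* \in \cD$, the non-expansiveness of Euclidean projection onto a convex set gives
\[
\|\omega^{t+1} - \omega^*\|_2^2 \;\leq\; \bigl\|\omega^t - \eta \nabla r^t(\omega^t) - \omega^*\bigr\|_2^2.
\]
Expanding the right-hand side yields
\[
\|\omega^{t+1} - \omega^*\|_2^2 \;\leq\; \|\omega^t - \omega^*\|_2^2 - 2\eta \langle \nabla r^t(\omega^t), \omega^t - \omega^*\rangle + \eta^2 \|\nabla r^t(\omega^t)\|_2^2.
\]

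\textbf{Step 2 (convexity to bound per-round regret).} By convexity of $r^t$, for each $t$ we have $r^t(\omega^t) - r^t(\omega^*) \leq \langle \nabla r^t(\omega^t), \omega^t - \omega^*\rangle$. Plugging this in and rearranging gives the pointwise regret bound
\[
r^t(\omega^t) - r^t(\omega^*) \;\leq\; \frac{1}{2\eta}\bigl(\|\omega^t - \omega^*\|_2^2 - \|\omega^{t+1} - \omega^*\|_2^2\bigr) + \frac{\eta}{2}\|\nabla r^t(\omega^t)\|_2^2.
\]

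\textbf{Step 3 (telescoping and tuning $\eta$).} Summing from $t=1$ to $T$ telescopes the first term, and using $\|\omega^1 - \omega^*\|_2 \leq D$ together with the gradient bound $\|\nabla r^t(\omega^t)\|_2 \leq G$ yields
\[
\sum_{t=1}^T \bigl(r^t(\omega^t) - r^t(\omega^*)\bigr) \;\leq\; \frac{D^2}{2\eta} + \frac{\eta T G^2}{2}.
\]
Substituting the chosen step size $\eta = D/(G\sqrt{T})$ balances the two terms, giving the claimed bound
\[
R^T(\gd) \;\leq\; \frac{GD\sqrt{T}}{2} + \frac{GD\sqrt{T}}{2} \;=\; GD\sqrt{T}.
\]

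The proof involves no substantive obstacle; each ingredient (non-expansiveness of projection, subgradient inequality, telescoping, AM-GM style balancing of $\eta$) is standard. The only subtlety worth flagging is the reliance on non-expansiveness of $\Pi_\cD$, which holds because $\cD$ is convex and closed, exactly as assumed in the hypothesis. Since this is a restatement of Zinkevich's theorem, in the write-up I would either cite \cite{Z03} directly or include the short three-line calculation above for completeness.
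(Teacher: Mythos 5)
Your proof is correct and is precisely the standard potential-function argument from Zinkevich that the paper invokes by citation to \cite{Z03} without reproducing the details. The paper itself supplies no proof of this lemma, so there is nothing to compare against beyond confirming that you have filled in the cited argument accurately: the non-expansiveness of $\Pi_\cD$, the first-order convexity inequality, the telescoping sum, and the balancing choice $\eta = D/(G\sqrt{T})$ are all applied correctly and yield exactly the stated bound $GD\sqrt{T}$. The only minor point, not a gap in your reasoning but worth noting when writing this up, is that the lemma as stated in the paper has a small indexing inconsistency (it introduces $\omega^0$ but sums regret over $\omega^1,\dots,\omega^T$, and writes $\eta^t$ where it means the constant $\eta$); your proof sidesteps this cleanly by starting the telescope at $\omega^1$, which is the right call.
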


We can now use this regret bound for $\gd$ to give a regret bound for the private gradient descent procedure $\pigd$.
\begin{lemma} \label{lem:fractionalquality}
Fix $\eps,\delta,\beta>0$.  If $(\overline \bbz, \overline\lambda)$ is the average of the primal and dual players' actions in $\pigd(\Gamma,\eps,\delta,\beta)$, then with probability at least $1-\beta$, $(\overline \bbz, \overline \lambda)$ are a pair of $\cR$-approximate minimax strategies in the restricted Lagrangian game, for
\[
\cR = \tilde O\left( \frac{\sqrt{n}m^{5/4}}{\sqrt{\eps}} \right)
\]
\end{lemma}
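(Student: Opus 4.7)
The plan is to apply Theorem~\ref{thm:avgeq} (Freund--Schapire), which reduces the task to bounding the sum of regrets $\cR_\bbz + \cR_\lambda$ incurred by the primal and dual players, respectively, over the $T$ rounds of $\pigd$. I would bound each term separately and then choose $T$ to balance them.

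For the primal regret $\cR_\bbz$, observe that $\cL(\cdot,\cdot,\lambda)$ is convex in $\bbz = (\bbx,\bby)$ (since $c(\bby)$ is convex and the penalty term is linear), and that the two updates for $\bbx$ and $\bby$ are independent projected-gradient steps. I would therefore apply Lemma~\ref{lem:gd_regret} twice. The gradient with respect to $\bbx$ is $(-\lambda_e)_{i,e}$, so $\|\nabla_\bbx \cL\|_2 \leq \sqrt{n}\,\|\lambda\|_2 \leq 2m\sqrt{n} = G_x$, and the domain $\cF^R(\bs)\subseteq[0,1]^{n\times m}$ has diameter at most $D_x = \sqrt{mn}$. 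The gradient with respect to $\bby$ has $e$-th coordinate $\tfrac{1}{n}(\ell_e(y_e) + y_e\ell_e'(y_e)) + \lambda_e$, bounded entry-wise by $1 + \gamma + |\lambda_e|$, giving $\|\nabla_\bby \cL\|_2 \leq G_y$, and $[0,n]^m$ has diameter $D_y = n\sqrt{m}$. Summing the two regret bounds yields $\cR_\bbz = O((G_x D_x + G_y D_y)/\sqrt{T}) = O(nm^{3/2}/\sqrt{T})$.

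For the dual regret $\cR_\lambda$, I would use the fact that the dual plays an \emph{approximate best response} each round via the exponential mechanism on the quality score $q$, with privacy parameter $\eps' = \eps/\sqrt{8T\log(1/\delta)}$. Because the payoff is linear in $\lambda$ and $\cB$ is an $\ell_1$-ball, the optimal fixed strategy in hindsight against the sequence $\{\bbz\utt\}$ is of the form $\pm 2m\cdot e_{e^*}$; moreover, by the max-over-average inequality,
\[
\max_{\lambda\in\cB}\frac{1}{T}\sum_t \cL(\bbz\utt,\lambda) \leq \frac{1}{T}\sum_t \max_{\lambda\in\cB}\cL(\bbz\utt,\lambda),
\]
so the dual's regret is bounded by the average per-round best-response gap. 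The payoff for choosing $(\bullet,e)$ is $c(\bby\utt) + 2m\cdot q(\bbz\utt,(\bullet,e))$, and $q$ has sensitivity $O(1)$ in $\bs$. Thus by the standard utility guarantee of the exponential mechanism, with probability at least $1-\beta$ the per-round gap is bounded by $O(m\log(mT/\beta)/\eps')$ uniformly over $t$, yielding $\cR_\lambda = O(m\log(mT/\beta)\sqrt{T\log(1/\delta)}/\eps)$ via a union bound.

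Finally I would combine the two bounds and plug in the choice $T = \Theta(\eps n\sqrt{m}/(\log(mn/\beta)\sqrt{\log(1/\delta)}))$ from $\pigd$, which is precisely the value that equates $nm^{3/2}/\sqrt{T}$ with $m\sqrt{T\log(1/\delta)}\log(mT/\beta)/\eps$ up to logarithmic factors. Substituting gives
\[
\cR_\bbz + \cR_\lambda = \tilde O\!\left(\sqrt{n}\, m^{5/4}/\sqrt{\eps}\right),
\]
as claimed. The main obstacle is carefully bookkeeping the dual regret: it is tempting to try to prove a standalone no-regret bound for the dual's one-shot best-response dynamics, but that is not possible in general. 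The key observation that avoids this is the asymmetric Freund--Schapire argument sketched above, where only the primal needs a no-regret guarantee and the dual suffices with an approximate per-round best response, so the EM error enters linearly (not as a square-root regret over $T$). A secondary subtlety is checking that the $O(1)$ sensitivity of $q$ with respect to $\bs$ is preserved under the billboard construction, which has already been used in the privacy proof (Theorem~\ref{thm:pigd_jdp}).
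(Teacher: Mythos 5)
Your proposal matches the paper's proof essentially step for step: apply Theorem~\ref{thm:avgeq} to reduce to bounding $\cR_\bbz + \cR_\lambda$, bound $\cR_\bbz$ by two applications of Lemma~\ref{lem:gd_regret} with the identical constants $G_x,D_x,G_y,D_y$, bound $\cR_\lambda$ by the per-round exponential-mechanism utility guarantee with a union bound over $T$ rounds, and balance by plugging in the prescribed $T$. Your remark about the asymmetric structure (primal gets a genuine no-regret bound, dual only needs an approximate per-round best response, so the EM error enters linearly in the dual regret) is exactly what the paper is doing, and you are right that the max-over-average inequality is the step that makes the per-round decomposition of $\cR_\lambda$ valid, even though the paper writes it as an equality.
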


\begin{proof}
In light of Theorem~\ref{thm:avgeq}, we know $\cR = \cR_z +
\cR_\lambda$, so we just need to bound the regrets for both players.  For the flow
player $\bbz$, we will bound the regrets of $\bbx$ and $\bby$ separately by
invoking the regret bound of~\cite{Z03} given in Lemma \ref{lem:gd_regret}.

We define $G_y$ such that $\forall t \in [T]$ we have $|| \nabla_\bby
\cL (\bbz,\lambda^{(t)}) ||_2 \leq G_y$ and $D_y$ such that $\forall
\bby,\bby' \in [0,n]^m$ we have $||\bby-\bby'||_2\leq D_y$.  We define
corresponding quantities for $G_x$ and $D_x$.  It suffices to set
these values in the following way:
$$
G_y : = \sqrt{ (m-1)(\gamma+1)^2 + (\gamma+1 + 2m)^2}\qquad D_y := n\sqrt{m}
$$
$$
G_x : = 2m\sqrt{n} \qquad D_x : =\sqrt{mn}
$$
Using \eqref{eq:gd_regret} we have the following bound on the regret \ifnum\lip=0 with $\gamma = o(n)$\fi.
\begin{align}
  \cR_z&\leq 1/\sqrt{T} \left(G_y\cdot D_y + G_x\cdot D_x
  \right) \nonumber\\ 
  &\leq \frac{n\sqrt{m}}{\sqrt{T}}\cdot \left(\sqrt{m(\gamma+1)^2+(\gamma+1+2m)^2} +2m \right) \nonumber \\
  & = O\left(\frac{nm^{3/2}}{\sqrt{T}} \right)
  \label{eq:pigd_regret}
\end{align}
with the following step sizes:
\begin{equation}
 \eta_x := \frac{D_x}{G_x \sqrt{T}} \qquad \eta_y := \frac{D_y}{G_y \sqrt{T}}
 \label{eq:eta_y}
\end{equation}

Now we bound the regret for the dual player. Note that each agent
could only affect the quality score $q$ of each edge by 1. By the utility
guarantee of the exponential mechanism stated in Lemma
\ref{lem:exp_utility} we know that with probability at least $1 -
\beta/T$, at round $t$
\begin{equation}
\max_{(\bullet, e)\in \{\pm\}\times E } \left|q(\bbz\utt, (\bullet, e)) - q(\bbz\utt,
(\bullet\utt, e\utt)\right| \leq \frac{2(\log(2mT/\beta))}{\eps'}
\label{eq:error_exp}
\end{equation}
We condition on this level of accuracy for each round $t$, which holds except with probability $\beta$.

Also, at each round $t$, a best response for the dual player is to put
weight $\pm 2m$ on the edge with the most violation, so we can bound
the regret:
\begin{align*}
\cR_\lambda &= \frac{1}{T} \sum_{t=1}^T \left[\max_{\lambda\in \cB}\cL(\bbz\utt, \lambda) - \cL(\bbz\utt, \lambda\utt)  \right]\\
&\leq \frac{1}{T} \sum_{t=1}^T 2m \cdot \frac{2(\log(2mT/\beta))}{\eps'}\\
& = 2m \cdot \frac{2(\log(2mT/\beta))}{\eps'}
\end{align*}
For $T = \Theta\left(\frac{\eps n\sqrt{m}}{\log(mn/\beta) \sqrt{\log(1/\delta)}}\right)$, we know that
\begin{align*}
\cR &= \cR_z + \cR_\lambda =O\left(\frac{ n m^{3/2}}{\sqrt{T}} + \frac{m\log(mT/\beta) \sqrt{T \log(1/\delta)}}{\epsilon} \right) \\
&= O\left( \frac{\sqrt{n} m^{5/4}}{\sqrt{\eps}} \cdot \polylog(1/\delta, 1/\beta, n, m)\right)
\end{align*}
\end{proof}


The previous lemma shows that the fractional solution has nearly optimal cost.  The last thing we need to do is derive a bound on how much the rounding procedure $\psrr$ increases the cost of the final integral solution.

\begin{lemma} \label{lem:roundingquality}
 Let $\overline\bbx$ be any feasible fractional solution to the convex
 program~\eqref{eq:convex_program}, and let $\bbx\bl$ be an integral solution obtained by the rounding procedure
 $\psrr(\overline\bbx)$. Then, with probability at
 least $1 - \beta$,
 \[
 \phi(\bbx\bl) \leq \phi(\overline \bbx) + m(\gamma+1)\sqrt{2{n\ln(m/\beta)}}.
 \]
\end{lemma}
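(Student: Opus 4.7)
The plan is to show that the rounding procedure $\psrr$ samples each player's path independently while preserving the per-edge marginals of $\overline\bbx_i$, and then to apply a standard concentration argument coupled with the Lipschitz continuity of $\ell_e$.

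First I would verify that the path-stripping step produces a valid distribution. At each iteration of $\psrr$ on player $i$'s flow, we strip off a full $(s_i^1,s_i^2)$-path carrying weight $w_j = \min_{e \in P_j} \overline x_{i,e}$; this zeroes out at least one edge while preserving flow conservation on the rest. Iterating, we obtain a decomposition $\overline\bbx_i = \sum_j w_j \, \mathbb{1}_{P_j}$ with $\sum_j w_j = 1$, so $\psrr(\overline\bbx_i)$ samples an integral $(s_i^1,s_i^2)$-flow satisfying $\Exp{}{x\bl_{i,e}} = \overline x_{i,e}$ for every edge $e$. Since players are rounded independently, for each edge $e$ the integral congestion $y_e\bl := \sum_i x\bl_{i,e}$ is a sum of $n$ independent $\{0,1\}$-valued random variables with mean $\overline y_e := \sum_i \overline x_{i,e}$.

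Next I would apply Hoeffding's inequality to each such sum and union bound over the $m$ edges. This gives that with probability at least $1-\beta$, $|y_e\bl - \overline y_e| \leq \sqrt{2n \ln(m/\beta)}$ for every $e \in E$ simultaneously (the constant inside the square root may be slightly tightened depending on the Hoeffding variant used, but the stated bound suffices).

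Finally I would translate this edge-wise concentration into a bound on the objective using Assumption~\ref{assumption}. Expanding
\ifnum\final=1 $ \else $$ \fi
y_e\bl\,\ell_e(y_e\bl) - \overline y_e\,\ell_e(\overline y_e) = y_e\bl\bigl(\ell_e(y_e\bl) - \ell_e(\overline y_e)\bigr) + (y_e\bl - \overline y_e)\,\ell_e(\overline y_e),
\ifnum\final=1 $ \else $$ \fi
and using $y_e\bl \leq n$, $\ell_e(\overline y_e) \leq \ell_e(n) \leq n$, and the $\gamma$-Lipschitz property of $\ell_e$, each summand has absolute value at most $n(\gamma+1)\,|y_e\bl - \overline y_e|$. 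Summing over $e$ and dividing by $n$ yields
\ifnum\final=1 $ \else $$ \fi
\phi(\bbx\bl) - \phi(\overline\bbx) \leq (\gamma+1)\sum_{e \in E}|y_e\bl - \overline y_e| \leq m(\gamma+1)\sqrt{2n\ln(m/\beta)},
\ifnum\final=1 $ \else $$ \fi
completing the proof. The only nontrivial ingredient is the marginal-preserving property of $\psrr$; the remaining pieces are a textbook concentration plus Lipschitz argument, so I do not anticipate any substantial obstacle beyond getting the constants to match the claimed bound.
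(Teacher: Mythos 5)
Your proof is correct and follows essentially the same route as the paper's: the paper cites Raghavan--Thompson (Theorem 3.1 of \cite{RT87}) for the edge-wise concentration of the rounded congestion, which is exactly the marginal-preserving decomposition plus Chernoff/Hoeffding plus union bound you spell out, and then both proofs translate the congestion deviation into an objective bound via the $\gamma$-Lipschitz property and $\ell_e(n)\le n$. One small wording slip: you say ``each summand has absolute value at most $n(\gamma+1)|y_e\bl - \overline y_e|$,'' but what you actually use (and what makes the final display correct) is that the first summand is at most $n\gamma|y_e\bl - \overline y_e|$ and the second at most $n|y_e\bl - \overline y_e|$, so their \emph{sum} is at most $n(\gamma+1)|y_e\bl - \overline y_e|$.
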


\begin{proof}
From the analysis of~\cite{RT87} (in Theorem 3.1 of the source), we know that with probability at least $1- \beta$,
 \[
 \max_{e\in E} \left[\sum_i x_{i,e}\bl - \sum_i \overline{x}_{i, e}\right] < \sqrt{{2n\ln(m/\beta)}} \equiv W.
 \]
 Finally, we could bound the difference between the costs $\phi(\bbx\bl)$ and $\phi(\overline\bbx)$
 \begin{align*}
\phi(\bbx\bl) - \phi(\overline\bbx) &\leq \frac{1}{n} \cdot \left[\sum_e x_{i,e}\bl\cdot \left( \ell_e\left(\sum_j x_{j,e}\bl \right) - \ell_e \left(\sum_j \overline x_{j,e} \right) \right) + \sum_{e} W \cdot \ell_e \left(\sum_j  x\bl_{j,e} \right) \right] \\
& \leq \frac{1}{n} \cdot \left(mn\gamma W + m n W \right) =
 mW(\gamma+1).  \end{align*} This completes the proof.
\end{proof}

Combining Lemma~\ref{lem:fractionalquality} and Lemma~\ref{lem:roundingquality} we obtain our desired bound on the quality of the joint differentially private integral solution.
\begin{theorem}
Let $\Gamma = (G,\ell,\bs)$ be a routing game and $\eps,\delta,\beta>0$ be parameters. If $\bbx\bl$ is the final integral solution given by \pigd$(\Gamma,\eps,\delta,\beta)$, then with probability at least $1 -\beta$, the cost of $\bbx\bl$ satisfies
\[
\phi(\bbx\bl) \leq \OPT(\bs) + \tilde O \left( \frac{\sqrt{n} m^{5/4}}{\sqrt{\eps}} + m\sqrt{n} \right)
\]
i.e. $\bbx\bl$ is an $\alpha$-approximate optimal flow for $\alpha= \tilde O \left( \frac{ \sqrt{n}m^{5/4}}{\sqrt{\eps}} \right)$.
\label{thm:approx_opt_flow}
\end{theorem}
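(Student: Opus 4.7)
The plan is to simply chain together the two main quality lemmas we have already established. First I would invoke Lemma~\ref{lem:fractionalquality} on $\pigd(\Gamma, \eps, \delta, \beta/2)$, which tells me that except with probability $\beta/2$, the averaged iterates $(\overline{\bbz}, \overline{\lambda}) = ((\overline{\bbx}, \overline{\bby}), \overline{\lambda})$ form a pair of $\cR$-approximate minimax strategies of the restricted Lagrangian game, for $\cR = \tilde O(\sqrt{n}\, m^{5/4}/\sqrt{\eps})$.

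Next I would apply the approximate-minimax-to-cost lemma (the one stating $\phi(\bbx) \le \OPT^R(\bs) + 4\cR$ for any $\cR$-approximate minimax pair $(\bbz,\lambda)$). This yields
\[
\phi(\overline{\bbx}) \;\le\; \OPT^R(\bs) + 4\cR \;\le\; \OPT(\bs) + 4\cR,
\]
where the last inequality uses the observation made right after~\eqref{eq:violation} that the fractional optimum lower-bounds the integral optimum, i.e.\ $\OPT^R(\bs) \le \OPT(\bs)$.

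Then I would invoke Lemma~\ref{lem:roundingquality} on the rounding procedure $\psrr(\overline{\bbx})$ (run independently on each player, with failure probability $\beta/2$), which yields $\phi(\bbx\bl) \le \phi(\overline{\bbx}) + m(\gamma+1)\sqrt{2 n \ln(m/\beta)}$ except with probability $\beta/2$. A union bound over the two failure events costs $\beta$ total, and chaining the two inequalities gives
\[
\phi(\bbx\bl) \;\le\; \OPT(\bs) + 4\cR + m(\gamma+1)\sqrt{2n\ln(m/\beta)} \;=\; \OPT(\bs) + \tilde O\!\left(\frac{\sqrt{n}\, m^{5/4}}{\sqrt{\eps}} + m\sqrt{n}\right),
\]
recalling that $\gamma$ is treated as a constant under Assumption~\ref{assumption}. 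This is exactly the claimed bound, so $\bbx\bl$ is $\alpha$-approximately optimal for $\alpha = \tilde O(\sqrt{n}\, m^{5/4}/\sqrt{\eps})$, since this term dominates $m\sqrt{n}$ whenever $m^{1/4} \ge \sqrt{\eps}$, which is always true for $\eps \le 1$.

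There is essentially no obstacle here since all the heavy lifting has already been done: Lemma~\ref{lem:fractionalquality} handled the regret analysis and the cost of adding Laplace/exponential-mechanism noise for privacy, and Lemma~\ref{lem:roundingquality} handled the rounding blowup via the Raghavan--Thompson argument. The only care needed is the union bound and ensuring that the two invocations are called with failure parameter $\beta/2$ each so that the final guarantee holds with probability $1-\beta$.
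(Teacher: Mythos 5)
Your proof is correct and takes essentially the same approach the paper intends: the paper itself simply says ``Combining Lemma~\ref{lem:fractionalquality} and Lemma~\ref{lem:roundingquality} we obtain our desired bound,'' and you fill in exactly that chain --- apply Lemma~\ref{lem:fractionalquality} to get $\cR$-approximate minimax strategies, apply the unlabeled approximate-minimax-to-cost lemma together with $\OPT^R(\bs) \le \OPT(\bs)$ to bound $\phi(\overline\bbx)$, then apply Lemma~\ref{lem:roundingquality} for the rounding loss and union bound. Your additional observation that the $\sqrt{n}m^{5/4}/\sqrt{\eps}$ term dominates $m\sqrt{n}$ whenever $m^{1/4}\ge\sqrt\eps$ (hence always for $\eps\le 1$) correctly justifies the simplified form of $\alpha$ in the theorem statement, a step the paper leaves implicit.
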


\fi

\ifnum\final=0
\addcontentsline{toc}{section}{References}
\bibliographystyle{acmsmall} 
\bibliography{references}
\else
\bibliographystyle{acmsmall}
\bibliography{references}
\fi

\ifnum\appen=1
\newpage
\appendix
\section{Tools for Differential Privacy}

In this section we review the necessary privacy definitions and tools
needed for our results.  Throughout, let $\bs = (s_1,\dots,s_n) \in
\cS^n$ be a \emph{database} consisting of $n$ elements from a domain
$\cS$.  In keeping with our game theoretic applications, we refer to
the elements $s_1,\dots,s_n$ as \emph{types} and each type belongs to
a \emph{player} $i \in [n]$.

%

We first state a general lemmas about differential privacy.
\begin{lemma}[Post-Processing \cite{DMNS06}]
  Given a mechanism $\cM: \cS^n \to \mathcal{O}$ and some
  (possibly randomized) function $p: \mathcal{O} \to \mathcal{O}'$
  that is independent of the players' types $\mathbf{s} \in
  \cS^n$, if $\cM(\mathbf{s})$ is $(\eps,
  \delta)$-differentially private then $p(\cM(\mathbf{s}))$ is
  $(\eps, \delta)$-differentially private.
\label{lem:post}
\end{lemma}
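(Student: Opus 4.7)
The plan is to prove the Post-Processing Lemma by first reducing to the deterministic case and then lifting the argument to randomized $p$ via conditioning on its internal randomness. This is a standard argument, so I do not expect any serious technical obstacle; the only care needed is in properly setting up the measure-theoretic bookkeeping so that preimages under $p$ are measurable events to which the DP guarantee of $\cM$ can be applied.

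First I would handle the case where $p$ is deterministic. Fix any player $i$, any neighboring types $s_i, s_i' \in \cS$, any $s_{-i} \in \cS^{n-1}$, and any (measurable) event $B' \subseteq \cO'$. Let $B = p^{-1}(B') \subseteq \cO$. Then by definition
\[
\Pr[p(\cM(s_i, s_{-i})) \in B'] = \Pr[\cM(s_i, s_{-i}) \in B],
\]
and likewise for $s_i'$. Applying the $(\eps,\delta)$-DP guarantee of $\cM$ to the event $B$ immediately gives
\[
\Pr[p(\cM(s_i, s_{-i})) \in B'] \leq e^\eps \Pr[p(\cM(s_i', s_{-i})) \in B'] + \delta,
\]
which is exactly the DP condition for $p \circ \cM$.

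Next I would extend to randomized $p$. The key observation is that any randomized map $p$ can be written as $p(o) = \tilde p(o, r)$ where $\tilde p$ is deterministic and $r$ is drawn from some distribution over an auxiliary space $\cR$ that is independent of $\bs$ (this is the hypothesis of the lemma). Conditioning on $r$, the map $o \mapsto \tilde p(o, r)$ is deterministic, so the above argument yields
\[
\Pr[\tilde p(\cM(s_i, s_{-i}), r) \in B' \mid r] \leq e^\eps \Pr[\tilde p(\cM(s_i', s_{-i}), r) \in B' \mid r] + \delta.
\]
Since $r$ is independent of $\bs$, we can integrate both sides against the marginal distribution of $r$. Linearity of expectation then yields
\[
\Pr[p(\cM(s_i, s_{-i})) \in B'] \leq e^\eps \Pr[p(\cM(s_i', s_{-i})) \in B'] + \delta,
\]
which is the required DP guarantee. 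Since $i$, $s_i, s_i', s_{-i}, B'$ were arbitrary, $p \circ \cM$ satisfies $(\eps,\delta)$-differential privacy, completing the proof.
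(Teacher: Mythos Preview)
Your proof is correct and is precisely the standard argument for the post-processing lemma: reduce to deterministic $p$ via preimages, then lift to randomized $p$ by conditioning on the independent auxiliary randomness and averaging. The paper does not actually supply a proof of this lemma; it simply states it and cites \cite{DMNS06}, so there is nothing further to compare.
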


\subsection{The Laplace Mechanism}
\label{laplace}
We will use the Laplace Mechanism, which was introduced by Dwork et al \cite{DMNS06} to answers a vector-valued query $f : \cS^n \to \R^k$.

The Laplace Mechanism depends on the notation of \emph{sensitivity}---how much a function can change when a single entry in its input is altered.
\begin{definition}[Sensitivity]
 The sensitivity $\Delta f$ of a function $f : \cS^n \to \R^k$ is defined as
 $$
 \Delta f = \max_{ i \in [n], (s_i,\bs_{-i}) \in \cS^n, s_i' \in \cS } \left\{ ||f(s_i,\bs_{-i}) - f(s_i',\bs_{-i})||_1 \right\}.
 $$
\end{definition}

\ifnum\final=1
\begin{algorithm}
\SetAlgoNoLine
$\ML(\bs,f, \eps)$\;
\KwIn{Database $\bs \in \cS^n$, query $f: \cS^n \to \R^k$, and privacy parameter $\eps$.}
\KwOut{Approximate to $f(\bs)$}
 Set $\hat a = f(\bs) + Z \qquad Z= (Z_1, \cdots, Z_k) $ and $Z_i \sim \Lap(\Delta f/\eps)$\\
\Return $\hat a$.
 \caption{Laplace Mechanism \cite{DMNS06}}\label{lap_mech}
\end{algorithm}
\fi

\ifnum\final=0
\begin{algorithm}
 \caption{Laplace Mechanism \cite{DMNS06}}\label{lap_mech}
\begin{algorithmic}[0]
\INPUT : Database $\bs \in \cS^n$, query $f: \cS^n \to \R^k$, and privacy parameter $\eps$.
\Procedure {$\ML$} {$\bs,f, \eps$}
\State Set $\hat a = f(\bs) + Z \qquad Z= (Z_1, \cdots, Z_k) $ and $Z_i \sim \Lap(\Delta f/\eps)$\\
\Return $\hat a$.
\EndProcedure
\end{algorithmic}
\end{algorithm}
\fi

\begin{lemma}[\cite{DMNS06}]
 \label{lem:lap_privacy}
 The Laplace Mechanism $\ML$ is $\eps$-differentially private.
\end{lemma}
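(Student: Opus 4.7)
The plan is to verify the $(\eps,0)$-differential privacy guarantee directly by computing the likelihood ratio of the output densities on neighboring inputs. Fix any two databases $\bs, \bs' \in \cS^n$ differing in a single player's type, and any candidate output $\hat a \in \R^k$. Since the Laplace noise coordinates $Z_1,\ldots,Z_k$ are independent, the density of $\ML(\bs)$ evaluated at $\hat a$ factors as $\prod_{i=1}^k \frac{\eps}{2\Delta f}\exp\!\bigl(-\eps |f(\bs)_i - \hat a_i|/\Delta f\bigr)$, and similarly for $\bs'$.

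Next I would take the ratio of these two densities; the normalizing constants cancel, leaving
\[
\frac{p_{\ML(\bs)}(\hat a)}{p_{\ML(\bs')}(\hat a)} \;=\; \exp\!\left(\frac{\eps}{\Delta f}\sum_{i=1}^k \bigl(|f(\bs')_i - \hat a_i| - |f(\bs)_i - \hat a_i|\bigr)\right).
\]
I would then bound each summand using the reverse triangle inequality, $|f(\bs')_i - \hat a_i| - |f(\bs)_i - \hat a_i| \leq |f(\bs)_i - f(\bs')_i|$, so the exponent is at most $\frac{\eps}{\Delta f}\|f(\bs)-f(\bs')\|_1 \leq \eps$ by the definition of sensitivity. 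This pointwise ratio bound immediately implies $\Prob{}{\ML(\bs)\in B} \leq e^\eps \Prob{}{\ML(\bs')\in B}$ for every measurable $B \subseteq \R^k$ by integrating over $B$, giving $(\eps,0)$-differential privacy with $\delta = 0$.

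The only mild subtlety to flag is that the proof gives the stronger pointwise ratio bound on densities (sometimes called pure differential privacy), and one should briefly note measurability of $B$ to pass from densities to probabilities; no additional obstacle arises because the output space is $\R^k$ with the standard Borel $\sigma$-algebra. The argument is essentially the textbook one-line calculation, so no substantive obstacle is expected.
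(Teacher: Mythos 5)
Your proof is correct and is the standard argument from the cited reference \cite{DMNS06}; the paper itself does not reprove this lemma, only cites it. The density-ratio calculation, the application of the reverse triangle inequality coordinate-wise, the sensitivity bound, and the integration over measurable $B$ are exactly the textbook route, so there is no divergence to compare.
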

\begin{lemma}[\cite{DMNS06}]
 \label{lem:lap_utility}
 The Laplace Mechanism $\ML(\bs,f,\eps)$ produces output $\hat a$ such that with probability at least $1-\beta$ we have
 $$
 ||f(\bs)  - \hat a||_\infty \leq \log\left(\frac{k}{\beta}\right)\left(\frac{\Delta f}{\eps} \right) 
 $$
\end{lemma}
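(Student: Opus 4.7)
The plan is to reduce the $\ell_\infty$ concentration claim to a standard one-dimensional tail bound for a Laplace random variable, then apply a union bound over the $k$ coordinates of the query. Recall that $\hat a - f(\bs) = Z = (Z_1,\dots,Z_k)$ where each $Z_i \sim \Lap(\Delta f/\eps)$ independently, so $\|f(\bs) - \hat a\|_\infty = \max_{i \in [k]} |Z_i|$. Controlling this maximum is the whole content of the lemma.

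First I would recall (or derive in one line) the standard tail bound for a centered Laplace: if $Z \sim \Lap(b)$, then its density is $\tfrac{1}{2b} e^{-|z|/b}$, so for any $t \geq 0$,
\[
\Pr[|Z| \geq t \cdot b] \;=\; \int_{tb}^\infty \tfrac{1}{b} e^{-z/b}\,dz \;=\; e^{-t}.
\]
Setting $b = \Delta f/\eps$ and $t = \log(k/\beta)$ gives
\[
\Pr\!\left[|Z_i| \geq \log(k/\beta)\cdot \tfrac{\Delta f}{\eps}\right] \;\leq\; \beta/k
\]
for each coordinate $i \in [k]$.

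Next I would apply a union bound over the $k$ coordinates:
\[
\Pr\!\left[\|Z\|_\infty \geq \log(k/\beta)\cdot \tfrac{\Delta f}{\eps}\right] \;\leq\; \sum_{i=1}^k \Pr\!\left[|Z_i| \geq \log(k/\beta)\cdot \tfrac{\Delta f}{\eps}\right] \;\leq\; k \cdot \tfrac{\beta}{k} \;=\; \beta.
\]
Rewriting this as a high-probability upper bound on $\|f(\bs) - \hat a\|_\infty = \|Z\|_\infty$ yields exactly the claimed inequality with probability at least $1 - \beta$.

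There is no real obstacle here; this is a textbook computation. The only care point is getting the constant right: the bound is stated without any constant factor in front of $\log(k/\beta)$, so one must use the exact Laplace tail $\Pr[|Z|\geq tb]=e^{-t}$ (not a looser form) and union-bound over the $k$ coordinates, which is precisely what the computation above does.
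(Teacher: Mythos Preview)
Your proof is correct and is the standard argument. The paper itself does not give a proof of this lemma; it simply cites it from \cite{DMNS06} as a known fact about the Laplace mechanism, so there is nothing to compare against beyond noting that your derivation is exactly the textbook one.
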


\subsection{The Exponential Mechanism} \label{sec:exp_mech}
We now present an algorithm introduced by \cite{MT07} that is differentially private called the exponential mechanism.  Let us assume that we have some finite outcome space $\mathcal{O}$ and a quality score $q: \cS^n \times \mathcal{O} \to \R$ that tells us how good the outcome $o \in \mathcal{O}$ is with the given database $\bs \in \cS^n$.  We define the sensitivity of $q$ as the maximum over $o \in \mathcal{O}$ of the sensitivity of $q(\cdot; o)$.  Specifically,
$$
\Delta q  = \max_{\begin{array}{lr} o \in \mathcal{O},   \bs, \bs' \in \cS^n\end{array}} \left\{|q(\bs,o) - q(\bs', o)| \right\}  \qquad \text{ for neighboring } \bs, \bs'  
$$
\ifnum\final=1
\begin{algorithm}[h]

$\cM_E(\bs,q, \eps)$\;
\KwIn{Database $\bs \in \cS^n$, quality function $q: \cS^n \times \mathcal{O} \to \R$, and privacy parameter $\eps$.}
\KwOut{Outcome $o \in \mathcal{O}$ with probability proportional to} 
$$
\exp\left( \frac{\eps q(\bs,o)}{2 \Delta q} \right)
$$ 
\caption{Exponential Mechanism \cite{MT07}}\label{exp_mech}
\end{algorithm}
\fi

\ifnum\final=0
\begin{algorithm}[h]
\caption{Exponential Mechanism \cite{MT07}}\label{exp_mech}
\begin{algorithmic}[0]
\INPUT : Database $\bs \in \cS^n$, quality function $q: \cS^n \times \mathcal{O} \to \R$, and privacy parameter $\eps$.
\Procedure {$\cM_E$} {$\bs,q, \eps$}
\State Output $o \in \mathcal{O}$ with probability proportional to 
$$
\exp\left( \frac{\eps q(\bs,o)}{2 \Delta q} \right)
$$ 
\EndProcedure
\end{algorithmic}
\end{algorithm}
\fi
\begin{lemma}[\cite{MT07}]
The Exponential Mechanism $\cM_E$ is $\eps$-differentially private.  
\label{lem:exp_private}
\end{lemma}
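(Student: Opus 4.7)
The plan is to bound, for any neighboring databases $\bs, \bs' \in \cS^n$ and any fixed outcome $o \in \mathcal{O}$, the ratio
\[
\frac{\Pr[\cM_E(\bs, q, \eps) = o]}{\Pr[\cM_E(\bs', q, \eps) = o]}
\]
by $\exp(\eps)$. Since $\cM_E$ has no additive slack ($\delta = 0$) and $\mathcal{O}$ is finite, establishing this pointwise ratio bound on the pmf immediately yields pure $\eps$-differential privacy by summing over any event $B \subseteq \mathcal{O}$.

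Writing out the definition, the ratio factors as
\[
\frac{\exp\!\left(\tfrac{\eps\, q(\bs, o)}{2\Delta q}\right)}{\exp\!\left(\tfrac{\eps\, q(\bs', o)}{2\Delta q}\right)} \;\cdot\; \frac{\sum_{o'\in\mathcal{O}} \exp\!\left(\tfrac{\eps\, q(\bs', o')}{2\Delta q}\right)}{\sum_{o'\in\mathcal{O}} \exp\!\left(\tfrac{\eps\, q(\bs, o')}{2\Delta q}\right)}.
\]
The first factor is the ``numerator'' ratio and the second is the ``normalizer'' ratio. I would handle each separately using only the definition of sensitivity $\Delta q$, which guarantees $|q(\bs, o) - q(\bs', o)| \leq \Delta q$ uniformly in $o$.

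For the numerator ratio, the sensitivity bound gives $\exp\!\big(\tfrac{\eps}{2\Delta q}(q(\bs,o) - q(\bs', o))\big) \leq \exp(\eps/2)$. For the normalizer ratio, I would apply the same sensitivity bound termwise inside the sum: each summand in the numerator satisfies $\exp(\tfrac{\eps\, q(\bs',o')}{2\Delta q}) \leq \exp(\eps/2)\exp(\tfrac{\eps\, q(\bs,o')}{2\Delta q})$, so pulling the common $\exp(\eps/2)$ outside the sum yields that the normalizer ratio is also at most $\exp(\eps/2)$. Multiplying the two factors gives $\exp(\eps)$, completing the proof.

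I do not anticipate any real obstacle — this is the textbook McSherry–Talwar argument, and the only ``trick'' is recognizing that the factor of $2$ in the denominator $2\Delta q$ is exactly what is needed so that the two $\exp(\eps/2)$ contributions (one from the numerator, one from the normalizer) combine to give the target $\exp(\eps)$. One small technical remark worth including is that the argument as written assumes $\mathcal{O}$ is finite (or at least that the normalizing sum converges); for continuous $\mathcal{O}$ the same proof goes through with sums replaced by integrals against a base measure, but since the algorithm as stated samples from a finite set this is not needed here.
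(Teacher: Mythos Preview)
Your argument is correct and is precisely the standard McSherry--Talwar proof. The paper itself does not include a proof of this lemma---it is stated as a citation to \cite{MT07} without argument---so there is nothing to compare against beyond noting that your write-up matches the original source.
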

We then define the highest possible quality score with database $d$ to be $OPT_q(\bs) = \max_{o \in O}\{q(\bs,o) \}$.  We then obtain the following proposition that tells us how close we are to the optimal quality score.  
\begin{lemma}[\cite{MT07}]
We have the following utility guarantee from the Exponential Mechanism $\cM_E$: with probability at least $1-\beta$ and every $t > 0$,
$$
q(\bs, \cM_E(\bs,q, \eps) ) \geq OPT_q(\bs) - \frac{2\Delta q}{\eps} \left(\log |O| + t \right)
$$
\label{lem:exp_utility} 
\end{lemma}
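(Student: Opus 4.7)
The plan is to give the standard concentration argument for the Exponential Mechanism, bounding the probability that it returns a low-quality outcome by comparing the total mass placed on ``bad'' outcomes to the mass placed on at least one ``good'' outcome.

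First, I would fix a threshold $c > 0$ (to be chosen later in terms of $t$ and $\beta$) and define the set of bad outcomes
\[
\mathcal{O}_{\text{bad}} = \{o \in \mathcal{O} : q(\bs, o) < \mathrm{OPT}_q(\bs) - c\}.
\]
The goal is to upper bound $\Pr[\cM_E(\bs, q, \eps) \in \mathcal{O}_{\text{bad}}]$. Writing the sampling probability explicitly, this probability equals
\[
\frac{\sum_{o \in \mathcal{O}_{\text{bad}}} \exp\!\left(\tfrac{\eps\, q(\bs,o)}{2\Delta q}\right)}{\sum_{o \in \mathcal{O}} \exp\!\left(\tfrac{\eps\, q(\bs,o)}{2\Delta q}\right)}.
\]

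Next I would bound the numerator and denominator separately. For the numerator, every $o \in \mathcal{O}_{\text{bad}}$ satisfies $q(\bs,o) < \mathrm{OPT}_q(\bs) - c$, so the numerator is at most $|\mathcal{O}| \cdot \exp\!\left(\tfrac{\eps(\mathrm{OPT}_q(\bs) - c)}{2\Delta q}\right)$. For the denominator, I would simply keep the single term corresponding to any maximizer $o^* \in \arg\max_o q(\bs,o)$, giving a lower bound of $\exp\!\left(\tfrac{\eps\, \mathrm{OPT}_q(\bs)}{2\Delta q}\right)$. Taking the ratio, the $\mathrm{OPT}_q(\bs)$ terms cancel, leaving
\[
\Pr[\cM_E \in \mathcal{O}_{\text{bad}}] \le |\mathcal{O}| \cdot \exp\!\left(-\frac{\eps\, c}{2\Delta q}\right).
\]

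Finally I would choose $c$ so that this bound matches the desired failure probability. Setting $c = \tfrac{2\Delta q}{\eps}(\log|\mathcal{O}| + t)$ yields the bound $e^{-t}$; in the form of the lemma statement one then chooses $t$ (or equivalently relates $t$ to $\beta$) so that $e^{-t} \le \beta$, giving with probability at least $1-\beta$ the claimed inequality $q(\bs, \cM_E(\bs, q, \eps)) \ge \mathrm{OPT}_q(\bs) - \tfrac{2\Delta q}{\eps}(\log|\mathcal{O}| + t)$. There is no real obstacle here: the argument is a short, one-line union/ratio bound on exponential weights, and the only care needed is to ensure that the denominator lower bound is valid (which it is, since the mass on $o^*$ alone suffices) and that the sensitivity $\Delta q$ being used as the normalization matches the one in the mechanism's definition.
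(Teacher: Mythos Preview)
The paper does not give its own proof of this lemma; it is stated as a citation to \cite{MT07} and used as a black box. Your argument is the standard, correct proof of the Exponential Mechanism's utility guarantee: bound the mass on low-quality outcomes above, bound the normalizing constant below by the single optimal term, and read off the exponential tail. Your handling of the slightly awkward ``with probability at least $1-\beta$ and every $t>0$'' phrasing is also right: the natural conclusion is that the failure probability is $e^{-t}$, and one identifies $\beta = e^{-t}$ (equivalently $t = \log(1/\beta)$) to match the statement.
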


\subsection{Composition Theorems}
\label{sec:comp}
Now that we have given a few differentially private algorithms, we
want to show that differentially private algorithms can compose
``nicely'' to get other differentially private algorithms. We will
need to use two composition theorems later in this paper. The first
shows that the privacy parameters add when we compose two
differentially private mechanisms, and the second from \cite{DRV10}
gives a better composition guarantee when using many adaptively chosen
mechanisms.
\begin{lemma}
  If we have one mechanism $M_1: \cS^n \to O$ that is $(\eps_1,
  \delta_1)$-differentially private, and another mechanism $M_2:\cS^n
  \times O \to R$ is $(\eps_2,\delta_2)$-differentially private in
  its first component, then $M: \cS^n \to R$ is
  $(\eps_1+\eps_2,\delta_1+\delta_2)$ differentially private
  where
$$
M(\bs) = M_2(\bs,M_1(\bs)).
$$
\label{lem:comp}
\end{lemma}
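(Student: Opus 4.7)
The plan is to prove this by a direct calculation that bounds $\Pr[M(\bs) \in S]$ in terms of $\Pr[M(\bs') \in S]$ for any event $S \subseteq R$ and any pair of neighboring databases $\bs, \bs' \in \cS^n$. First I would write
\[
\Pr[M(\bs) \in S] \;=\; \int_O \Pr[M_2(\bs, o) \in S]\,d\mu_{M_1(\bs)}(o),
\]
where $\mu_{M_1(\bs)}$ denotes the output distribution of $M_1$ on input $\bs$, and introduce the shorthand $g(o) = \Pr[M_2(\bs, o) \in S]$ and $h(o) = \Pr[M_2(\bs', o) \in S]$, both of which are $[0,1]$-valued functions on $O$.

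Next I would apply the two privacy guarantees in sequence. The $(\eps_2, \delta_2)$-DP guarantee of $M_2$ in its first component, applied with the fixed second argument $o$, gives the pointwise bound $g(o) \le e^{\eps_2} h(o) + \delta_2$ for every $o \in O$. Integrating against $\mu_{M_1(\bs)}$ yields
\[
\Pr[M(\bs) \in S] \;\le\; e^{\eps_2}\, \mathbb{E}_{o \sim M_1(\bs)}[h(o)] + \delta_2.
\]
To bound $\mathbb{E}_{o \sim M_1(\bs)}[h(o)]$ in terms of $\mathbb{E}_{o \sim M_1(\bs')}[h(o)] = \Pr[M(\bs') \in S]$, I would invoke post-processing (Lemma~\ref{lem:post}). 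Since $h$ is $[0,1]$-valued rather than $\{0,1\}$-valued, the trick is to view $\mathbb{E}[h(M_1(\bs))]$ as the probability of the randomized post-processing event ``on output $o$, flip an independent coin with bias $h(o)$ and accept if it is $1$.'' This randomized map is a valid post-processing of $M_1$, so it inherits $(\eps_1, \delta_1)$-DP, giving $\mathbb{E}_{o \sim M_1(\bs)}[h(o)] \le e^{\eps_1}\mathbb{E}_{o \sim M_1(\bs')}[h(o)] + \delta_1$. Substituting back produces the bound $\Pr[M(\bs) \in S] \le e^{\eps_1+\eps_2}\Pr[M(\bs') \in S] + e^{\eps_2}\delta_1 + \delta_2$.

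The main subtlety, and the one obstacle to exactly matching the stated $\delta_1 + \delta_2$ conclusion, is the extra $e^{\eps_2}$ factor on $\delta_1$ that this chain of inequalities produces. To close that gap cleanly I would recast the argument via the standard privacy-loss-random-variable characterization: $(\eps, \delta)$-DP is equivalent to the existence of a ``failure event'' of probability at most $\delta$ outside of which the log-density ratio is at most $\eps$. Letting $E_1 \subseteq O$ be the $\delta_1$-failure set for $M_1$ on $(\bs, \bs')$ and $E_2(o) \subseteq R$ be the $\delta_2$-failure set for $M_2$ on $(\bs, \bs')$ with first argument $o$, I would split
\[
\Pr[M(\bs) \in S] \;\le\; \Pr[M_1(\bs) \in E_1] + \Pr[M_2(\bs, M_1(\bs)) \in E_2(M_1(\bs))] + \Pr[\text{good event}, M(\bs)\in S],
\]
bounding the first two terms by $\delta_1$ and $\delta_2$ respectively (the second by marginalizing over $o$), and bounding the last term by $e^{\eps_1+\eps_2}\Pr[M(\bs')\in S]$ since both privacy losses are controlled on the good event. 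This yields the claimed $(\eps_1+\eps_2, \delta_1+\delta_2)$-DP. The only real work in the write-up is justifying that these failure-set decompositions compose correctly under the adaptive dependence of $M_2$ on the output of $M_1$, which follows because $M_2$'s guarantee holds for every fixed first argument $o$.
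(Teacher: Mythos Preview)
The paper states Lemma~\ref{lem:comp} without proof, treating it as a standard result from the differential privacy literature, so there is no paper proof to compare against. Your proposal is correct: the first approach you sketch yields the slightly weaker additive term $e^{\eps_2}\delta_1 + \delta_2$, which you rightly flag, and your second approach via the failure-event characterization of $(\eps,\delta)$-DP is the standard way to recover the exact $\delta_1+\delta_2$ bound. The decomposition you outline (split off $E_1$ for $M_1$, then $E_2(o)$ for $M_2$ at each fixed $o$, and use the pointwise density-ratio bound on the remaining good set) goes through exactly as you describe, and the adaptive dependence is handled because the $M_2$ guarantee holds for every fixed second argument $o$.
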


If we were to only consider the previous composition theorem, then the
composition of $m$ mechanisms that are $\eps$-differentially
private mechanisms would lead to a $m\eps$-differentially private
mechanism. However, the next theorem says that we can actually get
$(\eps',\delta)$-differential privacy where $\eps' =
O(\sqrt{m}\eps)$ if we allow for a small $\delta>0$. This theorem
also holds under the threat of an adversary that uses an adaptively
chosen sequence of differentially private mechanisms so that each can
use the outputs of the past mechanisms and different datasets that may
or may not include an individual's data. See \cite{DRV10} for further
details.

 
\begin{lemma}[$m$-Fold Adaptive Composition \cite{DRV10}]
  Fix $\delta>0$. The class of $(\eps',\delta')$ differentially
  private mechanisms satisfies $(\eps,m\delta' + \delta)$
  differential privacy under $m$-fold adaptive composition for
$$
\eps ' = \frac{\eps}{\sqrt{8 m \log(1/\delta)}}.
$$
\label{lem:advanced_comp}
\end{lemma}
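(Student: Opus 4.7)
The plan is to prove this via the standard privacy-loss random variable analysis introduced in the original advanced-composition paper. First I would rephrase $(\varepsilon',\delta')$-differential privacy in a more useful form: for any neighboring databases $\bs,\bs'$ and any mechanism $M$ satisfying $(\varepsilon',\delta')$-DP, there exist events $E,E'$ with $\Pr[M(\bs)\in E],\Pr[M(\bs')\in E']\ge 1-\delta'$ such that, conditioned on these ``good'' events, the output distributions differ pointwise by a log-ratio bounded by $\varepsilon'$. This is the ``dense model'' / maximal-coupling characterization of approximate DP and it cleanly separates the $\delta'$ slack from the multiplicative $\varepsilon'$ guarantee.

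Next, for a single good-event mechanism, I would define the \emph{privacy loss random variable}
\[
Z(o) \;=\; \ln\frac{\Pr[M(\bs)=o]}{\Pr[M(\bs')=o]},
\]
and establish two key facts when $o$ is drawn from $M(\bs)$ restricted to the good event: (i) $|Z|\le \varepsilon'$ almost surely, and (ii) $\mathbb{E}[Z]\le \varepsilon'(e^{\varepsilon'}-1)/2\le \varepsilon'^2$ for small $\varepsilon'$. Fact (ii) is the crucial improvement—individual privacy losses have \emph{bias} only $O(\varepsilon'^2)$ even though they can fluctuate by $\pm\varepsilon'$.

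Now I would handle the $m$-fold adaptive composition. Let $M_t$ be the $t$-th mechanism, adaptively chosen based on outputs so far, and let $Z_t$ be its privacy loss random variable between neighboring inputs $\bs,\bs'$. By a union bound, except with probability $m\delta'$, every $M_t$ lands in its good event, so we may condition on this and treat each $Z_t$ as almost-surely bounded by $\varepsilon'$ with conditional mean at most $\varepsilon'^2$. The sequence $(Z_t - \mathbb{E}[Z_t \mid Z_{<t}])_{t=1}^m$ is a bounded martingale difference sequence, so Azuma–Hoeffding yields
\[
\Pr\!\left[\sum_{t=1}^m Z_t \;>\; m\varepsilon'^2 + \varepsilon'\sqrt{2m\ln(1/\delta)}\right] \;\le\; \delta.
\]
Setting $\varepsilon = m\varepsilon'^2 + \varepsilon'\sqrt{2m\ln(1/\delta)}$ and plugging in the prescribed $\varepsilon' = \varepsilon/\sqrt{8m\ln(1/\delta)}$ makes the second term dominate and yields $\sum_t Z_t \le \varepsilon$ except with probability $\delta$; combined with the $m\delta'$ union-bound slack, one gets $(\varepsilon,\,m\delta'+\delta)$-DP by the standard equivalence between the high-probability privacy-loss bound and approximate DP.

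The main obstacle is the clean decomposition in the first step—turning $(\varepsilon',\delta')$-DP into ``$\varepsilon'$-DP on a $(1-\delta')$-event,'' which must be done carefully enough that adaptive composition still works (the good events at step $t$ may depend on previous outputs, so we need the characterization to hold uniformly over conditioning). Once that coupling is set up, the martingale concentration step is essentially routine Azuma, and the parameter-tuning step is algebra.
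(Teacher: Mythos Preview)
The paper does not actually prove this lemma; it merely states it as a citation from \cite{DRV10} in the appendix's ``Composition Theorems'' subsection, alongside the basic composition lemma, and moves on. So there is no paper proof to compare against.

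Your sketch is the standard \cite{DRV10} argument: decompose approximate DP into pure DP on a high-probability good event, define the privacy-loss random variable, bound its conditional expectation by $O(\varepsilon'^2)$, apply Azuma--Hoeffding to the resulting martingale, and union-bound the $m$ bad events to pick up the $m\delta'$ term. This is correct in outline and is exactly the proof the citation points to. Two minor remarks: the expectation bound is usually stated as $\mathbb{E}[Z]\le \varepsilon'(e^{\varepsilon'}-1)$ (your factor of $1/2$ comes from a slightly different accounting, e.g.\ the KL bound); and the final parameter check requires $\varepsilon$ not too large relative to $\ln(1/\delta)$ for the quadratic term $m\varepsilon'^2$ to be absorbed, which is the usual implicit regime assumption. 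The point you flag as the main obstacle---making the good-event decomposition compatible with adaptive conditioning---is indeed the delicate step, and your awareness of it is appropriate.
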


We also include a proof of Lemma~\ref{lem:djcomp}, the composition of
a differentially private algorithm with another joint differentially
private algorithm is differentially private. 

\begin{proof}[\ifnum\final=0 Proof \fi of Lemma~\ref{lem:djcomp}]
Let $S \subseteq O$, $i \in [n]$, and consider data $\bs \in \cS^n$
and $\bs' = (s_i',\bs_{-i})$ for $s_i' \in \cS$.\jnote{I'm allergic to
  measure theory, but technically $S$ needs to be measurable
  right?}\snote{maybe defer this to appendix} We have
\begin{align*}
\Prob{}{M(\bs) \in S}=&  \int_{\cX^n} \Prob{}{M_D(\bbx) \in S} \cdot \Prob{}{M_J(\bs) = \bbx }d\bbx \\
& = \int_{\cX^{n-1}}\left[\int_{\cX}\Prob{}{M_D(x_i,\bbx_{-i}) \in S} \cdot \Prob{}{M_J(\bs) = \bbx}dx_i \right] d\bbx_{-i}
\end{align*}
We then use the fact that, since $M_D$ satisfies $\eps_D$-differential privacy, we know that for any fixed $x_i' \in \cX$, it holds that $\Prob{}{M_D(x_i,\bbx_{-i}) \in S} \leq \min\{e^{\eps_{D}} \cdot \Prob{}{M_D(x_i',\bbx_{-i}) \in S}, 1\}$.  We let $P_{x_i', \bbx_{-i}}$ denote the RHS of this inequality.
\begin{align*}
 \Prob{}{M(\bs) \in S}\leq{} & \int_{\cX^{n-1}}\left[\int_{\cX} P_{x_i', \bbx_{-i}} \cdot \Prob{}{M_J(\bs) = \bbx}dx_i\right] d\bbx_{-i} \\
={}  &  \int_{\cX^{n-1}}P_{x_i', \bbx_{-i}} \cdot \Prob{}{M_J(\bs)_{-i} = \bbx_{-i}}  d\bbx_{-i}
\end{align*}
Now we use the fact that, since $M_J$ satisfies $(\eps_J, \delta)$-joint differential privacy, we have the inequality $\Prob{}{M_J(\bs)_{-i} = \bbx_{-i}} \leq e^{\eps_J} \cdot \Prob{}{M_J(\bs')_{-i} = \bbx_{-i}} + \delta = ( \int_{\cX} e^{\eps_J} \cdot\Prob{}{M_J(\bs') = \bbx} dx_i) + \delta$.
\begin{align*}
  \Prob{}{M(\bs) \in S}\leq{} & \int_{\cX^{n-1}}P_{x_i', \bbx_{-i}} \cdot  \left[\int_{\cX} e^{\eps_J}\cdot \Prob{}{M_J(\bs') = \bbx}dx_i +\delta\right]d\bbx_{-i} \\
 \leq{} & e^{ \eps_J}\cdot\int_{\cX^{n-1}}P_{x_i', \bbx_{-i}} \cdot\left[\int_{\cX} \Prob{}{M_J(\bs') = \bbx}dx_i \right]d\bbx_{-i} + \delta \cdot \int_{\cX^{n-1}} P_{x_i', \bbx_{-i}} d\bbx_{-i}
 \end{align*}
 Using our definition of $P_{x_i', \bbx_{-i}}$ (using the first term in the min for the first term above and the second term of the min for the second term above) we can simplify as follows.
 \begin{align*}
 \Prob{}{M(\bs) \in S} \leq{} & e^{\eps_D + \eps_J}\cdot\int_{\cX^{n}}\Prob{}{M_D(x_i',\bbx_{-i}) \in S}\Prob{}{M_J(\bs') = \bbx}d\bbx + \delta
 \end{align*}
 Again we can apply the fact that, since $M_D$ is $\eps_D$-differentially private, for every $x_i \in \cX$, we have that $\Prob{}{M_D(x'_i,\bbx_{-i}) \in S} \leq e^{\eps_{D}} \cdot \Prob{}{M_D(\bbx) \in S}$.
 \begin{align*}
  \Prob{}{M(\bs) \in S} \leq{} & e^{2\eps_D + \eps_J} \cdot \int_{\cX^{n}}\Prob{}{M_D(\bbx) \in S}\Prob{}{M_J(\bs') = \bbx}d\bbx + \delta \\
 ={} & e^{2\eps_D + \eps_J} \cdot \Prob{}{M(\bs') \in S}+  \delta
\end{align*}
Since this bound holds for every neighboring pair $\bs, \bs'$, we have proven the lemma.
\end{proof}

\section{Bounding the Number of Unsatisfied Players}\label{append_sbr}

We seek to bound the number of players that are approximately
unsatisfied w.r.t. congestion $\hat\bby$ in the approximately optimal
flow $\bbx\bl$ under the routing game $\Gamma^{\hat\tau}$, where
$\hat\bby$ is the perturbed version of congestion $\bby\bl = \sum_i
\bbx_i\bl$.  First, we give a way to bound the number of unsatisfied
players for any approximately optimal flow in the routing game
$\Gamma^{\tau^*} = (G,\ell+\tau^*,\bs)$ that uses the functional
marginal-cost tolls $\tau^*(\cdot)$ given in \eqref{eq:marginal-cost}.

\begin{lemma}
  Let $\rho>0$ and $\bbx\bl$ be an $\alpha$-approximately optimal flow
  in the routing game $\Gamma$. Then the number of
  $\zeta_1(\rho)$-unsatisfied players in $\Gamma^{\tau^*}$ with respect
  to congestion $\bby\bl = \sum_{i = 1}^n\bbx_i\bl$ is bounded by
  $n\alpha/\rho$ where
  $$
  \zeta_1(\rho) = \rho +4mn\gamma  \alpha/\rho
  $$
  \label{lem:satisfied_c_prime}
\end{lemma}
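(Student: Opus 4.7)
\emph{Proof plan.} My approach builds on the potential‐function machinery from Lemma~\ref{lem:opt_eq}, which established that $\Psi(\bbx)=n\cdot \phi(\bbx)$ is an exact potential for the tolled game $\Gamma^{\tau^*}$: for any single-player deviation, $\Psi(\tilde\bbx')-\Psi(\tilde\bbx)=\phi_i^{\tau^*}(\tilde\bbx')-\phi_i^{\tau^*}(\tilde\bbx)$. Since $\bbx\bl$ is $\alpha$-approximately optimal for $\phi$, and the optimal flow $\bbx^*$ minimizes $\Psi$, we get the central slack inequality
\[
\Psi(\bbx\bl)-\Psi(\bbx^*)\leq n\alpha.
\]
This is the ``budget'' we will spend by repeatedly deviating unsatisfied players.

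The core idea is a contradiction via \emph{sequential deviations}. Suppose for contradiction that strictly more than $K'' := n\alpha/\rho$ players are $\zeta_1(\rho)$-unsatisfied in $\Gamma^{\tau^*}$ with respect to $\bby\bl$. Pick any $K''+1$ of them, order them $i_1,\dots,i_{K''+1}$, and let $\bbx_{i_j}^{\star}$ denote the witness best‐response path that reduces $i_j$'s cost by at least $\zeta_1(\rho)$ when deviating from $\bbx\bl$. Build a sequence $\bbx\bl=\bbx^{(0)},\bbx^{(1)},\dots,\bbx^{(K''+1)}$ by, at step $j$, replacing player $i_j$'s route in $\bbx^{(j-1)}$ with $\bbx_{i_j}^{\star}$. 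By the potential identity, each step's drop in $\Psi$ equals player $i_j$'s realized cost decrease under the current congestion. If I can show each step drops $\Psi$ by at least $\rho$, the total drop exceeds $(K''+1)\rho>n\alpha$, contradicting the budget inequality, so at most $K''$ players can be $\zeta_1(\rho)$-unsatisfied.

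The key technical step is a Lipschitz correction controlling how much the cost improvement at step $j$ is degraded relative to the original witness gap $\zeta_1(\rho)$. After $j-1$ deviations, the congestion has shifted from $\bby\bl$ by at most $j-1\leq K''$ on every edge. Expanding the realized drop $\phi_{i_j}^{\tau^*}(\bbx^{(j)})-\phi_{i_j}^{\tau^*}(\bbx^{(j-1)})$ and comparing it to the original $\phi_{i_j}^{\tau^*}(\bbx_{i_j}^{\star},\bbx\bl_{-i_j})-\phi_{i_j}^{\tau^*}(\bbx\bl)\leq -\zeta_1(\rho)$, the discrepancy decomposes into sums over the edges of $\bbx\bl_{i_j}$ and $\bbx_{i_j}^{\star}$ of differences of $\ell^{\tau^*}_e$ evaluated at congestions shifted by at most $K''$. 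Using that each path has at most $m$ edges and that $\ell^{\tau^*}_e(y)=y\ell_e(y)-(y-1)\ell_e(y-1)$ inherits a Lipschitz bound from Assumption~\ref{assumption}, the discrepancy is bounded by a term of order $m\gamma K''$; choosing $\zeta_1(\rho)=\rho+4mn\gamma\alpha/\rho$ then ensures the per-step potential drop is at least $\rho$, closing the contradiction.

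The main obstacle is getting the Lipschitz correction clean enough to match the stated constant $4mn\gamma\alpha/\rho$. The marginal‐cost toll $\tau^*_e$ is not itself $\gamma$-Lipschitz---because of the multiplicative factor $(y-1)$ in its definition, naive bounds give only $O(n\gamma)$-Lipschitz for $\ell^{\tau^*}_e$---so to avoid losing factors I would exploit the telescoping form $\ell^{\tau^*}_e(y)=y\ell_e(y)-(y-1)\ell_e(y-1)$ together with the convexity and monotonicity of $\ell_e$. Convexity forces the discrete second differences $\ell^{\tau^*}_e(y+1)-\ell^{\tau^*}_e(y)$ to interact constructively across the two paths being compared, and careful bookkeeping across the $\sum_e$ should collapse the two sums into a single path-length ($m$) factor times a per-edge cost bounded by $\gamma$ times the total drift $K''$, delivering the claimed additive $4mn\gamma\alpha/\rho$ slack.
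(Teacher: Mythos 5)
Your proof plan is essentially the paper's argument, recast as a contradiction. The paper runs a forward $\rho$-best-response dynamics starting from $\bbx\bl$: each step picks a player who is $\rho$-unsatisfied with respect to the \emph{current} congestion and best-responds, so the potential $\Psi=n\phi$ drops by at least $\rho$ automatically, the dynamics terminates within $n\alpha/\rho$ steps, and the Lipschitz correction is applied to show that every \emph{non}-deviating player was $\zeta_1(\rho)$-satisfied at $\bby\bl$ all along. You instead fix witnesses at $\bby\bl$, deviate the candidate unsatisfied players sequentially along those fixed witnesses, and apply the Lipschitz correction to the \emph{per-step} potential drop. These are the same idea with the bookkeeping moved; both deliver the bound provided the per-edge tolled latency $\ell^{\tau^*}_e$ obeys an $O(\gamma)$ Lipschitz bound.

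Where I think you go astray is in the last paragraph. You are right to flag that $\tau^*_e(y)=(y-1)\bigl(\ell_e(y)-\ell_e(y-1)\bigr)$ is \emph{not} $\gamma$-Lipschitz under Assumption~\ref{assumption} alone: with $g(y)=\ell_e(y)-\ell_e(y-1)$, $\tau^*_e(y+1)-\tau^*_e(y)=y\bigl(g(y+1)-g(y)\bigr)+g(y)$, and the $y\bigl(g(y+1)-g(y)\bigr)$ term can be $\Theta(n\gamma)$ for convex, $\gamma$-Lipschitz, bounded $\ell_e$ (e.g.\ $\ell_e$ flat until $y=n-1$ and then jumping by $\gamma$). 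The paper actually asserts $\gamma$-Lipschitzness of $\tau^*_e$ in this lemma's proof and moves on, so you have spotted a real soft spot in the paper itself. But your proposed remedy --- telescoping the form $\ell^{\tau^*}_e(y)=y\ell_e(y)-(y-1)\ell_e(y-1)$ and invoking convexity so the two path-sums ``interact constructively'' and collapse to a per-edge $\gamma$ factor --- is not worked out and I do not believe it can work: the worst-case per-edge change $\ell^{\tau^*}_e(y')-\ell^{\tau^*}_e(y)$ for $|y'-y|\le K''$ really is $\Theta(n\gamma K'')$, not $\Theta(\gamma K'')$, and there is no cancellation across distinct edges of a path (the latencies are independent). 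If you want to match the lemma's stated $\zeta_1(\rho)$, the honest move is to record the $\gamma$-Lipschitz claim on $\tau^*_e$ as the hypothesis it is (as the paper does), not to try to derive it from Assumption~\ref{assumption}; your attempt to ``not lose factors'' is where the proposal currently has a gap.
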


\begin{proof}
Let $\bbx$ be any flow in $\cF(\bs)$. Consider the following
$\rho$-best response dynamics: while there exists some
$\rho$-unsatisfied agent $i$ (w.r.t. the true congestion $\sum_i
\bbx_i$), let $i$ make a deviation that decreases her cost the most.
Recall that we write $\OPT(\bs)$ as the optimal value for the routing
game $\Gamma$.  Note that in the tolled routing game
$\Gamma^{\tau^*}$, the potential function $\Psi$ given in
\eqref{eq:potential} satisfies $\Psi(\bbx) = n\cdot \phi(\bbx)$.

Note that $\bbx\bl$ is an $\alpha$-approximately optimal flow, so
$$
\OPT(\bs) \leq \frac{1}{n} \cdot \Psi(\bbx\bl) \leq   \OPT(\bs) + \alpha.
$$ 

Since each deviation a player made in the dynamics decreases the potential function
$\Psi(\bbx)$ by at least $\rho$, $\rho$-best response dynamics
in game $\Gamma^{\tau^*}$ starting with flow $\bbx\bl$ will terminate after
at most $n \alpha/\rho$ iterations. The resulting flow $\hat\bbx$ has all
agents $\rho$-satisfied. In the process, the congestion of each edge might
have increased or decreased by at most $n\alpha/\rho$.  For each edge
$e\in E$, the change in latency is bounded using our
$\gamma$-Lipschitz condition
\[
|\ell_e(y_e) - \ell_e(y_e')| \leq n \gamma \alpha/\rho.
\]
Furthermore, the edge toll is also $\gamma$-Lipschitz
\begin{align*}
|\tau_e^*(y_e) - \tau_e^*(y_e')| &= |( y_e - 1)(\ell_e( y_e) -
\ell( y_e - 1)) - ( y'_e - 1)(\ell_e( y'_e) -
\ell( y'_e - 1))| \\
& \leq \gamma |( y_e - 1) - ( y'_e - 1)|
\leq n \gamma \alpha/\rho.
\end{align*}
For the agents that did not deviate in the dynamics, their cost is
changed by at most $2m n\gamma \alpha/\rho$.  Since they are
$\rho$-satisfied at the end of the dynamics, this means they were
$(\rho + 4m\gamma n \alpha/\rho)$-satisfied in the beginning of the
process.\footnote{While the same path agent $i$ is taking might have
  cost lowered by $2m\gamma n \alpha/\rho$ in the dynamics, any alternate
  $(s_i, t_i)$-path might have increased its cost by
  $2m\gamma n\alpha/\rho$.}  Since the $\rho$-best response dynamics
lasts for $n \alpha/\rho$ rounds, there are at most $n \alpha/\rho$ number
of agents that deviate in the dynamics.
\end{proof}

Based on Lemma~\ref{lem:satisfied_c_prime}, we can now bound the
number of approximately unsatisfied players when we impose constant
tolls $\tau' = \tau^*(\bby\bl)$ instead of functional tolls on the
edges.

\begin{lemma}\label{lem:fun2con}
  Let $\rho > 0$, $\bbx\bl$ be an $\alpha$-approximately optimal flow
  in the routing game $\Gamma$, and $\tau' = \tau^*(\sum_i \bbx_i\bl)$
  be the vector of constant tolls. Then, the number of
  $\zeta_2(\rho)$-unsatisfied players with respect to $\bby\bl =\sum_i
  \bbx_i\bl$ in the routing game $\Gamma^{\tau'}$ is bounded by
  $n \alpha/\rho$, where
  \[
  \zeta_2(\rho) = \rho + 4m\gamma n \alpha/\rho + 2m\gamma.
  \]
\end{lemma}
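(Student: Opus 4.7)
The plan is to reduce this lemma directly to Lemma~\ref{lem:satisfied_c_prime} by arguing that switching from functional tolls $\tau^*(\cdot)$ to the constant tolls $\tau' = \tau^*(\bby\bl)$ changes each player's cost calculation by only an additive $O(m\gamma)$ term. Combined with the bound from Lemma~\ref{lem:satisfied_c_prime} on the number of $\zeta_1(\rho)$-unsatisfied players in $\Gamma^{\tau^*}$, this yields the desired bound on $\zeta_2(\rho)$-unsatisfied players in $\Gamma^{\tau'}$ with the stated slack $2m\gamma$.

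First I would fix a player $i$ and compare, for any candidate deviation $\bbx_i' \in \cF(s_i)$, the cost the player would experience under the functional tolls versus under the constant tolls, assuming all other players use $\bbx_{-i}\bl$. The key observation is that the cost of playing the original flow $\bbx_i\bl$ is \emph{identical} under the two tolling regimes: the congestion on every edge is exactly $\bby\bl$, so $\tau^*_e(\bby\bl) = \tau'_e$ by construction. For the deviation $\bbx_i'$, on the other hand, the new congestion $\bby' = \bby\bl - \bbx_i\bl + \bbx_i'$ differs from $\bby\bl$ by at most $1$ on each edge that $i$ adds or drops. Since $\tau^*_e$ is $\gamma$-Lipschitz (as shown in the proof of Lemma~\ref{lem:satisfied_c_prime}), the per-edge difference between $\tau^*_e(\bby')$ and $\tau'_e = \tau^*_e(\bby\bl)$ is at most $\gamma$, and the deviation path $\bbx_i'$ traverses at most $m$ edges, so the total discrepancy between $i$'s deviation cost in $\Gamma^{\tau^*}$ and in $\Gamma^{\tau'}$ is at most $m\gamma$ (the factor of $2$ will appear in accounting for both directions of this inequality).

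Next I would use this to prove the contrapositive: any player who is $\zeta_1(\rho)$-satisfied in $\Gamma^{\tau^*}$ with respect to $\bby\bl$ is automatically $(\zeta_1(\rho) + 2m\gamma)$-satisfied in $\Gamma^{\tau'}$ with respect to the same congestion. Concretely, if no deviation $\bbx_i'$ decreases $i$'s $\Gamma^{\tau^*}$-cost by more than $\zeta_1(\rho)$, then after absorbing a $\leq 2m\gamma$ shift in the computed savings (applying the per-edge Lipschitz bound on both the current-path and deviation-path terms), no deviation decreases $i$'s $\Gamma^{\tau'}$-cost by more than $\zeta_1(\rho) + 2m\gamma = \zeta_2(\rho)$. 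Equivalently, every $\zeta_2(\rho)$-unsatisfied player in $\Gamma^{\tau'}$ is $\zeta_1(\rho)$-unsatisfied in $\Gamma^{\tau^*}$, so the bound $n\alpha/\rho$ from Lemma~\ref{lem:satisfied_c_prime} transfers directly.

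The main obstacle is being careful with the bookkeeping of the cost comparison: distinguishing edges that the deviation adds, drops, and keeps, and ensuring that the $m\gamma$ bound (and the factor-of-two slack to $2m\gamma$) is applied in the correct direction to yield a lower bound on the savings available in $\Gamma^{\tau^*}$. Once that accounting is done cleanly, the lemma follows immediately from Lemma~\ref{lem:satisfied_c_prime}, with no additional combinatorial argument needed.
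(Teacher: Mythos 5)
Your proposal is correct and takes essentially the same route as the paper: fix a $\zeta_1(\rho)$-satisfied player, reinterpret any profitable deviation under the constant tolls as a deviation under the functional tolls using the $\gamma$-Lipschitz property of $\tau^*_e$, and transfer the bound from Lemma~\ref{lem:satisfied_c_prime} by the contrapositive.

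One small remark: your observation that the current-path cost is \emph{exactly} equal in $\Gamma^{\tau^*}$ and $\Gamma^{\tau'}$ (since $\tau'_e = \tau^*_e(y\bl_e)$ by construction) is actually sharper than the paper's argument, which charges a Lipschitz error to both the current path and the deviation path. With your sharper observation, only the deviation-path term picks up the $\leq m\gamma$ discrepancy, so you would in fact obtain the stronger conclusion $\zeta_2(\rho) = \zeta_1(\rho) + m\gamma$. Your later remark that the ``factor of $2$ will appear in accounting for both directions'' and that you apply ``the per-edge Lipschitz bound on both the current-path and deviation-path terms'' is in tension with this key observation; it costs nothing here because the lemma only claims the $2m\gamma$ slack, but the extra factor is not actually needed once you notice that $c'_{\text{old}} = c^*_{\text{old}}$ exactly.
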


\begin{proof}
  Let player $i$ be a $\zeta_1(\rho)$-satisfied player in flow
  $\bbx\bl$ under the routing game $\Gamma^{\tau^*}$. Now we argue
  that he should also be $\zeta_2(\rho)$-satisfied under the game
  $\Gamma^{\tau'}$. Suppose not. Then there exists a route $\bbx_i'$
  for player $i$ that can decrease the cost by more than
  $\zeta_2(\rho)$ under $\Gamma^{\tau'}$. Now consider the same
  deviation in game $\Gamma^{\tau^*}$. Since the functional toll on
  each edge can change by at most $\gamma$, we know that player $i$'s
  costs in $\Gamma^{\tau^*}$ and $\Gamma^{\tau'}$ differ by at most
  $2m\gamma$. This implies that the deviation $\bbx_i'$ in game
  $\Gamma^{\tau^*}$ could gain him more than $\zeta_1(\rho)$ since
  $\zeta_2(\rho) - \zeta_1(\rho) = 2m\gamma$.

  From Lemma~\ref{lem:satisfied_c_prime}, we know that the number of
  $\zeta_1(\rho)$-unsatisfied players under the routing game
  $\Gamma^{\tau^*}$ is bounded by $n\alpha/\rho$. Therefore, we know
  that the number of $\zeta_2(\rho)$-unsatisfied players under
  $\Gamma^{\tau'}$ is also bounded by $n\alpha/\rho$.
\end{proof}

Combining the previous two lemmas, we could now bound the number of
unsatisfied players in $\Gamma^{\hat \tau}$ with respect to $\bby =
\sum_{i = 1}^n \bbx_i$ with the differentially private constant tolls
$\hat\tau$.

\begin{lemma}
  \label{lem:unsatisfied}
  Let $\rho, \eps > 0$ and $\bbx\bl$ be an $\alpha$-approximately
  optimal flow in the routing game $\Gamma$. Let $\hat \tau = \tau^*(\hat\bby)$ where $\hat\bby = \pcon(\bbx\bl,\eps)$. Then with
  probability at least $1-\beta$, the number of $\zeta_\eps(\rho)$-unsatisfied
  players in $\Gamma^{\hat\tau}$ with respect to $\bby\bl=\sum_{i=1}^n \bbx\bl_i$ is bounded
  by $\alpha/\rho$, where
  \begin{equation}
  \zeta_\eps(\rho) =\zeta_2(\rho) + 4 \gamma m^2\log(m/\beta)/\eps.
  \label{eq:zeta_hat_funct}
  \end{equation}
\end{lemma}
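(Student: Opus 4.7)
The plan is to reduce to Lemma~\ref{lem:fun2con} by showing that the constant tolls $\hat\tau = \tau^*(\hat\bby)$ derived from the noisy congestion are uniformly close to the constant tolls $\tau' = \tau^*(\bby\bl)$ derived from the exact congestion, so that a player who looks very unsatisfied under $\hat\tau$ must also have been (somewhat less) unsatisfied under $\tau'$.

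First, I would invoke the tail bound for the Laplace mechanism (Lemma~\ref{lem:lap_utility}) applied to the output of $\pcon(\bbx\bl,\eps)$: with probability at least $1-\beta$, the noise vector $Z = \hat\bby - \bby\bl$ satisfies $\|Z\|_\infty \le (2m/\eps)\log(m/\beta)$ (the clipping step in $\pcon$ only decreases the gap, so it does not harm this bound). Condition on this event for the remainder of the proof.

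Next, recall from the computation inside the proof of Lemma~\ref{lem:satisfied_c_prime} that the marginal-cost toll function $\tau^*_e(y) = (y-1)(\ell_e(y)-\ell_e(y-1))$ is $\gamma$-Lipschitz in $y$, since $\ell_e$ itself is $\gamma$-Lipschitz. Therefore $|\hat\tau_e - \tau'_e| \le \gamma\,\|Z\|_\infty$ for every edge $e$. Since any feasible route for a player uses at most $m$ edges, the total cost of any fixed $(s_i^1,s_i^2)$-path under $\Gamma^{\hat\tau}$ differs from its cost under $\Gamma^{\tau'}$ (both evaluated at the same congestion $\bby\bl$) by at most $m\gamma\,\|Z\|_\infty \le 2\gamma m^2\log(m/\beta)/\eps$.

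Consequently, player $i$'s best-response gap w.r.t.\ $\bby\bl$ changes by at most twice this quantity, i.e.\ by at most $4\gamma m^2\log(m/\beta)/\eps$, when we swap the toll vector $\tau'$ for $\hat\tau$ (one factor accounting for the change in the cost of $i$'s current route $\bbx_i\bl$, the other for the change in the cost of any proposed deviation). Thus, if a player is $\zeta_\eps(\rho)$-unsatisfied in $\Gamma^{\hat\tau}$ with respect to $\bby\bl$, then she must be $\zeta_2(\rho)$-unsatisfied in $\Gamma^{\tau'}$ with respect to $\bby\bl$, by the definition of $\zeta_\eps(\rho)$ in \eqref{eq:zeta_hat_funct}. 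Lemma~\ref{lem:fun2con} then bounds the count of such players by $n\alpha/\rho$, completing the proof.

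The only nontrivial step is the toll-perturbation analysis, i.e.\ turning the $L_\infty$ guarantee on the Laplace noise into a uniform additive bound on each player's path cost under the swap from $\tau'$ to $\hat\tau$; everything else is bookkeeping combined with the existing lemma.
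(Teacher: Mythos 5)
Your proposal is correct and follows the same route as the paper: apply the Laplace tail bound to the output of $\pcon$, use the $\gamma$-Lipschitzness of $\tau^*_e$ to convert the $\|Z\|_\infty$ bound into a uniform toll perturbation, pay $m\nu_\eps$ on each fixed path and hence $2m\nu_\eps$ in best-response gap, and reduce to Lemma~\ref{lem:fun2con}. One minor bookkeeping note: you (correctly) end with the bound $n\alpha/\rho$, which is what the paper's own proof produces and what the chained Lemmas~\ref{lem:satisfied_c_prime} and~\ref{lem:fun2con} yield; the factor of $n$ is simply missing from the statement of Lemma~\ref{lem:unsatisfied} as printed, a typo that the paper's downstream instantiation with $\rho = 2\sqrt{m\gamma n\alpha}$ confirms.
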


\begin{proof}  
  From standard bounds on the tails of the Laplace distribution (Lemma
  \ref{lem:lap_utility}), we have the following except with
  probability $\beta$:
  \[
  \max_e \left|\hat y_e - \sum_i x_{i, e}\bl \right| \leq \frac{2m}{\eps}\cdot \log\left(\frac{m}{\beta}\right)
  \]
We now condition on this level of accuracy. Since the toll function
$\tau^*(\cdot)$ is $\gamma$-Lipschitz and $\tau' = \tau^*(\bby\bl)$,
we have
  \[
  \max_e \left|\hat \tau_e - \tau'_e \right| \leq \frac{2m\gamma}{\eps}\cdot \log\left(\frac{m}{\beta}\right) \equiv \nu_\eps
  \]

Therefore a player's cost for taking the same route may increase by as
much as $m\nu_\eps$.  Further, the cost for an alternative route may
decrease by at most the same amount. Thus, each of
$\zeta_2(\rho)$-satisfied players under the flow $\bbx\bl$ in
$\Gamma^{\tau^*}$ remain $(\zeta_2(\rho)+2m\nu_\eps)$-satisfied in
game $\Gamma^{\hat \tau}$.  By Lemma~\ref{lem:fun2con}, we know that
the number of $\zeta_2(\rho)$-unsatisfied players in $\Gamma^{\tau'}$
is bounded by $n \alpha/\rho$, so the number of
$(\zeta_2(\rho)+2m\nu_\eps)$-unsatisfied players in
$\Gamma^{\hat\tau}$ is bounded by $n \alpha/\rho$ as well.
\end{proof}

We now consider what happens when instead of allowing players to best
respond given the exact congestion $\bby = \sum_{i =1}^n \bbx_{i}$, we
instead let them best respond given a private and perturbed version of
the congestion. The following general lemma will be useful, which
relates to unsatisfied players in two different congestions that are
close.  \snote{decide to prove more general lemma for other use as
  well}

\begin{lemma}
  Let $\Gamma$ be a routing game, and $\bbx$ be a flow in
  $\Gamma$. Let $\bby$ and $\bby'$ such that $\|\bby - \bby'\|_\infty
  \leq b$. Then for any number $\zeta > 0$, the set of
  $\zeta$-satisfied players in $\bbx$ with respect to $\bby$ are also
  $\zeta'$-satisfied with respect to $\bby'$, where
  \[
  \zeta' = \zeta + 2 m\gamma b.
  \]
\label{lem:transfer}
\end{lemma}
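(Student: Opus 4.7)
The plan is direct: unpack the definition of $\zeta$-satisfied, and then bound how much each of the two quantities in the inequality (cost on the current route versus cost on a deviation) can change when we swap congestion $\bby$ for $\bby'$. The only tool needed is the $\gamma$-Lipschitz property of the latency functions from Assumption~\ref{assumption}, applied edge-by-edge.

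Specifically, fix a player $i$ with route $\bbx_i$ who is $\zeta$-satisfied with respect to $\bby$, and let $\bbx_i' \in \cF(s_i)$ be any alternative route. Define $\tilde\bby = \bby - \bbx_i + \bbx_i'$ and $\tilde\bby' = \bby' - \bbx_i + \bbx_i'$ to be the congestions induced by the deviation under the two reference congestions; note that $\|\tilde\bby - \tilde\bby'\|_\infty = \|\bby - \bby'\|_\infty \leq b$. By $\gamma$-Lipschitzness of each $\ell_e$ and the fact that $x_{i,e}, x_{i,e}' \in \{0,1\}$ with at most $m$ nonzero entries each, we get
\[
|c_{\bbx_i}(\bby) - c_{\bbx_i}(\bby')| \leq \sum_{e\in E} x_{i,e} \cdot \gamma |y_e - y'_e| \leq m\gamma b,
\]
and analogously $|c_{\bbx_i'}(\tilde\bby) - c_{\bbx_i'}(\tilde\bby')| \leq m\gamma b$.

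Now chain the three inequalities: $c_{\bbx_i'}(\tilde\bby') \geq c_{\bbx_i'}(\tilde\bby) - m\gamma b \geq (c_{\bbx_i}(\bby) - \zeta) - m\gamma b \geq c_{\bbx_i}(\bby') - \zeta - 2m\gamma b$, where the middle step uses that $i$ is $\zeta$-satisfied with respect to $\bby$ (applied to the deviation $\bbx_i'$). Since $\bbx_i'$ was arbitrary, this exactly shows that $i$ is $(\zeta + 2m\gamma b)$-satisfied with respect to $\bby'$.

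There is no real obstacle here; the only thing to be careful about is that the deviation perturbs the congestion \emph{both} through the $\bby \mapsto \bby'$ change and through the route swap $\bbx_i \mapsto \bbx_i'$, so one has to check that the latter cancels out when comparing $\tilde\bby$ and $\tilde\bby'$ (which it does, since the same swap is applied on both sides). The factor of $2$ in $2m\gamma b$ comes precisely from bounding the two cost quantities separately, one for the current route and one for the alternative.
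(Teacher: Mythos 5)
Your proof is correct and takes essentially the same approach as the paper: the paper's one-line proof defers to the argument in Lemma~\ref{lem:unsatisfied}, which is precisely the observation that when every edge congestion changes by at most $b$, the $\gamma$-Lipschitz latencies change each path cost by at most $m\gamma b$, and bounding the cost of the current route and the cost of the deviation route separately yields the factor of $2$. Your write-up just makes explicit the cancellation $\tilde\bby - \tilde\bby' = \bby - \bby'$ that the paper leaves implicit.
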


\begin{proof}
 The proof follows from the same analysis in the proof of Lemma
 \ref{lem:unsatisfied}
\end{proof}

 From the analysis of Lemma~\ref{lem:unsatisfied}, we know that
 $\|\hat\bby - \bby\bl\|_\infty \leq \frac{2m}{\eps}\cdot
 \log\left(\frac{m}{\beta}\right)$, so by instantiating
 Lemma~\ref{lem:unsatisfied} with $\rho = 2\sqrt{m\gamma n\alpha}$ and
 combining with the result of Lemma~\ref{lem:transfer}, we recover the
 bound in Lemma~\ref{lem:unsatisfied-final}, that is the number of
 $\hat\zeta_\eps$-unsatisfied players w.r.t. congestion $\hat \bby$ in
 $\bbx\bl$ and game $\Gamma^{\hat\tau}$ is bounded by $\sqrt{n\alpha
   / 4m\gamma}$ with
\[
\hat \zeta_\eps = 4\sqrt{m\gamma n \alpha} + 8\gamma m^2\log(m/\beta)/\eps.
\]

\ifnum\final=1

\fi
\fi
\end{document}